\def\Anonymity{0}%
\def\PrintMode{0}%
\newtheorem{theorem}{Theorem}[section]
\newtheorem{lemma}[theorem]{Lemma}
\newtheorem{corollary}[theorem]{Corollary}
\newtheorem{claim}[theorem]{Claim}
\newtheorem{fact}[theorem]{Fact}
\theoremstyle{definition}
\newtheorem{definition}[theorem]{Definition}
\theoremstyle{remark}
\algnewcommand{\IfThen}[2]%
{\State \algorithmicif\ #1\ \algorithmicthen\ #2}
\def\moverlay{\mathpalette\mov@rlay}
\def\mov@rlay#1#2{\leavevmode\vtop{%
		\baselineskip\z@skip \lineskiplimit-\maxdimen
		\ialign{\hfil$\m@th#1##$\hfil\cr#2\crcr}}}
\newcommand{\charfusion}[3][\mathord]{
	#1{\ifx#1\mathop\vphantom{#2}\fi
		\mathpalette\mov@rlay{#2\cr#3}
	}
	\ifx#1\mathop\expandafter\displaylimits\fi}
\DeclareMathOperator{\GF}{GF}
\renewcommand{\poly}{\mathrm{poly}}
\renewcommand{\polylog}{\mathrm{polylog}}
\renewcommand{\emptyset}{\varnothing}
\newcommand{\eps}{\varepsilon}
\newcommand{\Base}{\mathsf{Base}}
\newcommand{\DSR}{\mathsf{DSR}}
\newcommand{\WT}{\widetilde}
\newcommand{\HSG}{\mathsf{H}}
\def\BF{\mathsf{BF}}
\newcommand{\tl}{t^{(\ell)}}
\newcommand{\nl}{n^{(\ell)}}
\newcommand{\Hl}{\HSG^{(\ell)}}
\newcommand{\TBF}{T'}
\newcommand{\dBF}{d'}
\newcommand{\descTM}[1]{\left|#1\right|}
\newcommand{\TurM}{\mathsf{TM}}
\newcommand{\posN}{\N_{\ge 1}}
\newlang{\MCSP}{MCSP}
\newlang{\MFSP}{MFSP}
\newlang{\MKtP}{MKtP}
\newlang{\MKTP}{MKTP}
\newlang{\itrMCSP}{itrMCSP}
\newlang{\itrMKTP}{itrMKTP}
\newlang{\itrMINKT}{itrMINKT}
\newlang{\MINKT}{MINKT}
\newlang{\MINK}{MINK}
\newlang{\MINcKT}{MINcKT}
\newlang{\CMD}{CMD}
\newlang{\DCMD}{DCMD}
\newlang{\CGL}{CGL}
\newlang{\PARITY}{PARITY}
\renewlang{\Gap}{Gap}
\newlang{\Empty}{\textsc{Empty}}
\newlang{\Avoid}{\textsc{Avoid}}
\newlang{\Sparsification}{\textsc{Sparsification}}
\newlang{\HamEst}{\mathsf{HammingEst}}
\newlang{\HamHit}{\mathsf{HammingHit}}
\newlang{\CktEval}{\textsc{Circuit-Eval}}
\newlang{\Hard}{\textsc{Hard}}
\newlang{\cHard}{\textsc{cHard}}
\newlang{\CAPP}{CAPP}
\newlang{\GapUNSAT}{GapUNSAT}
\newlang{\OV}{OV}
\newlang{\PRIMES}{PRIMES}
\renewlang{\PCP}{PCP}
\newlang{\PCPP}{PCPP}
\newclass{\FMA}{FMA}
\newclass{\Avg}{Avg}
\newclass{\ZPEXP}{ZPEXP}
\newclass{\DLOGTIME}{DLOGTIME}
\newclass{\ALOGTIME}{ALOGTIME}
\newclass{\ATIME}{ATIME}%
\newclass{\SZKA}{SZKA}
\newclass{\Laconic}{Laconic\text{-}}
\newclass{\APEPP}{APEPP}
\newclass{\SAPEPP}{SAPEPP}
\newclass{\TFSigma}{TF\Sigma}
\newclass{\NTIMEGUESS}{NTIMEGUESS}
\newlang{\Formula}{Formula}
\newlang{\THR}{THR}
\newlang{\MAJ}{MAJ}
\newlang{\DOR}{DOR}
\newlang{\ETHR}{ETHR}
\newlang{\Midbit}{Midbit}
\newlang{\LCS}{LCS}
\newlang{\TAUT}{TAUT}
\newcommand{\NAND}{\mathsf{NAND}}
\newcommand{\ie}{\textit{i}.\textit{e}.\@\xspace}
\newcommand{\eg}{\textit{e}.\textit{g}.\@\xspace}
\newcommand{\fx}{\mathbf{x}}
\newcommand{\fy}{\mathbf{y}}
\newcommand{\algoCT}{\mathbb{A}^{\sf ct}}
\newcommand{\set}[1]{\ensuremath{\left\{#1\right\}}}
\newcommand{\ra}{\rightarrow}
\newcommand{\bitset}{\ensuremath{\{0,1\}}}
\newcommand{\bs}[1]{\ensuremath{\bitset^{#1}}}
\newcommand{\bfunc}[3]{\ensuremath{#1\colon\bs{#2}\rightarrow\bs{#3}}}
\renewcommand{\func}[3]{\ensuremath{#1\colon#2\ra#3}}
\newcommand{\calC}{\mathcal{C}}
\newcommand{\N}{\mathbb{N}}
\newcommand{\F}{\mathbb{F}}
\newcommand{\Eval}{\mathsf{Eval}}%
\newcommand{\NW}{\mathsf{NW}}%
\newcommand{\TM}{\mathsf{TM}}
\newcommand{\al}{\ensuremath{\alpha}}
\newcommand{\de}{\ensuremath{\delta}}
\renewcommand{\tt}{\mathtt{tt}}
\newcommand{\mathdot}{\ensuremath{\;\text{.}}}
\newcommand{\mathcomma}{\ensuremath{\;\text{,}}}
\definecolor{color1}{RGB}{46,134,193}
\definecolor{color2}{RGB}{20,60,100}
\definecolor{color3}{RGB}{0,0,255}
\definecolor{color4}{RGB}{255,0,0}
\definecolor{color5}{RGB}{0,255,0}
\newcommand{\eqdef}{\triangleq}
\begin{document}

\title{Polynomial-Time Pseudodeterministic Construction of Primes\vspace{0.3cm}}
\ifnum\Anonymity=0
\author{
	Lijie Chen\\ \small{UC Berkeley} \\ \small{\texttt{\href{mailto:lijiechen@berkeley.edu}{lijiechen@berkeley.edu}}}
 \and
	Zhenjian Lu\\ \small{University of Oxford} \\ \small{\texttt{\href{mailto:zhenjian.lu@cs.ox.ac.uk}{zhenjian.lu@cs.ox.ac.uk}}}
	\and
	Igor C. Oliveira\\ \small{University of Warwick} \\ \small{\texttt{\href{mailto:igor.oliveira@warwick.ac.uk}{igor.oliveira@warwick.ac.uk}}\vspace{0.4cm}}
 \and
 Hanlin Ren\\ \small{University of Oxford} \\ \small{\texttt{\href{mailto:hanlin.ren@cs.ox.ac.uk}{hanlin.ren@cs.ox.ac.uk}}}
 \and
	Rahul Santhanam\\ \small{University of Oxford} \\ \small{\texttt{\href{mailto:rahul.santhanam@cs.ox.ac.uk}{rahul.santhanam@cs.ox.ac.uk}}\vspace{0.3cm}}
 }
\fi

\maketitle

\vspace{-0.75cm}

\begin{abstract}
A randomized algorithm for a search problem is \emph{pseudodeterministic} if it produces a fixed canonical solution to the search problem with high probability. In their seminal work on the topic, Gat and Goldwasser \cite{DBLP:journals/eccc/GatG11} posed as their main open problem whether prime numbers can be pseudodeterministically constructed in polynomial time. 
 
We provide a positive solution to this question in the infinitely-often regime. In more detail, we give an \emph{unconditional} polynomial-time randomized algorithm $B$ such that, for infinitely many values of $n$, $B(1^n)$ outputs a canonical $n$-bit prime $p_n$ with high probability. More generally, we prove that for every dense property $Q$ of strings that can be decided in polynomial time, there is an infinitely-often pseudodeterministic polynomial-time construction of strings satisfying $Q$. This improves upon a subexponential-time construction of Oliveira and Santhanam \cite{DBLP:conf/stoc/OliveiraS17}.  

Our construction uses several new ideas, including a novel bootstrapping technique for pseudodeterministic constructions, and a quantitative optimization of the uniform hardness-randomness framework of Chen and Tell \cite{ChenT21b}, using a variant of the Shaltiel--Umans generator \cite{ShaltielU05}.

\end{abstract}

\newpage

\setcounter{tocdepth}{3}
\tableofcontents

\newpage

\section{Introduction}\label{sec:intro}

How hard is it to construct an $n$-bit prime\footnote{Recall that a positive integer $q$ is an $n$-bit prime if $q$ is a prime number and $2^{n-1} \leq q \leq 2^n - 1$.}? This is a fundamental problem in number theory and in complexity theory. Under reasonable assumptions, the problem is solvable in deterministic polynomial time. In more detail, Cram{\'e}r's conjecture \cite{cramer1936order} in number theory asserts that the largest prime gap in any consecutive sequence of $n$-bit numbers is $O(n^2)$. Assuming this conjecture, we can solve the prime construction problem efficiently by testing the first $O(n^2)$ integers greater than $2^{n-1}$ for primality and outputting the first one, where the primality tests are done efficiently using the algorithm of Agrawal, Kayal and Saxena \cite{Agrawal02primesis}. An independent source of evidence for the efficiency of prime construction is the complexity-theoretic conjecture that $\DTIME(2^{O(n)})$ requires Boolean circuits of exponential size on almost all input lengths. Under this conjecture, we can use the Impagliazzo--Wigderson pseudorandom generator \cite{ImpagliazzoW97} to {\it derandomize} the simple randomized algorithm that outputs a random $n$-bit number, using the facts that primality testing is in polynomial time and that an $\Omega(1/n)$ fraction of $n$-bit numbers are prime.

However, we seem very far from either settling Cram{\'e}r's conjecture or proving strong complexity lower bounds. The best upper bound we can prove on the gap between consecutive $n$-bit primes is $2^{(0.525 + o(1))n}$ \cite{MR1851081}, and no super-linear circuit lower bounds are known for $\DTIME(2^{O(n)})$ \cite{LiY22}. Indeed, the best unconditional result we have so far is that deterministic prime construction can be done in time $2^{(0.5 + o(1))n}$ \cite{DBLP:journals/jal/LagariasO87}, which is very far from the polynomial-time bound we seek. The Polymath 4 project (see \cite{Tao-Croot-Helfgott12}) sought to improve this upper bound using number-theoretic techniques but did not achieve an unconditional improvement.

In contrast to the situation with deterministic prime construction, it is easy to generate an $n$-bit prime {\it randomly}, as mentioned above: simply generate a random $n$-bit number, test it for primality in polynomial time, and output it if it is a prime. This algorithm has success probability $\Omega(1/n)$ by the Prime Number Theorem, and the success probability can be amplified to be exponentially close to $1$ by repeating the process $\poly(n)$ times independently, and outputting the first of these $\poly(n)$ numbers that is verified to be prime, assuming that there is at least one.

Gat and Goldwasser \cite{DBLP:journals/eccc/GatG11} asked whether it is possible to generate primes efficiently by a randomized process, such that the output is essentially {\it independent} of the randomness of the algorithm. In other words, is there a polynomial-time randomized algorithm, which on input $1^n$, constructs a {\it canonical} prime of length $n$ with high probability?
They call such an algorithm a {\it pseudodeterministic} algorithm, since the output of the algorithm is (almost) deterministic even though the algorithm might use random bits in its operation. Note that the randomized algorithm for prime generation we described in the previous paragraph is very far from being pseudodeterministic, as different runs of the algorithm are unlikely to produce the same prime. It is easy to see that a pseudodeterministic construction serves as an intermediate notion between a randomized construction (which is trivial for primes) and a deterministic construction  (where little progress has been made so far).

\cite{DBLP:journals/eccc/GatG11} initiate a general theory of pseudodeterminism for search problems, motivated by applications in cryptography and distributed computing. Since then, there have been a number of papers on pseudodeterminism, in various contexts, such as query complexity \cite{DBLP:conf/innovations/GoldreichGR13, DBLP:conf/coco/GoldwasserIPS21, CDM23}, streaming algorithms \cite{DBLP:conf/innovations/GoldwasserGMW20, BKKS23}, parallel computation \cite{DBLP:conf/icalp/GoldwasserG17, DBLP:conf/approx/GhoshG21}, learning algorithms \cite{DBLP:conf/approx/OliveiraS18}, Kolmogorov complexity \cite{DBLP:conf/icalp/Oliveira19, DBLP:conf/stoc/LuOS21}, space-bounded computation \cite{DBLP:conf/soda/GrossmanL19}, proof systems \cite{DBLP:conf/innovations/GoldwasserGH18, DBLP:journals/eccc/GoemansGH19}, number theory and computational algebra \cite{DBLP:journals/eccc/Grossman15, DBLP:conf/stoc/OliveiraS17}, approximation algorithms \cite{DBLP:conf/mfcs/DixonPV18}, and many other settings (see, \eg,~\cite{DBLP:conf/scn/BergerB18, DBLP:journals/eccc/Goldreich19, paperITCS21, DBLP:conf/stoc/0002PWV22, DBLP:journals/eccc/Woude00RV22,   cryptoeprint:2023/343}). 

Despite all this progress, the main problem about pseudodeterminism posed in \cite{DBLP:journals/eccc/GatG11} has remained open: Is there a pseudodeterministic polynomial-time algorithm for prime construction? They describe this problem as ``the most intriguing'' and ``perhaps the most compelling challenge for finding a unique output''. 
 
Unlike in the case of deterministic construction, number-theoretic techniques have so far not proven useful for the pseudodeterministic construction problem for primes. Using complexity-theoretic techniques, Oliveira and Santhanam \cite{DBLP:conf/stoc/OliveiraS17} (see also \cite{DBLP:conf/stoc/LuOS21}) showed that for any $\varepsilon > 0$, there is an algorithm that runs in time $2^{n^{\varepsilon}}$ and succeeds on infinitely many input lengths.

\subsection{Our Results}
In this paper, we design a significantly faster algorithm and provide an affirmative answer to the question posed by Gat and Goldwasser in the infinitely-often regime. Our main result can be stated in full generality as follows.

\begin{restatable}[Infinitely-Often Polynomial-Time Pseudodeterministic Constructions]{theorem}{ThmMain}\label{theo:intro_main}
	Let $Q \subseteq \bs{*}$ be a language with the following properties: 
	\begin{itemize}
		\item[{\bf (Density.)}] there is a constant $\rho \ge 1$ such that for every $n \in \posN$, $Q_n \eqdef Q \cap \bs{n}$ satisfies $|Q_n| \ge n^{-\rho}\cdot 2^n$; and
		\item[{\bf (Easiness.)}] there is a deterministic polynomial-time algorithm $A_{Q}$ that decides whether an input $x \in \bs{*}$ belongs to $Q$.
	\end{itemize}	 
	Then there exist a probabilistic polynomial-time algorithm $B$ and a sequence $\{x_n\}_{n \in \posN}$ of $n$-bit strings in $Q$ such that the following conditions hold:
	\begin{enumerate}
		\item On every input length $n \in \posN$, $\Pr_B[B(1^n) \notin \{x_n, \bot\}] \leq 2^{-n}$.
		\item On infinitely many input lengths $n \in \posN$, $\Pr_B[B(1^n) = x_n] \geq 1 - 2^{-n}$.
	\end{enumerate}
\end{restatable}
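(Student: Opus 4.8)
The plan is to reduce the theorem to an infinitely-often, polynomial-time \emph{derandomization} task and then discharge that task by pairing a hardness-versus-randomness win-win with a self-improvement (``bootstrapping'') step. First I would reduce the construction problem to building a hard string: it suffices to produce, pseudodeterministically and in $\poly(n)$ time on infinitely many $n$, a truth table $T_m \in \bs{m}$ with $m = \poly(n)$ of circuit complexity at least $n^{c_Q+1}$, where $c_Q$ bounds the size of the circuit deciding $Q_n$. Indeed, a Nisan--Wigderson/Umans-style transformation turns such a $T_m$ into a $\poly(n)$-time hitting set for size-$n^{c_Q}$ circuits; outputting the lexicographically first element of that hitting set that lies in $Q$ is a \emph{deterministic} post-processing, and composing a pseudodeterministic construction with a deterministic post-processing is still pseudodeterministic, while density and easiness of $Q$ guarantee the hitting set meets $Q_n$. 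So the whole problem becomes: pseudodeterministically, and infinitely often, build a canonical string of high circuit complexity in time polynomial in its length --- a \emph{dense} property, albeit one that is not obviously polynomial-time decidable.

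Second, I would set up the win-win using the uniform hardness-versus-randomness framework of Chen--Tell \cite{ChenT21b}, instantiated with a variant of the Shaltiel--Umans generator \cite{ShaltielU05} in place of the generator used in the original argument; the purpose of this substitution is to arrange that derandomizing a time-$T$ computation follows already from the hardness of a function on only $\Theta(\log T)$ input bits against uniform algorithms with a sub-polynomial amount of advice, rather than on $T^{\Omega(1)}$ bits --- precisely the quantitative gain needed for a polynomial-time (rather than subexponential-time) conclusion. Fixing a suitable language $\Lhard$, one branches: in the \emph{hard case}, $\Lhard$ restricted to length $\ell$ meets this uniform-hardness bar for infinitely many $\ell$, so on the polynomially related lengths $n$ the sharpened theorem derandomizes the trivial randomized $Q$-search and, via the Step-1 reduction, yields the construction on those $n$, establishing conclusion~(2); in the \emph{easy case}, $\Lhard$ is computable by fast uniform (probabilistic, low-advice) algorithms on all large lengths, and the construction must still be produced.

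The easy case is the heart of the matter and, I expect, the main obstacle. The plan there is to bootstrap off the subexponential-time pseudodeterministic construction of Oliveira--Santhanam \cite{DBLP:conf/stoc/OliveiraS17}, which already yields, infinitely often, a canonical hard string at scale $m$ in time $2^{m^{\eps}}$: the easiness hypothesis should allow one to \emph{speed this up} --- effectively running the construction at a much smaller scale while the easiness collapses the exponential search/verification that made it slow, and then lifting the resulting object back to scale $n$ through the same sharpened uniform hardness-versus-randomness engine --- so that a bounded number of speed-up rounds drives the running time from $2^{m^{\eps}}$ down to $\poly(m)$. The delicate points I would expect to wrestle with are: keeping the output \emph{canonical} across all iterations (each round must key off a well-defined canonical object from the previous one); carrying the advice strings and error probabilities so that the final algorithm $B$ errs with probability at most $2^{-n}$ on \emph{every} length (conclusion~(1)) while succeeding with probability at least $1 - 2^{-n}$ on infinitely many lengths; and making the ``infinitely often'' of the base case mesh with the ``almost everywhere'' of the easiness branch so that, whichever way the win-win falls, infinitely many lengths are covered. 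Standard error reduction (repeat, take the lexicographically first verified $Q$-string, output $\bot$ if none) then cleans up the probabilities, and specializing to $Q = \PRIMES$ --- dense by the Prime Number Theorem, decidable in polynomial time by \cite{Agrawal02primesis} --- gives the pseudodeterministic construction of primes.
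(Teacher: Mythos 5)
Your proposal correctly identifies the two technical ingredients the paper leans on --- the Chen--Tell targeted hitting set framework and a Shaltiel--Umans-style generator to make it quantitatively tight --- but the way you organize the win-win misses the structural idea that actually drives the result, and your handling of the ``easy case'' has a real gap. You propose to fix a single language $\Lhard$ and branch on whether it is hard (infinitely often) or easy (almost everywhere). This is exactly the Trevisan--Vadhan / Oliveira--Santhanam style of dichotomy, and the paper explicitly explains why it does not recurse: if the language is hard, you obtain a subexponential-size hitting set, but you then have no handle with which to reapply a hardness-randomness tradeoff to \emph{the algorithm that enumerates that hitting set}. The crucial move in the paper is that the Chen--Tell framework applies to \emph{any} multi-output function computable by a low-depth logspace-uniform circuit family, so one applies it directly to the brute-force search $\BF_i(1^{n_i})$ = ``enumerate $\HSG_i$ and output the first element in $Q_{n_i}$.'' This yields a \emph{chain} of hitting sets $\HSG_0, \HSG_1, \dots, \HSG_t$ over input lengths $n_i = n_0^{\beta^i}$, and the win-win is internal to the chain: either $Q_{n_{i+1}}$ avoids $\HSG_{i+1}$, in which case the reconstruction computes $\BF_i(1^{n_i})$ in $\poly(n_i)$ time (done, pseudodeterministically, at length $n_i$), or $\HSG_{i+1}$ hits $Q_{n_{i+1}}$, in which case we have sped up the deterministic construction at a larger scale and recurse. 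After $O(\log n_0)$ rounds the hitting set can be enumerated outright in $\poly(n_t)$ time. There is no fixed $\Lhard$ anywhere, and no case in which one must assume a language is ``easy'' and then ``speed up'' the Oliveira--Santhanam construction.

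Your ``easy case'' plan --- ``easiness of $\Lhard$ should allow one to speed up [the OS construction] \dots so that a bounded number of speed-up rounds drives the running time from $2^{m^\eps}$ down to $\poly(m)$'' --- is where the proposal stalls. You do not say what $\Lhard$ is, what its easiness actually buys you as an algorithmic primitive, or how each speed-up round is keyed to a well-defined canonical object; this is the very obstacle the paper's chain construction is designed to avoid. Two smaller points: (i) your initial reduction to pseudodeterministically building a hard truth table is unnecessary and actively unhelpful, because ``high circuit complexity'' is a dense property that is \emph{not} polynomial-time decidable, so you have discarded the easiness hypothesis the theorem gives you; the paper instead uses $Q$ itself directly as the distinguisher against each candidate generator. (ii) You invoke ``hardness \dots against uniform algorithms with a sub-polynomial amount of advice,'' but the paper's reconstruction must be entirely uniform --- it is an explicit oracle algorithm $\ReconCT$ whose soundness and completeness hold for every oracle $D$ --- and making it advice-dependent would break the recursion, since each $\HSG_{i+1}$ is built from the explicit Turing machine describing $\BF_i$, whose description length must stay tightly controlled ($O(\log T_i)$ bits) for the parameters to close.
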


Interestingly, our construction is ``non-black-box'', in the sense that changing the \emph{code} of the algorithm $A_Q$ deciding property $Q$ affects the canonical output of the corresponding algorithm $B$. We will revisit this point when we discuss our techniques (see the remark at the end of \autoref{sec:improved_CT_exposition}).

Letting $Q$ be the set of prime numbers and noticing that $Q$ is both dense (by the Prime Number Theorem) and easy (by the AKS primality test \cite{Agrawal02primesis}), we immediately obtain the following corollary of \autoref{theo:intro_main}.

\begin{corollary}[Infinitely-Often Polynomial-Time Pseudodeterministic Construction of Primes]\label{cor:intro_primes}
There is a randomized polynomial-time algorithm $B$ such that, for infinitely many values of $n$, $B(1^n)$ outputs a canonical $n$-bit prime $p_n$ with high probability.
\end{corollary}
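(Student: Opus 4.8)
\autoref{cor:intro_primes} is immediate from \autoref{theo:intro_main}: the set $Q$ of primes has density $\Theta(1/n)$ among $n$-bit strings by the Prime Number Theorem and is decidable in deterministic polynomial time by the AKS primality test \cite{Agrawal02primesis}, so it meets both hypotheses (\eg with $\rho = 2$). Hence the substance is \autoref{theo:intro_main}, which I would attack via a win-win that combines a polynomially-efficient form of the uniform hardness-versus-randomness framework with a bootstrapping step seeded by the known subexponential-time construction.

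The plan starts from the construction of Oliveira--Santhanam \cite{DBLP:conf/stoc/OliveiraS17}: for every $\varepsilon>0$, every dense $\P$-property has an infinitely-often pseudodeterministic construction in time $2^{n^{\varepsilon}}$. I would use this as a ``base'' object inside a hardness-randomness argument, bootstrapping it down to polynomial time while retaining only the infinitely-often guarantee --- which is exactly the form of \autoref{theo:intro_main}. At the top level the resulting algorithm $B$ is made to satisfy the two quantitative conditions in the standard way: amplify by independent repetition and, crucially, verify every candidate against $A_Q$, outputting $\bot$ when all candidates fail, so that the one-sided bound $\Pr_B[B(1^n)\notin\{x_n,\bot\}]\le 2^{-n}$ holds on \emph{every} length while the success probability reaches $1-2^{-n}$ on the good lengths.

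The engine is a win-win built around a hitting-set generator. One extracts from the base construction a ``hard function'' $f$ (succinctly encoding its canonical object --- an iterated object of the kind behind iterated-$\mathrm{MCSP}$/$\mathrm{MKTP}$) and plugs $f$ into a Shaltiel--Umans-type generator \cite{ShaltielU05} $H_f\colon \bs{O(\log n)}\to\bs{n}$. There are two cases. If $H_f$ hits the dense set $Q_n$, then enumerating the $\mathrm{poly}(n)$ seeds $s$, testing each $H_f(s)$ for membership with $A_Q$, and returning the first hit is a \emph{deterministic} polynomial-time construction of a canonical element of $Q$; this is exactly where the non-black-box dependence on the \emph{code} of $A_Q$ enters, since the defining object $f$ --- hence $H_f$ and hence the canonical output --- is built from $A_Q$. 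If instead $H_f$ misses $Q_n$, then every string of the dense set $Q_n$ is a point outside $\Rng(H_f)$, i.e.\ a solution to the range-avoidance instance given by $H_f$; feeding such a point, sampled by the trivial randomized algorithm for $Q$, as a ``distinguisher'' into the uniform, instance-wise reconstruction procedure of Chen--Tell \cite{ChenT21b} --- here quantitatively optimized so that the overhead is polynomial rather than subexponential --- recovers $f$ itself, and hence a canonical element of $Q$, in polynomial time, with the output independent of which valid avoidance point was used. A single algorithm $B$ then runs the reconstruction-based branch (and/or the hitting-set branch) and, combining the infinite good set of \cite{DBLP:conf/stoc/OliveiraS17} with the fact that in each of the two cases one of the two branches is correct, one concludes that $B$ is correct on infinitely many lengths.

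I expect the main difficulty to be twofold. First, the quantitative optimization of the Chen--Tell tradeoff: the original construction incurs a subexponential overhead, and reducing it to a polynomial overhead is precisely what forces a Shaltiel--Umans generator --- with its algebraically structured, tight reconstruction and $O(\log n)$ seed length --- in place of Nisan--Wigderson, and requires maintaining uniformity and the instance-wise guarantee throughout. Second, setting up the bootstrapping so that everything fits: choosing $f$ and the relevant length scales so that (i) $H_f$ is computable in genuine polynomial time at the outer length, (ii) the ``$H_f$ misses $Q_n$'' branch really returns a polynomial-time reconstruction of the canonical base object on an infinite set of lengths, and (iii) the infinite set of good lengths inherited from \cite{DBLP:conf/stoc/OliveiraS17} survives the compilation of the two cases into one algorithm. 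The remaining pieces --- amplification, verification against $A_Q$, and the bookkeeping of seed lengths, advice, and success probabilities --- should be routine.
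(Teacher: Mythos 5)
Your reduction of \autoref{cor:intro_primes} to \autoref{theo:intro_main} is exactly right and matches the paper: the primes are $\Theta(1/n)$-dense among $n$-bit strings and AKS gives polynomial-time decidability, so the hypotheses of \autoref{theo:intro_main} are met.

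However, your sketch of \autoref{theo:intro_main} itself has a genuine gap, and it is precisely the piece the paper's title ideas are built around. You describe a \emph{single} win-win: extract a hard function $f$ from a base object, build a Shaltiel--Umans HSG $H_f$, and argue that either $H_f$ hits $Q_n$ (enumerate seeds) or $H_f$ misses $Q_n$ (run the Chen--Tell reconstruction with a point of $Q_n$ as a distinguisher). The trouble is that these two branches pull against each other. For the ``hits'' branch to yield a polynomial-time construction, $H_f$ must be computable in $\poly(n)$ time, which means $f$ itself must be $\poly(n)$-time computable --- but then $f$ is not hard, and there was nothing to reconstruct. Conversely, if you seed with the Oliveira--Santhanam canonical object, that $f$ costs $2^{n^\varepsilon}$ to compute, so in the ``hits'' branch you only get a $2^{n^\varepsilon}$-time deterministic construction, which is just \cite{DBLP:conf/stoc/OliveiraS17} again. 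A single application of the hardness-versus-randomness win-win, even with the Shaltiel--Umans seed length, cannot convert a subexponential construction into a polynomial one: the generator's computation time is tied to the complexity of the function it is built from.

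The paper's actual mechanism is a \emph{chain} of $t = O(\log n_0)$ recursive applications across geometrically growing input lengths $n_{i+1} = n_i^\beta$, starting from the \emph{trivial} hitting set $\HSG_0 = \{0,1\}^{n_0}$ and the brute-force $2^{O(n_0)}$-time algorithm $\BF_0$, not from the OS construction. At each step, Chen--Tell (with the Shaltiel--Umans replacement for Nisan--Wigderson, which you correctly identify as necessary) is applied to $\BF_i$ to produce $\HSG_{i+1} \subseteq \{0,1\}^{n_{i+1}}$, computable in time $T_{i+1} = T_i^{O(1)}$. The win-win at level $i$ is: if $Q_{n_{i+1}}$ avoids $\HSG_{i+1}$, reconstruction gives a $\poly(n_i)$-time pseudodeterministic construction at length $n_i$ and we stop; if $\HSG_{i+1}$ hits $Q_{n_{i+1}}$, we \emph{recurse} to level $i+1$ with a relatively faster hitting set (since $n_i^\beta$ outpaces $T_i^{O(1)}$). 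After $O(\log n_0)$ levels, $T_t$ is polynomial in $n_t$ and the hitting case also terminates. Your sketch never sets up this recursion, and your ``two cases, one of which must be correct'' argument does not say where the infinitely-often guarantee comes from; in the paper it comes from arguing that in the chain indexed by $\ell$ there must be a level $i$ where $\HSG_i$ hits but $\HSG_{i+1}$ fails to (or $i = t$). A secondary point: starting from OS rather than the trivial hitting set would also wreck the uniformity bookkeeping that the paper tracks carefully (description lengths and space bounds of the Turing machines $\TurM_{\HSG_i}$), since OS itself is a non-explicit win-win; the paper's trivial base is what makes that bookkeeping possible.
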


\autoref{cor:intro_primes} improves upon the subexponential-time infinitely-often pseudodeterministic construction of primes from \cite{DBLP:conf/stoc/OliveiraS17} mentioned above. Note that the result for prime construction is a corollary of a far more general result about properties that are dense and easy. This is evidence of the surprising power of complexity theory when applied to a problem which seems to be about number theory (but where number-theoretic techniques have not so far been effective). The famous efficient primality testing algorithm of \cite{Agrawal02primesis} similarly applied complexity-theoretic derandomization ideas to solve a longstanding open problem in computational number theory, though their argument does require more information about primes. %

For a string $w\in\{0,1\}^*$ and $t\colon\mathbb{N}\to\mathbb{N}$, we let $\mathsf{rK}^t(w)$ denote the length of the smallest randomized program that runs for at most $t(|w|)$ steps and outputs $w$ with probability at least $2/3$. (We refer to \cite{DBLP:journals/eatcs/LuO22} for a formal definition and for an introduction to probabilistic notions of time-bounded Kolmogorov complexity.)  By encoding the (constant-size) randomized polynomial-time algorithm $B$ and each good input length $n$ using $O(1) + \log n$ bits in total, the following result holds. 

\begin{corollary}[Infinitely Many Primes with Efficient 
 Succinct Descriptions]\label{cor:intro_rK} There is a constant $c \geq 1$ such that, for $t(n) = n ^c$, the following holds. For every $m \geq 1$, there is $n > m$ and an $n$-bit prime $p_n$ such that $\mathsf{rK}^{t}(p_n) \leq \log(n) + O(1)$.
\end{corollary}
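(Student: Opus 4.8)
The plan is to obtain this as an essentially immediate consequence of \autoref{cor:intro_primes} (equivalently, of \autoref{theo:intro_main} instantiated with $Q$ the set of $n$-bit primes). Let $B$ be the probabilistic polynomial-time algorithm provided by \autoref{cor:intro_primes}, and fix a constant $c_0$ so that $B(1^n)$ runs in time at most $n^{c_0}$. Let $S \subseteq \posN$ be the infinite set of ``good'' input lengths, \ie those $n$ for which $B(1^n)$ outputs a canonical $n$-bit prime $p_n$ with probability at least $1 - 2^{-n}$; the existence of such an $S$, together with this strong $1-2^{-n}$ success probability, is exactly what \autoref{theo:intro_main} provides (with $x_n = p_n$). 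Given $m \ge 1$, the infinitude of $S$ lets us pick some $n \in S$ with $n > m$, and we will show $\mathsf{rK}^t(p_n) \le \log n + O(1)$ for this $n$.

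The second step is to exhibit a short randomized program for $p_n$. Consider the fixed machine $M$ which, on a binary input $w$, reads $w$ as the binary representation of a positive integer $k$, simulates $B(1^k)$ for $k^{c_0}$ steps, and outputs whatever $B$ outputs. Feeding $M$ the string $w_n$ equal to the binary representation of $n$ (so $|w_n| = \lfloor \log n \rfloor + 1$), we get that $M(w_n)$ equals $p_n$ with probability at least $1 - 2^{-n} \ge 2/3$ and halts within $\poly(n)$ steps. Passing from $M$ to the reference universal machine in the definition of $\mathsf{rK}$ increases the program length by only an additive constant (a self-delimiting description of the fixed machine $M$) and the running time by only a polynomial factor. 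Hence there is a constant $c \ge 1$, depending only on $c_0$ and on the universal simulation overhead, such that for $t(n) = n^c$ we have $\mathsf{rK}^{t}(p_n) \le |w_n| + O(1) \le \log n + O(1)$, as desired.

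I do not expect any genuine obstacle; the statement is a translation of \autoref{cor:intro_primes} into the language of probabilistic time-bounded Kolmogorov complexity. The two points that need a moment of care are: (i) the time bound $t$ must be polynomial in the \emph{output} length $n = |p_n|$ rather than in the program length $|w_n| = \Theta(\log n)$, which is precisely how $\mathsf{rK}^t$ is defined and is what makes the exponential gap between $\log n$ and $n$ irrelevant; and (ii) the $\log n + O(1)$ bound relies on $B$ (hence $M$) being a \emph{fixed}, constant-size algorithm, so that only the integer $n$ itself must be spelled out bit by bit, with everything else absorbed into the invariance constant. The success-probability requirement is met with room to spare, since $B$ succeeds with probability $1 - 2^{-n}$ on the lengths in $S$.
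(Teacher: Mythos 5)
Your proposal is correct and takes essentially the same approach as the paper, which proves the corollary in one line by encoding the constant-size algorithm $B$ together with a good input length $n$ in $O(1) + \log n$ bits. You merely expand that line into an explicit wrapper machine and spell out the $t(|p_n|)=t(n)$ time bound, the $1-2^{-n}\geq 2/3$ success probability, and the universal-machine overhead; no new ideas are needed or used.
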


In other words, there are infinitely many primes that admit very short efficient descriptions. The bound in  \autoref{cor:intro_rK} improves upon  the sub-polynomial bound on $\mathsf{rK}^{\poly}(p_n)$ from \cite{DBLP:conf/stoc/LuOS21}.

In the next subsection, we describe at a high level the ideas in the proof of \autoref{theo:intro_main}, and how they relate to previous work.

\subsection{Proof Ideas}\label{sec:ideas}

The proof of \autoref{theo:intro_main} relies on \emph{uniform hardness-randomness tradeoffs} \cite{ImpagliazzoW01, DBLP:journals/cc/TrevisanV07}. For concreteness, assume that $Q = \{Q_n\}_{n \in \posN}$, with each $Q_n \subseteq \{0,1\}^n$ consisting of the set of $n$-bit prime numbers. Let $A_Q$ be a deterministic polynomial-time algorithm that decides $Q$ (\eg, $A_Q$ is the AKS primality test algorithm \cite{Agrawal02primesis}). Before we present our algorithm and the main ideas underlying our result, it is instructive to discuss the approach of \cite{DBLP:conf/stoc/OliveiraS17}, which provides a subexponential-time pseudodeterministic construction that succeeds on infinitely many input lengths.

\paragraph{Subexponential-time constructions of \cite{DBLP:conf/stoc/OliveiraS17}.} We first recall how uniform hardness-randomness tradeoffs work. Given a presumed hard language $L$, a uniform hardness-randomness tradeoff for $L$ states that either $L$ is easy for probabilistic polynomial-time algorithms, or else we can build a \emph{pseudorandom set} $G_n \subseteq \{0, 1\}^n$ computable in subexponential time (thus also has subexponential size), which fools probabilistic polynomial-time algorithms on inputs of length $n$ (for infinitely many $n$). In particular, Trevisan and Vadhan \cite{DBLP:journals/cc/TrevisanV07} give a uniform hardness-randomness tradeoff for a $\PSPACE$-complete language $L_{\mathsf{TV}}$ they construct, which has certain special properties tailored to uniform hardness-randomness tradeoffs.\footnote{For the pseudorandomness experts, these special properties are \emph{downward self-reducibility} and \emph{random self-reducibility}.}

The subexponential-time construction in \cite{DBLP:conf/stoc/OliveiraS17} uses a \emph{win-win} argument to derive an \emph{unconditional} pseudodeterministic algorithm from the uniform hardness-randomness tradeoff of \cite{DBLP:journals/cc/TrevisanV07}. There are two cases: either $L_{\mathsf{TV}} \in \mathsf{BPP}$, or it is not. If the former is the case, then $\mathsf{PSPACE} \subseteq \mathsf{BPP}$ by the $\mathsf{PSPACE}$-completeness of $L_{\mathsf{TV}}$. Now, since we can in \emph{polynomial space} test all $n$-bit numbers using $A_Q$ until we find the lexicographic first prime number, we can also do it in \emph{randomized polynomial time}, \ie, there is a randomized algorithm $B(1^n)$ that runs in polynomial time and outputs the lexicographically first $n$-bit prime with high probability. Thus, in this case, the lexicographically first $n$-bit prime is the ``canonical'' output of the pseudodeterministic algorithm, and the algorithm works on \emph{every} input length $n$.

Suppose, on the other hand, that $L_{\mathsf{TV}} \not \in \mathsf{BPP}$. Using the uniform hardness-randomness tradeoff of \cite{DBLP:journals/cc/TrevisanV07}, we have that for each $\varepsilon > 0$, there is a pseudorandom set $G = \{G_n\}$, where each $G_n \subseteq \{0,1\}^n$ is of size at most $2^{n^{\varepsilon}}$, such that for infinitely many $n$, $G_n$ fools the algorithm $A_Q$ on inputs of length $n$. Since $A_Q$ accepts an $\Omega(1/n)$ fraction of strings of length $n$ by the Prime Number Theorem, we have that the fraction of strings in $G_n$ that are prime is $\Omega(1/n)$ (by choosing the error parameter of the uniform hardness-randomness tradeoff to be small enough). In particular, there must exist an element of $G_n$ that is prime. Since $G_n$ is computable in subexponential time, we can define a subexponential time \emph{deterministic} algorithm that enumerates elements of $G_n$ and tests each one for primality until it finds and outputs one that is prime. This algorithm is deterministic but it runs in subexponential time, and is only guaranteed to be correct for infinitely many $n$.

Thus, in either case, we have a pseudodeterministic algorithm for constructing primes that runs in subexponential time and works infinitely often. Note that we do not know a priori which of the two cases above holds, and therefore the argument is somewhat non-constructive. By exploiting further properties of the uniform hardness-randomness tradeoff, \cite{DBLP:conf/stoc/OliveiraS17} manage to give an explicit construction algorithm that runs in subexponential time infinitely often.

\paragraph{Win-win arguments.} The above argument gives a subexponential-time construction, but the win-win structure of the argument seems incapable of giving an optimal polynomial-time construction. Indeed, this is the case for many win-win arguments used in complexity theory:\begin{itemize}
    \item A win-win argument based on the Karp--Lipton theorem \cite{KarpL80} gives that $\Sigma_2\EXP$ requires super-polynomial size Boolean circuits \cite{DBLP:journals/iandc/Kannan82}, but seems incapable of giving truly exponential ($2^{\Omega(n)}$) Boolean circuit lower bounds.\footnote{Partially inspired by this work, subsequent works \cite{CHR24, Li24} proved near-maximum circuit lower bounds for the classes $\Sigma_2\E$ and $\S_2\E$.}
    \item A win-win argument based on uniform hardness-randomness tradeoffs gives that either $\mathsf{E} \subseteq \mathsf{BPP}$ or $\mathsf{BPP}$ can be simulated infinitely often in deterministic subexponential time on average \cite{ImpagliazzoW01}, but it remains unknown if such a tradeoff holds at the ``high end'', \ie, whether it is the case that either $\mathsf{E}$ is in probabilistic subexponential-time or else $\mathsf{BPP}$ can be simulated infinitely often in deterministic polynomial time on average.
    \item A win-win argument based on the Easy Witness Lemma gives that if $\mathsf{NEXP} \subseteq \mathsf{SIZE}(\poly)$, then $\mathsf{NEXP} = \mathsf{MA}$ \cite{ImpagliazzoKW02}, but it is unknown if any interesting uniform collapse follows from the simulation of $\mathsf{NEXP}$ by subexponential-size Boolean circuits.
\end{itemize}
In each of these cases, the win-win argument seems to have inherent limitations that prevent us from getting optimal lower bounds or tradeoffs. Indeed, a paper by Miltersen, Vinodchandran and Watanabe \cite{DBLP:conf/cocoon/MiltersenVW99} studies the ``fractional exponential'' lower bounds that seem to be the best provable using win-win arguments in the context of Boolean circuit lower bounds for exponential-time classes.\footnote{For example, a function $f:\N \to \N$ is \emph{sub-half-exponential} if $f(f(n)^c)^c \le O(2^n)$ for every constant $c$. (The exact definition of sub-half-exponential functions may be different in different papers.) Functions such as $n^k$ and $2^{\log^k n}$ are sub-half-exponential, while $2^{\eps n}$ and $2^{n^\eps}$ are not. It is known that $\Sigma_2\EXP$ cannot be computed by $f(n)$-size circuits for every sub-half-exponential $f$, but it remains open to show that $\Sigma_2\EXP$ requires circuit complexity $2^{n^\eps}$ for any constant $\eps > 0$.}

Thus, in order to obtain a polynomial-time pseudodeterministic algorithm for primality, it seems that we need to go beyond win-win arguments. One natural idea is to apply uniform hardness-randomness tradeoffs \emph{recursively}. However, this seems hard to do with the uniform hardness-randomness tradeoff of  \cite{DBLP:journals/cc/TrevisanV07}. Their tradeoff applies only to the special language $L_{\mathsf{TV}}$. If we argue based on the hardness or other properties of $L_{\mathsf{TV}}$, then in the case where $L_{\mathsf{TV}} \in \mathsf{BPP}$, we get a pseudodeterministic polynomial-time algorithm for constructing primes, but in the case where $L_{\mathsf{TV}} \not \in \mathsf{BPP}$, we get a subexponential-time constructible pseudorandom set, and it is unclear how to apply the uniform hardness-randomness tradeoff to the algorithm for constructing this set.

\paragraph{Recursive application of uniform hardness-randomness tradeoffs.} One of our main ideas is to exploit very recent work on uniform hardness-randomness tradeoffs \cite{ChenT21b} which applies to \emph{generic} computations, as long as they satisfy certain mild properties. These tradeoffs yield \emph{hitting sets} rather than pseudorandom sets based on hardness --- a hitting set $\HSG \subseteq \{0,1\}^{M}$ is a set that has non-empty intersection with every $Q_M \subseteq \{0,1\}^M$ that is dense (\ie, accepts at least a $1/\poly(M)$ fraction of strings) and is efficiently computable. It turns out that for our application to pseudodeterministic algorithms, uniform hardness-randomness tradeoffs that yield hitting sets are sufficient.

Specifically, Chen and Tell \cite{ChenT21b} show that for any multi-output function $f \colon \{1^n\} \to \{0,1\}^n$ computed by uniform Boolean circuits of size $T = T(n)$ and depth $d = d(n)$, either there is a hitting set $\HSG \subseteq \{0,1\}^M$ computable in time $\poly(T)$ that fools the property $Q_M$, or $f (1^n)$ can be computed with high probability in time $(d+n) \cdot \poly(M)$ (which could be much less than $T$). Note that this tradeoff is applicable to \emph{any} multi-output function $f$ given bounds on its uniform circuit complexity.

Our key idea is that this more generic uniform hardness-randomness tradeoff can be applied \emph{recursively}. Indeed, we apply it to multi-output functions which capture the very task we are trying to solve, \ie, constructing a prime! In our base case, we use the function $f$ which does a brute-force search over $n$-bit numbers and outputs the lexicographically first one which is prime. This function can be computed by uniform Boolean circuits of size $2^{O(n)}$ and depth $\poly(n)$, and hence we can apply the Chen--Tell tradeoff to it. We set $M = n^{\beta}$ for some large enough constant $\beta > 1$ in the tradeoff. If we have that $f(1^n)$ is computable with high probability in time $(d+n) \cdot \poly(M)$, then we are done, since this gives us a pseudodeterministic algorithm for primes at length $n$. If not, we have that there is a hitting set $\HSG \subseteq \{0,1\}^{n^{\beta}}$ computable in time $2^{O(n)}$. In particular, by iterating over the elements of $\HSG$ and outputting the first one that is prime, we gain over the na\"ive brute-force search algorithm, since we are now outputting a prime of length $n^{\beta}$ in time $2^{O(n)}$. Now \emph{this} new algorithm can be captured by a multi-output function with output length $n^{\beta}$ to which we apply the Chen--Tell tradeoff again. In each recursive step, we either obtain a pseudodeterministic polynomial-time construction of primes, or we obtain a significantly faster deterministic construction of primes (of a larger input length). Intuitively, analyzing this process after $O(\log n)$ steps of recursion, we can hope to show that at least one of the steps leads to a polynomial-time pseudodeterministic algorithm at the input length considered at that step. 

This doesn't quite work as stated because the Chen--Tell tradeoff uses the Nisan--Wigderson generator \cite{NisanW94}, which is not known to have optimal parameters for all levels of hardness; see e.g.,~\cite{DBLP:conf/focs/ImpagliazzoSW99}.\footnote{Informally speaking, given a ``hard truth table'' of length $T$, we want to construct a hitting set $\HSG \subseteq \{0, 1\}^M$ in $\poly(T)$ time; however, the Nisan--Wigderson generator requires $2^{\Theta(\log^2 T / \log M)}$ time to construct.} Our recursive process explores essentially all possible levels of hardness for the uniform hardness-randomness tradeoff, since each recursive step corresponds to a different level of hardness. Using the original Chen--Tell tradeoff gives a \emph{quasi-polynomial-time} pseudodeterministic construction, but in order to get a polynomial-time pseudodeterministic construction, we need to work harder.

Another crucial idea for us is to optimize the Chen--Tell tradeoff by using the Shaltiel--Umans generator \cite{ShaltielU05} rather than the Nisan--Wigderson generator. This idea comes with its own implementation challenges, since the Shaltiel--Umans generator is not known to have a crucial learnability property that is required for the uniform hardness-randomness tradeoff. We sidestep this issue using a further win-win analysis, together with some other tricks; see \autoref{sec: overview of SU} for details. This enables us to achieve an optimal polynomial-time pseudodeterministic construction on infinitely many input lengths, and thereby establish \autoref{theo:intro_main}.\footnote{While we do not explore this direction in the current work, we believe that our improvement on the Chen-Tell tradeoff can be used to improve the tradeoff from~\cite[Theorem~5.2 and Theorem~5.3]{CRT22}, thus getting a better uniform hardness vs randomness connection in the low-end regime.} We note that the subexponential-time construction of  \cite{DBLP:conf/stoc/OliveiraS17} also only works for infinitely many input lengths, and it is still open even to get a subexponential-time construction that works on all input lengths.

The intuitive description here does not address several subtleties that arise in the proof, such as maintaining the right uniformity and depth conditions when recursively applying the uniform hardness-randomness tradeoff. We refer to  \autoref{sec:techniques} for a more detailed discussion of such matters.

\subsection{Technical Overview}\label{sec:techniques}

As explained above, we consider a chain of $t = O(\log n)$ recursively defined (candidate) HSGs $\HSG_0, \HSG_1, \ldots, \HSG_t$ operating over different input lengths. These HSGs are obtained from the recent construction of Chen and Tell \cite{ChenT21b}, which we informally describe next. Recall that we use $Q_M$ to denote the easy and dense property over inputs of length $M$.

\paragraph{The Chen--Tell \cite{ChenT21b} targeted HSG (``ideal version'').} Let $f \colon \{1^n\} \to \{0,1\}^n$ be a family of unary functions computed by (uniform) Boolean circuits of size $T = T(n)$ and depth $d = d(n)$. Then, for some large enough universal constant $c\ge 1$, for every $\log T \leq M \leq T$ there is a set $\HSG \subseteq \{0,1\}^{M}$ computable in
\[
\text{time}~\widetilde{T} \eqdef T^c~~\text{and}~~\text{depth}~\widetilde{d} \eqdef d \cdot \log(T) + M^c
\]
such that, if $Q_M \subseteq \{0,1\}^M$ \emph{avoids} $\HSG$, (\ie, $Q_M$ is dense but $Q_M \cap \HSG = \emptyset$), then we can compute $f(1^n)$ with high probability in time $(d + n) \cdot M^c$.\\

In other words, if $f$ admits \emph{low-depth} circuits, we can construct a candidate HSG $\HSG$ over length-$M$ inputs such that breaking the generator $\HSG$ allows us to compute $f(1^n)$ in time $\poly(n,d,M)$. For $d, M  \ll T$,  this can be much faster than the original time $T$ required to compute $f$.

The statement above differs from the results in \cite{ChenT21b} (stated for unary functions) in two important ways. First, the claimed upper bound on $\widetilde{T}$ (the running time of the HSG) is not obtained by  \cite{ChenT21b} for all choices of $M$. Secondly, we have not formally specified the \emph{uniformity} of the family of circuits computing $f$. While these are crucial points in \cite{ChenT21b} and when proving our result, for simplicity we will assume for now that this upper bound can be achieved and omit the discussion on uniformity.

\paragraph{Bootstrapping the win-win argument.} We now review the idea discussed in \autoref{sec:ideas}, using notations that will be more convenient for the remainder of this technical overview.  Fix an arbitrary $n \in \posN$, and consider the corresponding property $Q_{n} \subseteq \{0,1\}^{n}$ decided by $A_Q(x)$ on inputs of length $n$. Our initial $\HSG_0$ is trivial and set to $\{0,1\}^{n}$. (Intuitively, this corresponds to the first case of the \cite{DBLP:conf/stoc/OliveiraS17} argument sketched above where $L_{\sf TV} \in \BPP$.) Consider now a ``brute-force'' algorithm $\BF(1^{n})$ that computes the first $x \in \HSG_0$ such that $A_Q(x) = 1$. We let $f(1^{n}) \eqdef \BF(1^{n})$ in the Chen--Tell HSG. Note that $f(1^{n})$ can be uniformly computed in time $T = 2^{O(n)}$ and depth $d = \poly(n)$, since $A_Q(x)$ runs in polynomial time and all elements of $\HSG_0$ can be tested in parallel. We set $M(n) \eqdef n^{\beta}$, where $\beta > 1$ is a large enough constant. Let $\HSG_1 \subseteq \{0,1\}^M$ be the candidate HSG provided by Chen--Tell. Note that $\HSG_1$  can be computed in time $\widetilde{T} = 2^{O(n)}$ and depth $\widetilde{d} = \poly(n)$.

Next, we consider a win-win argument based on whether $Q_M$ avoids $\HSG_1$. If this is the case, then Chen--Tell guarantees that we can compute $f(1^n) = \BF(1^n) \in Q_n$ with high probability in time $(d + n) \cdot M^c = \poly(n)$. In other words, we can pseudodeterministically produce a string in $Q_n$ in polynomial time. On the other hand, if $\HSG_1 \cap Q_M \neq \emptyset$, we now have a set $\HSG_1$ of strings of length $M = n^{\beta}$ that contains a string in $Q_M$ and that can be deterministically computed in time $2^{O(n)}$. That is, we are back to the former case, except that we can compute $\HSG_1$ (a set containing at least one $M$-bit prime) in time much faster than $2^{O(M)}$. Crucially, in contrast  to the approach of \cite{DBLP:conf/stoc/OliveiraS17}, the Chen--Tell HSG does not limit us to the use of the special language $L_{\mathsf{TV}}$,  effectively allowing us to reapply the same argument (with a speedup) over a larger input length.\\

In the next subsection, we discuss the ``bootstrapping'' and its parameters in more detail and explain how it gives a polynomial-time pseudodeterministic  construction, assuming we have the ideal version of \cite{ChenT21b} described above.

\subsubsection{Infinitely-Often Pseudodeterministic Polynomial-Time Constructions}
Let $n_0\in\N$ be an ``initial'' input length and $t = O(\log n_0)$ be a parameter. For each $1\le i\le t$, we define the $i$-th input length to be $n_i \eqdef n_{i-1}^\beta$, for a large enough constant $\beta > 1$. Our goal is to design a pseudodeterministic algorithm for finding elements in $Q$ that will be correct on \emph{at least one of the input lengths} $n_0, n_1, \dots, n_t$. By considering different ``initial'' input lengths $n_0^{(1)}, n_0^{(2)}, \dots$ and setting each $n_0^{(i+1)}$ to be sufficiently large compared to the previous $n_t^{(i)}$, we obtain a pseudodeterministic construction on \emph{infinitely many} input lengths. On each input length $n_i$ we will have:\begin{enumerate}
	\item the property $Q_{n_i}$ that we want to hit;
	\item a candidate hitting set generator $\HSG_i \subseteq \{0, 1\}^{n_i}$; and
	\item the brute-force algorithm $\BF_i:\{1^{n_i}\} \to \{0, 1\}^{n_i}$, which iterates through all elements in $\HSG_i$ and outputs the first element that is in $Q_{n_i}$.
\end{enumerate}

Note that $\BF_i$ is completely defined by $\HSG_i$. Suppose that $\HSG_i$ can be computed (deterministically) in time $T_i$ and depth $d_i$, then $\BF_i$ can also be computed (deterministically) in time $T'_i \eqdef T_i\cdot \poly(n_i)$ and depth $d'_i \eqdef d_i\cdot \poly(n_i)$. As discussed above, initially, $\HSG_0 \eqdef \{0, 1\}^{n_0}$ is the trivial hitting set generator, $T_0 \eqdef 2^{O(n_0)}$, and $d_0 \eqdef \poly(n_0)$.

For each $0\le i < t$, we let $f(1^{n_i}) \eqdef \BF_i, M \eqdef n_{i+1}$, and invoke the Chen--Tell HSG to obtain the HSG $\HSG_{i+1} \subseteq \{0, 1\}^{n_{i+1}}$. Recall that Chen--Tell guarantees the following: Suppose that $Q_M = Q_{n_{i+1}}$ avoids the HSG $\HSG_{i+1}$, then one can use $Q_{n_{i+1}}$ to compute $f(1^{n_i})$ with high probability in time $\poly(d'_i,n_i,M) \le \poly(d_i, n_i)$, by our choice of parameters. Recall that if $\HSG_i$ indeed hits $Q_{n_i}$, then $f(1^{n_i})$ implements the brute-force algorithm and outputs the first element in $\HSG_i\cap Q_{n_i}$ (\ie,~a \emph{canonical} element in $Q_{n_i}$). To reiterate, Chen--Tell gives us the following win-win condition:\begin{itemize}
	\item \emph{either} $Q_{n_{i+1}}$ avoids $\HSG_{i+1}$, in which case we obtain a probabilistic algorithm that outputs a canonical element in $Q_{n_i}$ (thus a pseudodeterministic algorithm) in $\poly(d_i, n_i)$ time;
	\item \emph{or} $\HSG_{i+1}$ hits $Q_{n_{i+1}}$, in which case we obtain a hitting set $\HSG_{i+1}$ that hits $Q_{n_{i+1}}$, thereby making progress on input length $n_{i+1}$.
\end{itemize}

The HSG $\HSG_{i+1}$ can be computed in time $T_{i+1} \eqdef (T'_i)^c$ and depth $d_{i+1} \eqdef d'_i\cdot \log T'_i + n_{i+1}^c$. Crucially, although $T_0$ is exponential in $n_0$, it is possible to show by picking a large enough $\beta > 1$ that the sequence $\{n_i\}_{i\in \N}$ grows faster than the sequence $\{T_i\}_{i\in\N}$, and eventually when $i = t = O(\log n_0)$, it will be the case that $T_t \le \poly(n_t)$ and we can apply the brute-force algorithm to find the first element in $\HSG_t$ that is in $Q_{n_t}$ in time polynomial in $n_t$.

A more precise treatment of the growth of the two sequences $\{n_i\}$ and $\{T_i\}$ are as follows. There is some absolute constant $\alpha \ge 1$ such that $T_0 \le 2^{\alpha n_0}$ and
\[T_{i+1} \le T_i^\alpha\text{ (for each $0\le i < t$)}.\]
We set $\beta \eqdef 2\alpha$ (recall that each $n_{i+1} = n_i^\beta$). It follows from induction that for each $0\le i\le t$,
\[T_{i+1} \le T_0^{\alpha^i} = 2^{\alpha^{i+1}n_0}~~\text{and}~~n_{i+1} = n_i^{\beta} = n_0^{\beta^{i+1}} = n_0^{(2\alpha)^{i+1}}.\]
Since
\[\frac{\log T_t}{\log n_t} \le\frac{\alpha^{t}n_0}{(2\alpha)^t\log n_0}=\frac{n_0}{2^t\log n_0},\]
it follows that when $t \approx \log(n_0 / \log n_0)$, $T_t$ will be comparable to $n_t$ (rather than $2^{n_t}$). Similarly, one can show that $d_i \le \poly(n_i)$ for every $i\le t$.

\paragraph{Informal description of the algorithm and correctness.} %
To wrap up, we arrive at the following pseudodeterministic algorithm that is correct on at least one of the input lengths $n_0, n_1, \dots, n_t$. On input length $n_i$, if $i = t$, then we use $\poly(T_t) \le \poly(n_t)$ time to find the first string in $\HSG_i$ that is also in $Q_{n_i}$ (\ie, simulate $\BF_i$); otherwise, use $Q_{n_{i+1}}$ as a distinguisher for the Chen--Tell hitting set $\HSG_i$ and print the output of $\BF_i$ in $\poly(n_i, d_i) \le \poly(n_i)$ time. To see that our algorithm succeeds on at least one $n_i$, consider the following two cases:\begin{enumerate}
	\item Suppose that $\HSG_t$ indeed hits $Q_{n_t}$. Then clearly, our algorithm succeeds on input length $n_t$.
	\item On the other hand, suppose that $\HSG_t$ does not hit $Q_{n_t}$. Since our trivial HSG $\HSG_0$ hits $Q_{n_0}$, there exists an index $0\le i < t$ such that $\HSG_i$ hits $Q_{n_i}$ but $Q_{n_{i+1}}$ avoids $\HSG_{i+1}$.
	
	Since $Q_{n_{i+1}}$ avoids $\HSG_{i+1}$, Chen--Tell guarantees that we can speed up the computation of $\BF_i$ using $Q_{n_{i+1}}$ as an oracle. Since $\HSG_i$ hits $Q_{n_i}$, the output of $\BF_i$ is indeed a canonical element in $Q_{n_i}$. It follows that our algorithm succeeds on input length $n_i$.
\end{enumerate}
This completes the sketch of the algorithm and its correctness. We note that while this exposition explains how the second bullet of \autoref{theo:intro_main} is achieved, it does not address the behavior of the algorithm on other input lengths (\ie, the first bullet in the same statement). For simplicity, we omit this here and refer to the formal presentation in \autoref{sec:main_proof}.\footnote{Alternatively, the guarantee from the first bullet of \autoref{theo:intro_main} can always be achieved via a general argument. We refer to \cite[Proposition 2]{DBLP:conf/stoc/OliveiraS17} for the details.}\\

While the aforementioned construction conveys the gist of our approach, there are two important issues with our presentation. Firstly, as explained before, the results of \cite{ChenT21b} do not achieve the \emph{ideal parameters} of the HSG stated above. Secondly, we have only vaguely discussed the \emph{circuit uniformity} of the function $f(1^n)$. The uniformity of $f$ is critical for the reconstruction procedure of \cite{ChenT21b} to run in time comparable to the circuit depth of $f$. On the other hand, since our HSGs and functions $f$ (corresponding to the algorithm $\BF$) are recursively defined, the circuit uniformity of the \cite{ChenT21b} generator itself becomes another critical complexity measure in the proof. 

In the next subsection, we discuss the Chen--Tell generator in more detail and explain how to obtain an improved generator construction satisfying our requirements.

\subsubsection{Improving the Chen--Tell Targeted Hitting Set Generator}\label{sec:improved_CT_exposition}

The uniform hardness-to-randomness framework of Chen--Tell builds on two important ingredients:\footnote{Below we will focus on the high-level picture of the Chen--Tell framework without diving into too many details. Our presentation is also somewhat different from the original presentation in~\cite{ChenT21b}.}
\begin{enumerate}
	\item A \emph{layered-polynomial representation} of a shallow uniform circuit.
	
	\item A hitting set generator with a \emph{uniform learning reconstruction} algorithm.
\end{enumerate}

\newcommand{\OUT}{\mathsf{OUT}}

\paragraph*{Layered-polynomial representation.}
We now discuss the first ingredient. Let $f \colon \bs{n} \to \bs{n}$ be a logspace-uniform circuit family of size $T(n)$ and depth $d(n)$.\footnote{Intuitively, a circuit family is logspace-uniform if each circuit in the family can be printed by a fixed machine that runs in  space that is of logarithmic order in the size of the circuits. See~\autoref{sec:circuit_uniformity} for the precise definition of logspace-uniform circuits.} Let $M \colon \mathbb{N} \to \mathbb{N}$ be the parameter for output length. Building on the doubly efficient interactive proof system by~\cite{gkr15} (and its subsequent simplification by~\cite{gol17a}), for any $z \in \bs{n}$,~\cite{ChenT21b} showed that there is a sequence of polynomials $\{ P_i^{z} \}_{i \in [d']}$ for $d' = d \cdot \polylog(T)$ with the following nice properties:
\begin{itemize}
	\item ({\bf Arithmetic setting.}) Let $\F$ be a finite field of size $M^{c}$ for a large universal constant $c > 1$, and let  $m$ be of order $\frac{\log T}{\log M}$. All the $P_i^z$ map $\F^{m}$ to $\F$ and have total degree at most $M$.
	
	\item ({\bf Base case.}) There is an algorithm $\Base$ such that, given the input $z \in \bs{n}$ and $\vec{w} \in \F^m$,  computes $P_1^z(\vec{w})$ in $\poly(M)$ time.
	
	\item ({\bf Downward self-reducibility.}) There is an oracle algorithm $\DSR$ that, given input $i \in \{2,\dotsc,d'\}$ and $\vec{w} \in \F^m$, together with the oracle access to $P_{i-1}^{z}(\cdot)$, computes $P_i^z(\vec{w})$ in $\poly(M)$ time.
	
	\item ({\bf Faithful representation.}) There is an oracle algorithm $\OUT$ that, given input $i \in [n]$ and oracle access to $P_{d'}^z$, outputs $f(z)_i$ in $\poly(M)$ time.
\end{itemize}

Intuitively, these polynomials form an \emph{encoded} version of the computation of $f$ in the sense that they admit both \emph{downward self-reducibility} and \emph{random self-reducibility}: every $P_i^z$ has low degree and hence admits error correction properties; downward self-reducibility follows from definition.

We note that the proof of this result depends in a crucial way on the logspace-uniformity of the circuit family computing $f$. (This allows one to arithmetize a formula of bounded size that computes the direct connection language of the circuit, while also controlling the circuit uniformity of the resulting polynomials.)

\paragraph*{Hitting set generators with a uniform learning reconstruction algorithm.} The second ingredient of~\cite{ChenT21b} is the Nisan-Wigderson generator combined with Reed-Muller codes~\cite{NisanW94,SudanTV01}. The most important property of this generator is that it supports a uniform learning reconstruction algorithm. In more detail, for a polynomial $P\colon \F^{m} \to \F$, the generator $\NW^{P}$ takes $s = O\mleft(\frac{\log^2 T}{\log M}\mright)$ bits as seed, such that there is a uniform oracle algorithm $R$ (for ``reconstruction'') where the following holds. Given oracle access to both $P$ and an oracle $\bfunc{D}{M}{}$ that distinguishes $\NW^{P}(U_s)$ from the uniform distribution, $R^{P,D}$ runs in $\poly(M)$ time and with high probability outputs a polynomial-size $D$-oracle circuit that computes $P$.

Now, the hitting set $H_f(z)$ is defined as
\[
H_f(z) \eqdef \bigcup_{i \in [d']} \NW^{P_i^{z}}\mathdot
\]

\paragraph*{The uniform reconstruction algorithm.} One key observation here is that if a distinguisher $\bfunc{D}{M}{}$ avoids $H_f(z)$, meaning that $D$ accepts a large fraction of inputs from $\bs{M}$ but rejects all strings in $H_f(z)$, then clearly $D$ also distinguishes all $\NW^{P_i^z}(U_s)$ from the uniform distribution. Following~\cite{ImpagliazzoW01},~\cite{ChenT21b} then shows that there is a uniform oracle algorithm $R_f$ that takes input $z \in \bs{n}$ and any ``avoider'' $D$ of $H_f(z)$ as oracle, and outputs $f(z)$ with high probability. In more detail, $R_f$ works as follows:

\begin{enumerate}
	\item It is given input $z \in \bs{n}$ and oracle access to an avoider $\bfunc{D}{M}{}$ of $H_f(z)$.
	
	\item For every $i \in \{2,\dotsc,d'\}$:
	
	\begin{enumerate}
		\item The goal of the $i$-th step is to construct a $\poly(M)$-size $D$-oracle circuit $C_i$ that computes $P_i^z$.
		
		\item It runs the learning reconstruction algorithm $R^{P_i^z,D}$ to obtain a $\poly(M)$-size $D$-oracle circuit. To answer queries to $P_i^z$, we first run the algorithm $\DSR$ to convert them into queries to $P_{i-1}^z$. Next, when $i = 2$, we answer these queries by calling $\Base$ directly, and when $i > 2$ we answer these queries by evaluating our $D$-oracle circuit $C_{i-1}$.
	\end{enumerate}
	
	\item For every $i \in [n]$, output $\OUT^{C_{d'}^D}(i)$.
	
\end{enumerate}

\paragraph*{Issue with the original Chen--Tell construction: Super-logarithmic seed length of $\NW$.} The main issue with the construction above is that $\NW^{P_i^z}$ has seed length $O\!\left(\frac{\log^2 T}{\log M}\right)$. In particular, this means that when $\log M \le o(\log T)$, the hitting set $H_f(z)$ has super-polynomial size, and therefore cannot be computed in $\poly(T)$ time as in the ``ideal version'' of \cite{ChenT21b} stated above. (See, e.g.,~\cite{DBLP:conf/focs/ImpagliazzoSW99}.)\footnote{Indeed, if we rely on the original Chen--Tell construction to implement the bootstrapping method described above, we would only obtain a quasi-polynomial-time pseudodeterministic construction, instead of a polynomial-time one.} Hence, to improve the computation time of $H_f(z)$ to $\poly(T)$, we need an HSG with seed length $O\!\left(\log T\right)$ for all possible values of $M$, together with a uniform learning reconstruction, when it is instantiated with polynomials. Jumping ahead, we will replace $\NW$ with the Shaltiel--Umans Hitting Set Generator~\cite{ShaltielU05}, obtaining an optimized version of the Chen--Tell generator with better parameters. However, the original generator from~\cite{ShaltielU05} does not provide a uniform learning reconstruction procedure. By using the classical construction of a  \emph{cryptographic pseudorandom generator from a one-way permutation} and of another idea, we managed to  modify their construction to allow a uniform learning reconstruction. See the next subsection for more details.

\paragraph*{Controlling the circuit uniformity of the optimized Chen--Tell generator.} As stressed above, in order to construct a layered-polynomial representation for $f$ with the aforementioned parameters, it is crucial that $f$ admits a logspace-uniform circuit family. Since we will rely on multiple applications of the generator, and each new function $\BF$ on which the result is invoked contains as a subroutine the code of the previous generator, we must \emph{upper bound the circuit uniformity} of our optimized Chen--Tell generator. This turns out to require a delicate manipulation of all circuits involved in the proof and of the Turing machines that produce them, including the components of the Shaltiel--Umans generator. For this reason, whenever we talk about a Boolean circuit in the actual proof, we also bound the description length and space complexity of its corresponding machine. Additionally, as we manipulate a super-constant number of circuits (and their corresponding machines) in our construction, we will also consider the complexity of producing the code of a  machine $M_2$ encoding a circuit $C_2$ from the code of a machine $M_1$ encoding a circuit $C_1$ (see, e.g., the ``Moreover'' part in the statement of \autoref{thm:HSG-Chen-Tell-poly}). The details are quite tedious, but they are necessary for verifying the correctness and running time of our algorithm. In order to provide some intuition for it, we notice that as we move from the HSG $\HSG_i$ to $\HSG_{i + 1}$, we also increase the corresponding input length parameter from $n_i$ to $n_{i + 1} = n_i^\beta$. While there is an increase in the uniformity complexity, it remains bounded relative to the new input length.  (Think of a truncated geometric series whose value is dominated by the complexity over the current input length.) We omit the details in this proof overview.

\paragraph{Non-black-box behavior.} We note that the recursive application of the Chen--Tell generator is responsible for the ``non-black-box'' behavior of our pseudodeterministic construction. Indeed, since we invoke the Chen--Tell generator on each function $\BF$ (which contains the code of the algorithm $A_Q$ deciding property $Q$ as a subroutine), the collection of strings in the hitting set generator depends on the layered-polynomial representation that is obtained from the \emph{code} of $\BF$. As a consequence, our construction has the unusual feature that the canonical outputs of the algorithm $B$ in \autoref{theo:intro_main} are affected by the code of $A_Q$. In other words, by using a different primality test algorithm (or by making changes to the code implementing the AKS routine), one might get a different $n$-bit prime!\\

The parameters of our hitting set generator appear in \autoref{sec:main_proof}. The proof of the result is given in \autoref{sec:improved-CT}.

\subsubsection{Modified Shaltiel--Umans Generator with Uniform Learning Reconstruction}\label{sec: overview of SU}

As explained above, 
in order to complete the proof of \autoref{theo:intro_main} we need to design a variant of the Shaltiel--Umans generator \cite{ShaltielU05} with a \emph{uniform learning reconstruction} procedure.

The Shaltiel--Umans generator takes as input a low-degree polynomial $P:\F_{p}^m\to\F_p$ (in our case $p$ will be a power of $2$) and produces a set of binary strings (which is supposed to be a hitting set). The construction of this generator also relies on ``generator matrices''. A matrix $A\in\F_p^{m\times m}$ is a \emph{generator matrix} if it satisfies $\{A^i\cdot \vec{1}\}_{1\leq i<p^m}=\F^m_p\setminus\{\vec{0}\}$. Roughly put, the matrix $A$ can be thought of as performing multiplication with a generator of the multiplicative group of $\F_{p^m}$. 

Recall that a generator has a uniform learning reconstruction algorithm if the following holds. Given an algorithm $D$ that avoids the output of the generator constructed using $P$, as well as $P$ itself, we can \emph{uniformly} and \emph{efficiently} generate (with high probability) a $D$-oracle circuit that computes the polynomial $P$. (In other words, we can query $P$ while producing the circuit, but the circuit itself does not have access to $P$.)

However, the reconstruction procedure provided by the original Shaltiel--Umans generator only guarantees the following: If the generator is constructed using $P$ and some generator matrix $A$, then using an algorithm $D$ that avoids the output of the generator, and \emph{given the matrix} $A$ and oracle access to $P$, one can obtain a ($D$-oracle) circuit $C:[p^m-1]\to\F^m_p$ such that $C(i)=P(A^{i}\cdot \vec{1})$.\footnote{In fact, the circuit only computes $P(A^{i}\cdot \vec{v})$ for some $\vec{v}$ output by the reconstruction algorithm. We assume $\vec{v} = \vec{1}$ here for simplicity.} (For the precise statement, see \autoref{t:HSU}.) That is, this reconstruction is not a uniform learning algorithm in the following sense:
\begin{enumerate}
	\item It needs to know the matrix $A$ (which can be viewed as non-uniform advice).
	\item Given oracle access to $P$, it only learns a circuit that computes the mapping $i\mapsto P(A^{i}\cdot \vec{1})$, instead of a circuit that computes $P(\vec{x})$ on a given $\vec{x} \in \F_p^m$.
\end{enumerate}
We now describe how to modify the Shaltiel--Umans generator to make its reconstruction a uniform learning algorithm. 

For the first issue, our idea is that, instead of using a generator matrix that is obtained by brute-force search as in the original construction (we note that the reconstruction cannot afford to perform the brute-force search due to its time constraints), we will use a generator matrix that is from a small set of matrices that can be constructed \emph{efficiently}. More specifically, using results about finding primitive roots of finite fields (\eg, \cite{Shou92}), we show that one can efficiently and deterministically construct a set  $S$ of matrices that contains at least one generator matrix. %
The advantage is that the reconstruction algorithm can still afford to compute this set $S$. Note that although we don't know which matrix in $S$ is a valid generator matrix (as verifying whether a matrix is a generator matrix requires too much time), we can try all the matrices from $S$, and one of them will be the correct one. This allows us to obtain a list of candidate circuits, one of which computes $P$ (provided that we can also handle the second issue, which will be discussed next). Then by selecting from the list a circuit that is sufficiently close to $P$ (note that given oracle access to $P$, we can easily test whether a circuit is close to $P$ by sampling) and by using the \emph{self-correction} property of low-degree polynomials, we can obtain a circuit that computes $P$ exactly.

With the above idea, we may now assume that in the reconstruction we know the generator matrix $A$ used by the Shaltiel--Umans generator. Next, we describe how to handle the second issue. Recall that the reconstruction algorithm of the Shaltiel--Umans generator gives a circuit $C$ such that $C(i)=P(A^{i}\cdot \vec{1})$, for $i\in [p^m-1]$, and we want instead a circuit that given $\vec{x}\in\F_p^m$ computes $P(\vec{x})$. Now suppose given $\vec{x}\in\F_p^m\setminus\{\Vec{0}\}$, we can also \emph{efficiently} compute the value $i\in[p^m-1]$ such that $A^i\cdot \vec{1}=\vec{x}$. Then we would be able to combine this with $C$ to get a circuit $E$ that computes $P$, \ie, if $\Vec{x}=\Vec{0}$ then $E$ outputs $P(\Vec{0})$ (where the value $P(\Vec{0})$ can be hardcoded); otherwise, $E$ computes $i$ for $\vec{x}$ as described above and then outputs $C(i)$. However, the task of finding such $i$ given $A$ and $\vec{x}$ is essentially the \emph{discrete logarithm problem}, for which no efficient algorithm is known!

A classical result in cryptography is that one can construct a pseudorandom generator based on the hardness of the discrete logarithm problem (see, \eg,~\cite{BlumM84,Yao82}). %
More generally, given a permutation $f$ whose inverse admits \emph{random self-reducibility}\footnote{Roughly speaking, a function has random self-reducibility if computing the function on a given instance can be efficiently reduced to computing the function for uniformly random instances.}, one can construct a generator $G$ based on $f$ so that if there is a distinguisher $D$ that breaks $G$, then it can be used to invert $f$ via a uniform reduction. Our idea is to consider the bijection $f:[p^m-1] \to \F_p^m\setminus \{\vec{0}\}$ such that for each $i\in[p^m-1]$, $f(i)=A^{i}\cdot\vec{1}$ (where the random self-reducibility of $f^{-1}$ follows easily from that of the discrete logarithm problem), and try to construct a pseudorandom generator $G$ based on $f$. We then combine the output of $G$ with that of the Shaltiel--Umans generator constructed with the polynomial $P$ and the generator matrix $A$. Now if there is an algorithm $D$ that avoids this combined generator, which means $D$ \emph{simultaneously} avoids both the Shaltiel--Umans generator and the generator $G$, then $D$ can be used to obtain
\begin{itemize}
	\item a circuit $C$ such that $C(i)=P(A^{i}\cdot \vec{1})$ for every $i\in[p^m-1]$, and
	\item a circuit $C'$ that inverts $f$, \ie, $C'(\vec{x})$ outputs $i$ such that $A^i\cdot\vec{1}=\vec{x}$ for every $\vec{x}\in\F_p^m\setminus\{\Vec{0}\}$.
\end{itemize}
Then it is easy to combine $C$ and $C'$ to obtain a circuit that computes $P$.

A careful implementation of these ideas allows us to obtain a variant of the Shaltiel--Umans generator with uniform learning reconstruction, as needed in our  optimized Chen--Tell generator. We refer to \autoref{theo:SU-main} in \autoref{sec:SU-Gen} for more details. 

This completes the sketch of the proof of \autoref{theo:intro_main}.

\paragraph{Further remarks about the proof.} We note that in our proof the gap between two good input lengths on which the algorithm outputs a canonical prime can be exponentially large. It would be interesting to develop techniques to reduce this gap.

Additionally, the proof assumes the existence of a deterministic polynomial-time algorithm that decides the dense property. In contrast, the sub-exponential time algorithm from \cite{DBLP:conf/stoc/OliveiraS17} also works with a dense property that is decidable by a randomized polynomial-time algorithm. This is caused by the ``non-black-box'' nature of our approach via the Chen-Tell generator, which employs the code of the algorithm $A$ deciding the property as part of the description of the generator. Consequently, as alluded to above, changing the code of $A$ could result in a different canonical output on a given input length. If $A$ is randomized, fixing the randomness of $A$ is similar to the consideration of a different algorithm that decides the property, and it is not immediately clear how to maintain the pseudodeterministic behaviour in this case.

\section{Preliminaries}

For a positive integer $k$, we use $[k]$ to denote the set $\{1,2,\ldots,\}$. We use $\N$ to denote all non-negative integers and $\posN$ to denote all positive integers.

For $x,y \in \bs{*}$, we use $x \circ y$ to denote their concatenation.\footnote{We sometimes also use $C_1 \circ C_2$ to denote the composition of two circuits, but the meaning of the symbol $\circ$ will always be clear from the context.} For a function $\bfunc{f}{\ell}{}$ we use $\tt(f)$ to denote the $2^\ell$-length truth-table of $f$ (\ie, $\tt(f) = f(w_1)\circ f(w_2)\circ \dotsc\circ f(w_{2^\ell})$, where $w_{1},\dotsc,w_{2^\ell}$ is the enumeration of all strings from $\bs{\ell}$ in the lexicographical order).

Unless explicitly stated otherwise, we assume that all circuits are comprised of Boolean $\mathtt{NAND}$ gates of fan-in two. In several places in the paper we will need the following notion, which strengthens the standard notion of a time-computable function by requiring the function to be computable in logarithmic space. The depth of a circuit is defined to be the maximum length (measured by the number of edges) of any input-to-output path.

\begin{definition} [Logspace-Computable Functions]
	We say that a function $\func{T}{\N}{\N}$ is {\sf logspace-computable} if there exists an algorithm that gets input $1^n$, runs in space $O(\log(T(n)))$, and outputs $T(n)$.
\end{definition}

For convenience, we consider circuit families indexed by a tuple of parameters. Specifically, a circuit family with $k$ input parameters $\vec{\ell} = (\ell_1,\ell_2,\dotsc,\ell_k) \in \N^k$ is defined as $\{ C_{\vec{\ell}} \}_{\vec{\ell} \in \N^k}$, where each $C_{\vec{\ell}}$ is a circuit.

\subsection{Finite Fields}\label{sec:fields}

Throughout this paper, we will only consider finite fields of the form $\GF(2^{2 \cdot 3^\lambda})$ for some $\lambda \in \N$ since they enjoy simple representations that will be useful for us. We say $p = 2^{r}$ is a \emph{nice power} of $2$, if $r = 2 \cdot 3^{\lambda}$ for some $\lambda \in \N$.

Let $\ell \in \N$ and $n = 2 \cdot 3^\ell$. In the following, we use $\F$ to denote $\F_{2^n}$ for convenience. We will always represent $\F_{2^n}$ as $\F_2[\fx] / (\fx^{n} + \fx^{n/2} + 1)$.\footnote{$\fx^{2 \cdot 3^\ell} + \fx^{3^\ell} + 1 \in \F_2[\fx]$ is irreducible, see~\cite[Theorem~1.1.28]{van2012introduction}.} That is, we identify an element of $\F_{2^{n}}$ with an $\F_2[\fx]$ polynomial with degree less than $n$. To avoid confusion, given a polynomial $P(\fx) \in \F_2[\fx]$ with degree less than $n$, we will use $(P(\fx))_{\F}$ to denote the unique element in $\F$ identified with $P(\fx)$.

Let $\kappa^{(n)}$ be the natural bijection between $\{0,1\}^{n}$ and $\F = \GF(2^{n})$: for every $a \in \bs{n}$, $\kappa^{(n)}(a) = \left( \sum_{i \in [n]} a_i \cdot \fx^{i-1}\right)_{\F}$. We always use $\kappa^{(n)}$ to encode elements from $\F$ by Boolean strings. That is, whenever we say that an algorithm takes an input from $\F$, we mean it takes a string $x \in \bs{n}$ and interprets it as an element of $\F$ via $\kappa^{(n)}$. Similarly, whenever we say that an algorithm outputs an element from $\F$, we mean it outputs a string $\bs{n}$ encoding that element via $\kappa^{(n)}$. For simplicity, sometimes we use $(a)_{\F}$ to denote $\kappa^{(n)}(a)$. Also, when we say the $i$-th element in $\F$, we mean the element in $\F$ encoded by the $i$-th lexicographically smallest Boolean string in $\bs{n}$.

\subsection{Bounded-Space Turing Machines}

Our argument is robust to specific details about the computational model, but in order to estimate the relevant bounds, we must fix a model. We use the standard model of space-bounded computation (see~\cite[Section~5]{Goldreich-book} or~\cite[Section~4]{AB09-book}). A deterministic space-bounded Turing machine has three tapes: an input tape (that is read-only); a work tape (that is read/write) and an output tape (that is write-only and uni-directional). We assume that the machine’s alphabet is $\Sigma \eqdef \{0, 1\}$. The space complexity of the machine is the number of used cells on the work tape. For concreteness, we assume that the work tape contains initially only $\square$ (``blank'') symbols, and that the machine writes symbols from $\Sigma$ in the tape.

Throughout the paper, we will describe a space-bounded Turing machine by fixing a universal Turing machine $U$ that has an additional read-only \emph{program tape} such that $\TurM(x)$ is defined to be the output of $U$ with the program tape initialized as $\TurM$.\footnote{The advantage of fixing a universal Turing machine is that now our Turing machine always has a constant number of states, which is helpful when bounding the number of configurations of a Turing machine of  super-constant size.} Abusing the notation, we often use $\TurM$ to denote both the Turing machine and a binary string description of the Turing machine. Without loss of generality, we also assume our description is \emph{paddable} meaning that for every $\TurM \in \bs{*}$ and $k \in \N$, $\TurM$ and $\TurM \circ 0^{k}$ represent the same machine. To avoid certain technicalities, we will always assume that the space bound of a Turing machine $\TurM$ is greater than its description size.

\paragraph*{Configurations of space-bounded machines.} On a fixed input $x \in \bs{n}$, a space-$s$ Turing machine $\TurM$ has $2^{s'}$ possible configurations, where $s' = s'(s,n) = s + O(\log s) + \log n$. Each configuration can be described by $s'$ bits. Here, $s$ measures the space used by the universal Turing machine $U$ that simulates $\TurM$ on input $x$. In more detail, it can be described by the content of $U$'s work tape, $U$'s current state, and the location of $U$'s heads, including the head on the input/program tape. (Note that a configuration does not include the content of the output tape, which does not affect the next step of the machine.)

We will need the following fact for determining the relationship between configurations of a Turing machine. Recall that a sequence $\{D_n\}_{n \geq 1}$ of size-$T(n)$ computational devices is \emph{logspace-uniform} if there is a machine $M(1^n)$ that runs in space $O(\log T(n))$ and outputs $D_n$ (or equivalently, decides the direct connection language of $D_n$).  

\newcommand{\algoNext}{\mathbb{A}_{\sf nxt}}

\begin{fact}\label{fact:check-next-configuration}
	Given a description of Turing machine $\TurM \in \bs{*}$, a space bound $s \in \N$, an input $x \in \bs{n}$, and two configurations $\gamma,\gamma' \in \bs{s'}$, there is an algorithm $\algoNext$ that determines whether $\gamma'$ is the next configuration obtained by running $\TurM$ for one step on input $x$. Moreover, $\algoNext$ can be computed by a logspace-uniform $O(m^3)$-size $O(\log m)$-depth formula and by an $O(m)$-space algorithm, where $m := |\TurM| + \log s + n + 2s'$ is the total number of input bits. \emph{(}Here, we assume that if $\gamma$ is the accepting state or the rejecting state, then the next configuration of $\gamma$ is always $\gamma$ itself.\emph{)}
\end{fact}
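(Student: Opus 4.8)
\textbf{Proof proposal for \autoref{fact:check-next-configuration}.}
The plan is to construct $\algoNext$ as a simple local-consistency checker between two configurations of the universal Turing machine $U$ running program $\TurM$, and then to observe that each ``local'' step of this check is computable by constant-size Boolean logic, so the whole thing is a shallow formula of the claimed size. First I would fix the precise encoding of a configuration $\gamma \in \bs{s'}$: it records the $s$-bit contents of $U$'s work tape, $U$'s current state (constant size, since $U$ is fixed), and the head positions on the input/program tape and on the work tape, each stored in binary using $O(\log s) + \log n$ bits; note the program tape content is $\TurM$, which is part of the input to $\algoNext$, and the input tape content is $x$, also part of the input. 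The total number of input bits to $\algoNext$ is therefore $m = |\TurM| + s + n + 2s' = O(s + n + |\TurM|)$; since by assumption $s > |\TurM|$ and $s'=s+O(\log s)+\log n$, we have $m=\Theta(s'+n)$, which is what the bounds will be stated in terms of.

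The core of the construction is to verify that $\gamma'$ is the one-step successor of $\gamma$. This is a conjunction of a constant number of checks, each of which inspects only a bounded window of the two configurations together with a bounded window of $x$ and of $\TurM$: (i) read the state $q$ from $\gamma$ and, using $\gamma$'s head positions, read the symbol $\sigma_{\mathrm{in}}$ under the input head (by indexing into $x$), the symbol $\sigma_{\mathrm{prog}}$ under the program head (by indexing into $\TurM$), and the symbol $\sigma_{\mathrm{work}}$ under the work head (by indexing into $\gamma$'s work-tape field); (ii) consult the fixed (constant-size) transition table of $U$ on $(q,\sigma_{\mathrm{in}},\sigma_{\mathrm{prog}},\sigma_{\mathrm{work}})$ to determine the new state $q'$, the symbol to write on the work tape, and the three head moves; (iii) check that $\gamma'$'s state field equals $q'$, that $\gamma'$'s head-position fields are the correct $\pm 1$ (or $0$) updates of $\gamma$'s, that $\gamma'$'s work-tape field agrees with $\gamma$'s everywhere except possibly at the old work-head cell, and that at that cell it holds the prescribed symbol. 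The special convention for halting states is handled by an extra clause: if $q$ is the accepting or rejecting state, require $\gamma'=\gamma$ verbatim. Each indexing operation ``read the bit of a string $y$ at a binary-encoded position $j$'' is a multiplexer, computable by an $O(|y|)$-size $O(\log |y|)$-depth formula; the arithmetic of incrementing/decrementing an $O(\log m)$-bit counter and the equality tests between $O(m)$-bit fields are likewise $O(m)$-size, $O(\log m)$-depth; and there are only $O(1)$ of these pieces. Taking their conjunction (an AND-tree of depth $O(\log m)$) gives a formula of size $O(m^3)$ — in fact $O(m\log m)$ suffices for most gadgets, and the cubic slack comfortably absorbs the multiplexers that index into $\TurM$, $x$, and the work-tape field — and depth $O(\log m)$. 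Logspace-uniformity of this formula family is immediate: the formula has a completely regular structure (a fixed number of multiplexer gadgets of known size wired into a fixed constant-size transition-table circuit and an AND-tree), so a machine can print its direct connection language using $O(\log m)$ bits of workspace to hold the indices of the gate being described. The $O(m)$-space algorithm is even more direct: with $O(m)$ cells it can simply store the relevant fields of $\gamma$, $\gamma'$, and the needed positions of $x$ and $\TurM$, and carry out the checks sequentially.

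I expect the only mildly delicate point to be bookkeeping the size/depth of the multiplexers that have to index into $\TurM$ and into $x$ (whose lengths are $\le m$) rather than just into the $O(\log m)$-bit head-position fields, and making sure these still fit inside $O(m^3)$ size and $O(\log m)$ depth — which they do, since a $2^k$-to-$1$ multiplexer on a length-$2^k$ data string is an $O(2^k)$-size, $O(k)$-depth formula, so indexing into a length-$\le m$ string costs $O(m)$ size and $O(\log m)$ depth. Everything else is routine: the transition logic of the fixed universal machine $U$ is of constant size because $U$ has a constant number of states (this is exactly why the excerpt fixed a universal machine), the counter arithmetic is standard, and the top-level AND of $O(1)$ subformulas only adds $O(\log m)$ to the depth and a constant factor to the size. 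Hence $\algoNext$ is computable by a logspace-uniform $O(m^3)$-size, $O(\log m)$-depth formula and by an $O(m)$-space algorithm, as claimed.
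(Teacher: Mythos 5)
Your proposal is correct. The paper states \autoref{fact:check-next-configuration} without proof (it is a standard technical claim about local consistency of Turing-machine configurations), and the argument you supply is the natural one the authors evidently had in mind: encode a configuration as (work-tape contents, state, head positions), verify the one-step transition as a conjunction of $O(1)$ local checks built from multiplexers indexing into $\TurM$, $x$, and $\gamma$'s work-tape field, a constant-size transition-table gadget for the fixed universal machine $U$, counter arithmetic for the head moves, a positionwise agreement check on the work tape, and a separate clause for the halting convention. Your size/depth accounting is sound — each $2^k$-to-$1$ multiplexer on a length-$\le m$ string is an $O(m\log m)$-size, $O(\log m)$-depth formula, the work-tape agreement check is $O(s\log s)$-size and $O(\log s)$-depth, and the top-level conjunction keeps everything within $O(m^3)$ size and $O(\log m)$ depth — and the logspace-uniformity and $O(m)$-space claims follow from the regular structure exactly as you say (in fact $O(\log m)$ work space would suffice for the sequential check, since the inputs live on the read-only tape; $O(m)$ is a slack upper bound). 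One small imprecision: you write $m = |\TurM| + s + n + 2s'$, but the space bound $s$ is presumably encoded in $O(\log s)$ bits, so $m = |\TurM| + O(\log s) + n + 2s'$; this does not affect any of the asymptotics since $|\TurM| < s \le s'$.
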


\subsection{Circuits Generated by Bounded-Space Turing Machines}\label{sec:circuit_uniformity}

\newcommand{\nin}{n_{\sf in}}
\newcommand{\nout}{n_{\sf out}}
\newcommand{\Ckt}{\mathsf{Circuit}}

In this paper we often use the following two representations of a circuit (recall that throughout this paper all circuits consist entirely of fan-in two $\mathtt{NAND}$ gates).

\begin{itemize}
	\item ({\bf Adjacency relation tensor.}) A circuit $C$ of size $T$ is given as a tensor ${\sf T}_{C} \in \bs{T \times T \times T}$ such that for every tuple $(u,v,w) \in [T]^3$, ${\sf T}_C(u,v,w) = 1$ if and only if the gates in $C$ indexed by $v$ and by $w$ feed into the gate in $C$ indexed by $u$.
	
	\item ({\bf Layered adjacency relation tensor.}) A circuit $C$ of width $T$ and depth $d$ is given as a list of $d$ tensors ${\sf T}_{C}^{(i)} \in \bs{T \times T \times T}$, where $i \in [d]$, such that for every  layer $i \in [d]$ and tuple $(u,v,w) \in [T]^3$, ${\sf T}^{(i)}_C(u,v,w) = 1$ if and only if the gates in the $(i-1)$-th layer of $C$ indexed by $v$ and by $w$ feed into the gate in the $i$-th layer of $C$ indexed by $u$. 
	
	Here, the input gates are on the $0$-th layer, and the output gates are on the $d$-th layer. Without loss of generality we can assume all layers have exactly $T$ gates.
\end{itemize}
In both cases above, when evaluating $C$ in a context, we will also specify two integers $\nin$ and $\nout$ to denote the number of input/output gates; see the definition of $\Ckt[T,s,\nin,\nout](\TurM)$ given below for details.

While we will mostly use the (unlayered) adjacency relation tensor representation, the layered variant will be very convenient in~\autoref{sec:layered-poly-rep}.

We define next a more general notion of a space-uniform circuit family with input parameters. This will be useful in some situations where we need to compute explicit space bounds for uniformity and index circuits by a tuple of parameters.

\begin{definition} [$\alpha$-Space-Uniform Circuits] \label{def:uniform-ckt}
	Let $k \in \N$ and $\alpha,T \colon \N^k \to \N$. We say that a circuit family with $k$ input parameters $\set{C_{\vec{\ell}}}_{\vec{\ell}\in\N^k}$ of size $T = T(\vec{\ell}\,)$ is {\sf $\alpha$-space-uniform} if there exists an algorithm $A$ such that:
	\begin{enumerate}
		\item ({\bf Decides the adjacency relation.}) The algorithm gets $\vec{\ell} \in \N^k$ and $(u,v,w)\in\bs{3\log(T)}$ as input and accepts if and only if the gates in $C_{\vec{\ell}}$ indexed by $v$ and by $w$ feed into the gate in $C_{\vec{\ell}}$ indexed by $u$. (That is, the algorithm computes the adjacency relation tensor of $C_{\vec{\ell}}$.)
		
		\item ({\bf Runs in $\alpha(\vec{\ell})$ space.}) For input parameters $\vec{\ell} \in \N^k$, the algorithm runs in space $\alpha(\vec{\ell}\,)$.
	\end{enumerate}

	We say $\set{C_{\vec{\ell}}}_{\vec{\ell}\in\N^k}$ is {\sf logspace-uniform} if it is $\mu \log T$-space-uniform for some constant $\mu$.
\end{definition}

\paragraph*{Circuit determined by a Turing machine through the adjacency relation tensor.} We will also consider the circuit determined by a Turing machine in the non-asymptotic setting. More specifically, given a Turing machine $\TurM \in \bs{*}$, parameters $T,s,\nin,\nout \in \N$, we use $\Ckt[T,s,\nin,\nout](\TurM)$ to denote the circuit whose adjacency relation is determined by running $\TurM$ with space bound $s$ over all triples $(u,v,w) \in \bs{3 \log T}$ with $u > v > w$. The first $\nin$ out of $T$ gates are the input gates, and the last $\nout$ out of $T$ gates are the output gates. If $\TurM$ fails to halt on some triples using $s$ bits of space, or the resulting circuit is invalid (\ie, inputs are not source, or outputs are not sink), we let $\Ckt[T,s,\nin,\nout](\TurM) = \bot$.

\newcommand{\algocomp}{\mathbb{A}_{\sf comp}}

Given two circuits $\bfunc{C_1}{n_1}{n_2}$ and $\bfunc{C_2}{n_2}{n_3}$, one can compose them into a single circuit $\bfunc{C_2 \circ C_1}{n_1}{n_3}$ in a natural way (\ie, by identifying the outputs of $C_1$ with the inputs of $C_2$). Suppose $C_1$ is a circuit of size $T_1$ and depth $d_1$, and $C_2$ is a circuit of size $T_2$ and depth $d_2$, then $C_2\circ C_1$ has size $T_1+T_2$ and depth $d_1+d_2$. Also, if $C_1,C_2$ are given by two Turing machines $\TurM_1$ and $\TurM_2$, we  can easily generate another Turing machine $\TurM_{3}$ that specifies $C_2 \circ C_1$. Formally, we will pick a universal machine such that we have the following simple fact on the description length of $\TurM_{3}$, whose proof we omit.

\begin{fact}[Turing Machine Description of Circuit Composition]\label{fact:comp-of-circuits}
	There is a universal constant $c_{\mathsf{comp}} \in \N$ such that the following holds. Given the descriptions of Turing machines $\TurM_1$ and $\TurM_2$, parameters 
	\[
	\vec{\ell}_1 = (T_1,s_1,n_1,n_2),\qquad\vec{\ell}_2 = (T_2,s_2,n_2,n_3) \in \N^4,
	\]
	and letting
	\[
	C_1 = \Ckt[\vec{\ell}_1](\TurM_1),~C_2 = \Ckt[\vec{\ell}_2](\TurM_2),~~\text{and}~~\vec{\ell}_3 = (T_1 + T_2, 2 \cdot (s_1 + s_2) + c_{\mathsf{comp}},n_1,n_3),
	\]
	there is a polynomial-time algorithm $\algocomp$ that given $\TurM_1,\TurM_2,\vec{\ell}_1,\vec{\ell}_2$ as input, outputs the description of a Turing machine $\TurM_3$ such that\,\footnote{We note that if either $C_1 = \bot$ or $C_2 = \bot$, then there is no guarantee on $\algocomp$'s behavior.}
	\[
	(C_2 \circ C_1) = \Ckt[\vec{\ell}_3](\TurM_3)~~\text{and}~~|\TurM_3| \le 2\cdot (|\TurM_1| + |\TurM_2| + \log n_2) + c_{\mathsf{comp}}.
	\]
\end{fact}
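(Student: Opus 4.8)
The plan is to produce $\TurM_3$ as a short ``wrapper'' that stores the codes of $\TurM_1$ and $\TurM_2$ together with $O(1)$ glue instructions and a handful of boundary parameters, and that decides the adjacency-relation tensor of $C_2\circ C_1$ by a case split on the index of the target gate; the algorithm $\algocomp$ then merely reads $\vec\ell_1,\vec\ell_2$, computes those boundary parameters, and prints this wrapper, which is clearly doable in polynomial time. The work is entirely bookkeeping, so the real content is to commit to one gate-numbering convention and one self-delimiting encoding for which the realization claim and both quantitative bounds hold simultaneously.

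First I would fix the numbering of $C_2\circ C_1$ (which has size $T\eqdef T_1+T_2$): positions $1,\dots,n_1$ are the inputs of $C_1$ (hence of the composition); positions $n_1+1,\dots,T_1$ are the remaining gates of $C_1$, laid out so its $n_2$ output gates sit at $T_1-n_2+1,\dots,T_1$; the next positions hold the internal gates of $C_2$ (with any wire of $C_2$ coming from one of $C_2$'s first $n_2$ gates rerouted to the matching output gate of $C_1$ at position $T_1-n_2+(\cdot)$), then a block of $n_2$ dummy $\mathtt{NAND}$ gates each reading from two fixed earlier gates and feeding nothing, and finally the $n_3$ output gates of $C_2$ occupy the last $n_3$ positions. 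Under this convention $C_2\circ C_1$ has the prescribed size $T$, its first $n_1$ gates are sources and its last $n_3$ are sinks, and its depth is at most $d_1+d_2$. On a query $(u,v,w)$, $\TurM_3$ then: (i) if $u\le T_1$, returns the answer of $\TurM_1$ on $(u,v,w)$ run under space bound $s_1$; (ii) if $u$ falls in the dummy block, returns whether $(v,w)$ is the fixed predecessor pair of that gate; (iii) otherwise it translates $u$ to the corresponding gate index of $C_2$, translates $v$ and $w$ to $C_2$-indices (an index in $C_1$'s output block $T_1-n_2+1,\dots,T_1$ maps to the matching former-input index of $C_2$, and an index outside $C_2$'s reachable range forces the answer ``no''), and returns the answer of $\TurM_2$ on the translated triple run under space bound $s_2$. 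Verifying that $\Ckt[\vec\ell_3](\TurM_3)=C_2\circ C_1$ is then a direct unwinding of the definitions (the $\bot$ cases being explicitly disclaimed by the footnote).

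Next I would check the two bounds. For space: simulating $\TurM_3$ through the fixed universal machine $U$ uses $O(\log T)$ space for the index translations plus exactly one nested simulation --- of $\TurM_1$ with budget $s_1$ or of $\TurM_2$ with budget $s_2$ --- itself carried out through $U$ and hence incurring only a constant-factor blow-up; accounting generously for these overheads and for the work-tape bookkeeping yields the bound $2(s_1+s_2)+c_{\mathsf{comp}}$, matching the space field of $\vec\ell_3$. For description length: $\TurM_3$ is the concatenation of the code of $\TurM_1$, the code of $\TurM_2$, the boundary parameters (which, given that $T$ and the values $n_1,n_3$ are already available from the surrounding $\Ckt[\cdot]$ call, are dominated by $\log n_2$ once one fixes $U$ suitably), and $O(1)$ fixed glue. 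This is exactly where the paper's freedom to pick the universal machine is spent: choosing a $U$ under which two sub-descriptions can be bundled self-delimitingly (e.g.\ bit-doubling each with a separator) and invoking paddability of descriptions gives $|\TurM_3|\le 2(|\TurM_1|+|\TurM_2|+\log n_2)+c_{\mathsf{comp}}$, and one can pad to equality.

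I expect the only ``obstacle'' here to be organizational rather than conceptual: committing to a single numbering and a single encoding that make the realization identity, the $2(s_1+s_2)+c_{\mathsf{comp}}$ space bound, and the $2(|\TurM_1|+|\TurM_2|+\log n_2)+c_{\mathsf{comp}}$ size bound all hold with \emph{absolute} constants. This uniformity matters because the fact is invoked $\omega(1)$ times along the recursion that builds the hitting-set generators, so any non-constant slack --- in particular an additive term growing with $|\TurM_1|+|\TurM_2|$ coming from delimiters, or a factor overhead per nesting level in the space accounting --- would compound and break the downstream parameter estimates.
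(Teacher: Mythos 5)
The paper itself does not prove this fact --- it explicitly says ``whose proof we omit'' --- so there is no in-paper argument to compare against. Your reconstruction has the right skeleton: hardcode the codes of $\TurM_1,\TurM_2$ and glue code into $\TurM_3$, case-split on the queried gate index $u$, translate indices across the boundary, insert $n_2$ dummy gates to pad the size to $T_1+T_2$, and sub-simulate $\TurM_1$ or $\TurM_2$ as appropriate. The space accounting is also fine; as you note, the outer budget $2(s_1+s_2)+c_{\mathsf{comp}}$ absorbs one nested simulation plus $O(\log(T_1+T_2))$ bookkeeping, and $\TurM_3$ need not even know $s_1,s_2$ explicitly.

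There is, however, a concrete gap in your description-length accounting. You assert that ``$T$ and the values $n_1,n_3$ are already available from the surrounding $\Ckt[\cdot]$ call,'' but under the paper's semantics $\TurM_3$ is run only on the triple $(u,v,w)\in\bs{3\log T}$; it can infer $T$ from the input length, but $n_1$, $n_3$, and in particular $T_1$ are \emph{not} passed in and must be hardcoded (and your layout, with the dummy block placed between $C_2$'s internal gates and its outputs, needs $n_3$ as well as $T_1$ and $n_2$ to locate the block boundaries). You then claim the hardcoded boundary data are ``dominated by $\log n_2$,'' but this is false in general: $n_2$ is the number of output gates of $C_1$, so $n_2\le T_1$ and hence $\log T_1 \ge \log n_2$. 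With bit-doubling, the description costs $2|\TurM_1|+2|\TurM_2|+2\log n_2+2\log T_1+O(1)$, exceeding the stated bound $2(|\TurM_1|+|\TurM_2|+\log n_2)+c_{\mathsf{comp}}$ by $2\log T_1$. This excess is benign only under the extra hypothesis $\log T_1\le|\TurM_1|$, which does hold in every application in the paper (there $|\TurM_1|$ is always $\Theta(\kappa\log T_1)$ for some $\kappa\ge 1$), but it is not implied by the fact's stated hypotheses nor by the blanket convention $s_1>|\TurM_1|$. So either one must make this hypothesis explicit, or rearrange the layout (put the $n_2$ dummies immediately after position $T_1$ so that $n_3$ is not needed) and choose an encoding under which $T_1$ is recoverable from the already-present data --- both are doable, but neither happens in your write-up as given.
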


\subsection{Pseudorandom Generators and  Hitting Set Generators}

\begin{definition}[Avoiding and Distinguishing]
	Let $m,t \in \N$, $\bfunc{D}{m}{}$, and $Z = (z_{i})_{i \in [t]}$ be a list of strings from $\bs{m}$. Let $\eps \in (0,1)$. We say that $D$ \emph{$\eps$-distinguishes $Z$}, if
	\[
	\left|\Pr_{r \gets \bs{m}}[D(r) = 1] - \Pr_{i \gets [t]}[D(z_i) = 1] \right| \ge \eps.
	\]
	We say that $D$ \emph{$\eps$-avoids $Z$}, if $\Pr_{r \gets \bs{m}}[D(r) = 1] \ge \eps$ and $D(z_i) = 0$ for every $i \in [t]$.
\end{definition}

\section{Polynomial-Time Pseudodeterministic Constructions for Dense Properties}\label{sec:main_proof}

In this section, we prove our main result, restated below for convenience.

\ThmMain*

We will need the following reconstructive hitting set generator, which is a refinement of the generator in~\cite{ChenT21b} with parameters improved using techniques from~\cite{ShaltielU05}. The proof is presented in \autoref{sec:improved-CT}.

\newcommand{\HSGCT}{\HSG^{\sf ct}}
\newcommand{\Recon}{\mathsf{R}}
\newcommand{\ReconCT}{\Recon^{\sf ct}}

\begin{theorem}[Improved Chen--Tell Hitting Set Generator]\label{thm:HSG-Chen-Tell-poly}
	There exists a universal $c \in \posN$, a deterministic algorithm $\HSGCT$, and a probabilistic oracle algorithm $\ReconCT$ such that the following holds. Fix $\kappa,\rho \in \N$. Let $T,d,M,n \in \N$ all be sufficiently large such that $n \le T$, $d \le T$, and $c \cdot \log T \le M \le T^{1/(c\rho)}$. Denote $\vec{\ell} \eqdef (n,T,d,M,\kappa,\rho)$ as the input parameters.

	For a Turing machine $\TurM$ with description size $|\TurM| = \kappa \cdot \log T$, we let
	\[
	C_{\TurM} \eqdef \Ckt[T,\kappa \cdot \log T,n,n](\TurM).
	\]
        Assume the circuit $C_{\TurM} \ne \bot$ and $C_{\TurM}$ has depth at most $d$.
	
	\begin{itemize}
		
		\item [{\bf (Generator.)}] One can turn the $(d\times T)$ computational history of $C_{\TurM}(1^n)$ into a hitting set generator of size $\poly(T^\kappa)$.
        
        \underline{Formally}, the generator $\HSGCT_{\vec{\ell}}$ \emph{(}we write $\HSGCT_{\vec{\ell}}$ to denote that $\HSGCT$ takes $\vec{\ell}$ as input parameters\emph{)} takes the description of a Turing machine $\TurM \in \bs{\kappa \log T}$ as input, and outputs a list of exactly $T^{(c\cdot \kappa) / 2}$ many $M$-bit strings. %
		Let $ \widetilde{T} \eqdef T^{c \cdot \kappa}$ and $ \widetilde{d} \eqdef c \cdot (d \log T + \kappa^2 \log^2 T) + M^c$. There is a Turing machine $\TurM_{\HSG}$ with description length $c\log \WT{T}$ such that for 
		\[
		C_{\HSG} \eqdef \Ckt\mleft[\WT{T},\log \WT{T},n,\mleft(\WT{T}\mright)^{1/2} \cdot M\mright](\TurM_{\HSG}),
		\]
		it holds that \emph{(1)} $C_{\HSG}(1^n) = \HSGCT_{\vec{\ell}}(\TurM)$ and \emph{(2)} $C_{\HSG}$ has depth $\WT{d}$. Moreover, there is a polynomial-time\footnote{In this paper, whenever we say an algorithm $\mathbb{A}$ that generates Turing machines or other succinct descriptions \emph{runs in polynomial time}, we mean the running time is polynomial in the total number of input bits. In this case, the time bound is polynomial in the description length of $\vec{\ell}$ and $\TurM$, \ie, $\poly(\kappa\log T)$.} algorithm $\algoCT$ that on inputs $\vec{\ell}$ and $\TurM \in \bs{\kappa \log T}$, outputs the description of $\TurM_{\HSG}$.

		\item [{\bf (Reconstruction.)}] If our generator fails to hit some dense oracle $D$, then we can compute $C_{\TurM}(1^n)$ in $\poly(d, n, M^\rho)$ time. Crucially, this time bound is \emph{independent} of $T$.

        \underline{Formally}, the reconstruction algorithm $\ReconCT$ takes the description of a Turing machine $\TurM \in \bs{\kappa \log T}$ as input, receives an oracle $D\colon\{0, 1\}^M \to \{0, 1\}$, and satisfies the following:
		
		\begin{itemize}
			\item [{\bf (Soundness.)}] For every oracle $\bfunc{D}{M}{}$, $\left(\ReconCT\right)^D_{\vec{\ell}}(\TurM)$ runs in time $(d+n)\cdot M^{c \rho}$ and with probability at least $1 - 2^{-M}$, its output is either $C_{\TurM}(1^n)$ or $\bot$.
			
			\item [{\bf (Completeness.)}] If $D$ $(1/M^{\rho})$-avoids $\HSGCT_{\vec{\ell}}(\TurM)$, then $\left(\ReconCT\right)^D_{\vec{\ell}}(\TurM)$ outputs $C_{\TurM}(1^n)$ with probability at least $1 - 2^{-M}$.
		\end{itemize}
	\end{itemize}
\end{theorem}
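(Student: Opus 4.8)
The plan is to run the Chen--Tell uniform hardness-to-randomness framework \cite{ChenT21b} — built on the GKR doubly-efficient interactive proof \cite{gkr15,gol17a} — but with the Nisan--Wigderson generator replaced by the modified Shaltiel--Umans generator of \autoref{theo:SU-main}, which has seed length $O(\log(\text{field size}))$ together with a \emph{uniform learning reconstruction}. The modified Shaltiel--Umans generator is the genuinely new ingredient and is used here as a black box; everything else is the Chen--Tell construction carried out with explicit attention to the description length and space complexity of every circuit and of the machine producing it (using the $\Ckt[\cdot]$ convention of \autoref{sec:circuit_uniformity}). First I would apply the layered-polynomial representation to the logspace-uniform circuit $C_{\TurM} = \Ckt[T,\kappa\log T,n,n](\TurM)$ of depth $\le d$, evaluated at the fixed input $1^n$: this yields polynomials $P_1,\dots,P_{d'}\colon\F^m\to\F$, where $\F$ is the least nice power of $2$ exceeding $M^{\Theta(1)}$, $m=\Theta(\log T/\log M)$, $d'=d\cdot\polylog(T)$, each $P_i$ of total degree $\le M$, together with $\poly(M)$-time oracle procedures $\Base$ (computes $P_1$), $\DSR$ (computes $P_i$ from an oracle for $P_{i-1}$), and $\OUT$ (computes $C_{\TurM}(1^n)_j$ from an oracle for $P_{d'}$). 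Two quantitative facts drive the rest: $|\F^m| = M^{\Theta(m)} = T^{\Theta(1)}$, so each $P_i$ has only polynomially many points; and, since the input is the single string $1^n$, the only nontrivial depth in the base case comes from arithmetizing the \emph{direct connection language} of $C_{\TurM}$, which is decided in space $\kappa\log T$ and hence by a logspace-uniform circuit of size $T^{O(\kappa)}$ and depth $O(\kappa^2\log^2 T)$ (the standard configuration-graph construction).

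\textbf{The generator $\HSGCT$ and the circuit $C_{\HSG}$.} I would define $\HSGCT_{\vec\ell}(\TurM) \eqdef \bigcup_{i\in[d']}\mathsf{SU}^{P_i}$, where $\mathsf{SU}^{P}$ is the modified Shaltiel--Umans generator of \autoref{theo:SU-main} run on $P$ (which also bundles in a cryptographic PRG obtained from the internal permutation $j\mapsto A^j\vec 1$, as explained below), padded out to exactly $T^{(c\kappa)/2}$ strings of length $M$. Because the seed length is $O(\log|\F^m|) = O(\log T)$, each $\mathsf{SU}^{P_i}$ has $T^{O(1)}$ outputs, so before padding the list has $d'\cdot T^{O(1)}=T^{O(1)}\le T^{(c\kappa)/2}$ strings. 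The circuit $C_{\HSG}$ computes the full truth-table of $P_1$ via $\Base$ at all $T^{O(1)}$ points in parallel, then iteratively obtains the truth-table of $P_i$ from that of $P_{i-1}$ by running $\DSR$ at all points in parallel (each query to $P_{i-1}$ answered by a multiplexer over the stored table), and finally evaluates $\mathsf{SU}^{P_i}$ for all $i$ by reading off the appropriate table entries. A careful implementation — with the layered-polynomial representation set up so that both $d'$ and the per-layer depth are controlled — makes this a logspace-uniform circuit of size $T^{O(\kappa)}$ whose depth is the sum of a term governed by the layer structure, the $O(\kappa^2\log^2 T)$ cost of arithmetizing the direct connection language, and an $M^{O(1)}$ cost for the generator and field arithmetic; taking $c$ large enough this fits within $\widetilde T = T^{c\kappa}$ and $\widetilde d = c(d\log T + \kappa^2\log^2 T) + M^c$. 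Since the whole construction is fully explicit, the machine $\TurM_{\HSG}$ is a fixed universal program with $\vec\ell$ and $\TurM$ hardwired; it has description length $O(\kappa\log T) = O(\log\widetilde T)$, and the algorithm $\algoCT$ that produces it simply prints this program in time $\poly(\kappa\log T)$.

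\textbf{The reconstruction $\ReconCT$.} Given the oracle $D$, process $i=1,\dots,d'$, maintaining a $\poly(M)$-size $D$-oracle circuit $C_i$ meant to compute $P_i$ exactly. At step $i$, run the uniform learning reconstruction of \autoref{theo:SU-main} on $\mathsf{SU}^{P_i}$ with distinguisher $D$, answering its queries to $P_i$ by $\Base$ when $i=1$ and by $\DSR$ with oracle $C_{i-1}$ when $i>1$; this outputs a candidate circuit, which we then \emph{test} at $\poly(M^\rho)$ random points $\vec w\in\F^m$ against the ground-truth value $P_i(\vec w)$ computed by $\Base$ / $\DSR^{C_{i-1}}$. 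If the test fails, output $\bot$; otherwise the candidate agrees with $P_i$ on all but a tiny fraction of points with high probability, and the self-correction of degree-$\le M$ polynomials over $\F$ converts it into a circuit $C_i$ computing $P_i$ \emph{exactly}. Finally output $\bigl(\OUT^{C_{d'}}(1),\dots,\OUT^{C_{d'}}(n)\bigr)$. \emph{Soundness} for arbitrary $D$ follows by induction from this test-and-self-correct discipline: whenever we do not output $\bot$, every $C_{i-1}$ computes $P_{i-1}$ exactly, so $\DSR^{C_{i-1}}$ really is $P_i$, hence the tests are against the true polynomial and $C_{d'}$ ends up computing $P_{d'}$ exactly, so $\OUT^{C_{d'}}$ returns exactly $C_{\TurM}(1^n)$; the only error is the sampling, bounded by $2^{-M}$ via a union bound over the $d'\le T\le 2^{M/c}$ steps with amplification. \emph{Completeness}: if $D$ $(1/M^\rho)$-avoids $\HSGCT_{\vec\ell}(\TurM)$, then it $(1/M^\rho)$-avoids every $\mathsf{SU}^{P_i}$ and the bundled PRG, so each learning-reconstruction call succeeds with high probability, all tests pass, and the output is $C_{\TurM}(1^n)$. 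The running time is $d'\cdot\poly(M^\rho) + n\cdot\poly(M) = (d+n)\cdot M^{O(\rho)} \le (d+n)\cdot M^{c\rho}$, using $d'=d\cdot\polylog(T)\le d\cdot M$ (as $\log T\le M/c$).

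\textbf{Main obstacle.} The heart of the matter is supplying \autoref{theo:SU-main}, since the Shaltiel--Umans reconstruction as originally stated is not a uniform learner: it needs the generator matrix $A$ as advice, and from an oracle for $P$ it only learns the map $j\mapsto P(A^j\vec 1)$ rather than $P$ itself. Eliminating the advice requires a small, \emph{deterministically and efficiently constructible} family of candidate generator matrices (via algorithms for primitive roots of finite fields, e.g.\ \cite{Shou92}), trying all of them and selecting one whose learned circuit is close to $P$; converting $j\mapsto P(A^j\vec 1)$ into $\vec x\mapsto P(\vec x)$ requires bundling into the generator a cryptographic PRG built from the permutation $j\mapsto A^j\vec 1$, so that an avoider of the combined generator also yields a circuit inverting that permutation (a discrete-logarithm-type map), which is then composed with the learned circuit. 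The remaining, pervasive-but-routine difficulty is keeping the description lengths, space bounds, sizes, and depths of all the intermediate circuits (and the machines producing them) within the stated budgets through the composition of these steps — in particular, confirming that arithmetizing the space-$\kappa\log T$ direct connection language of $C_{\TurM}$ contributes exactly the $\kappa^2\log^2 T$ term to $\widetilde d$ and the $T^{O(\kappa)}$ factor to $\widetilde T$, and that the recursion-friendly bound $|\TurM_{\HSG}| = O(\log\widetilde T)$ with a $\poly(\kappa\log T)$-time $\algoCT$ survives.
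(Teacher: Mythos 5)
Your proposal follows essentially the same route as the paper's proof: arithmetize $C_{\TurM}$ into a layered-polynomial representation (tracking description length and space of every machine via the $\Ckt[\cdot]$ convention), define $\HSGCT$ as the union of the modified Shaltiel--Umans generator of \autoref{theo:SU-main} applied to each layer polynomial, and reconstruct by a learn-then-test-then-self-correct loop driven by $\Base$ and $\DSR$, with soundness coming from the test step and completeness from the avoidance assumption. The only differences are cosmetic (the paper first reduces to $\rho=1$ rather than carrying $\rho$ through the test, uses $\F^{3m}$ where you write $\F^m$, and tracks the $\kappa$-dependence of $d'$ more explicitly).
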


We are now ready to prove~\autoref{theo:intro_main}.

\newcommand{\cCT}{c_{\sf ct}}
\newcommand{\cTM}{c_{\sf tm}}

\begin{proof}[Proof of~\autoref{theo:intro_main}]
	
	We start with some notations.
	
	\paragraph*{Notation.} Let $n_0 \in \N$ be sufficiently large. We define $n_0^{(0)}=n_0$, and for every $\ell \in \posN$,
	\[
	n^{(\ell)}_0 = 2^{2^{n^{(\ell-1)}_0}}.
	\] 
	
	Now, fix $\ell \in \N$. For simplicity of notation, in the following we will use $n_i,\HSG_i,t$ to denote $\nl_i,\Hl_i,t^{(\ell)}$, which will be defined later.
	
	\paragraph*{Construction of hitting sets.} For some parameter $t$ that we set later, we will define a sequence of input lengths $n_1,\dotsc,n_t$, with the hope that we can construct a string in $Q$ pseudodeterministically on at least one of the input lengths. 
	
	Let $\beta \in \posN$ be a sufficiently large constant to be chosen later. For every $i \in [t]$, we set $n_i = (n_{i-1})^\beta$. For each $i \in \{0,\dotsc,t\}$, we will construct a hitting set $\HSG_i \subseteq \{0,1\}^{n_i}$, which is computable by a logspace-uniform $T_i$-size $d_i$-depth circuit. As the base case, we set $\HSG_0$ as the whole set $\{0,1\}^{n_0}$. We note that there is a logspace-uniform $T_{0}$-size $d_0$-depth circuit that outputs all elements in $\HSG_0$, where $T_0 = 2^{2n_0}$ and $d_0 = 2n_0$.
	
	Let $\kappa \in \N$ be a large enough constant to be specified later. Let $c$ be the universal constant from~\autoref{thm:HSG-Chen-Tell-poly}.

	\paragraph*{Informal description.} We will first give a somewhat informal description of the construction of the $\HSG_i$, in particular, we will omit details about the uniformity of the circuits (whose analysis is rather tedious). We hope this can help the reader to gain some intuition first. Later we will carefully analyze the uniformity of the circuits for $\HSG_i$.
	
	For each $i \in [t]$, we construct $\HSG_i$ as follows:
	\begin{enumerate}
		\item We define $\BF_{i-1}$ as the circuit implementing the following algorithm: Enumerate every element in $\HSG_{i-1} \subseteq \{0,1\}^{n_{i-1}}$, and output the first element that is in $Q_{n_{i-1}}$; if no such element exists, then $\BF_{i-1}(n)$ outputs $\bot$;
		
		Using the assumed polynomial-time algorithm $A_Q$ for deciding membership in $Q$, $\BF_{i-1}$ can be implemented by  a $\TBF_{i-1}$-size $\dBF_{i-1}$-depth circuit, where 
		\[
		\TBF_{i-1} = T_{i-1} \cdot \poly(n_{i-1})~~\text{and}~~\dBF_{i-1} = d_{i-1} + \poly(n_{i-1}).
		\]

		\item We then set $\HSG_i$ as the hitting set from~\autoref{thm:HSG-Chen-Tell-poly} constructed with the Turing machine describing the circuit $\BF_{i-1}$ and output length $n_i$.\footnote{We do not discuss how to construct the Turing machine here, the details can be found in the formal construction below.} By~\autoref{thm:HSG-Chen-Tell-poly}, $\HSG_i$ can be implemented by a $T_{i}$-size $d_{i}$-depth circuit, where \[T_i = \poly(\TBF_{i-1})~~\text{and}~~d_i = O(\dBF_{i-1} \cdot \log \TBF_{i-1} + \log^2 \TBF_{i-1}) + \poly(n_i).\] (Here we are being informal, see below for a more precise description.)
	\end{enumerate}
	
	\paragraph*{Formal construction.} Next we carefully detail the construction. Let $\mu \in \posN$ be a large enough constant. First, we define a Turing machine $\TurM_{\HSG_0}$ of description size $\mu$ that describes a $T_0$-size $d_0$-depth circuit $C_{\HSG_0}$ for $\HSG_0$ on input $1^{n_0}$ in $\mu \log T_0$ space. Formally
	\[
	\Ckt[T_0,\mu \cdot \log T_0,n_0, \sqrt{T_0} \cdot n_0 ](\TurM_{\HSG_{0}}) = C_{\HSG_0}.
	\]
	
	Let $\tau \in \N$ be a large enough constant such that the running time of $A_Q$ on $n$-bit inputs is bounded by $n^{\tau /3}$. 
	
	We will make sure all $\HSG_i$ has exactly $\sqrt{T_i}$ elements. (This is satisfied for $i = 0$ since $T_0 = 2^{2 n _0}$.)
	
	Now, for each $i \in [t]$, we will define a Turing machine $\TurM_{\HSG_i}$ such that 
	\[
	\Ckt[T_i,\mu \cdot \log T_i,n_i, \sqrt{T_i} \cdot n_i ](\TurM_{\HSG_{i}}) = C_{\HSG_i},
	\]
	where $C_{\HSG_i}$ has depth at most $d_i$. We will also ensure the invariance that $|\TurM_{\HSG_i}| \le \mu \cdot \log T_i$. By our choice of $\mu$, the above is satisfied when $i = 0$. The machine $\TurM_{\HSG_i}$ is defined in two steps: In the first step we define a machine $\TurM_{\BF_{i-1}}$ describing the circuit $\BF_{i-1}$, and in the second step we plug $\TurM_{\BF_{i-1}}$ in \autoref{thm:HSG-Chen-Tell-poly} to obtain the machine $\TurM_{\HSG_i}$.

	\paragraph*{A Turing machine $\TurM_{\BF_{i-1}}$ for $\BF_{i-1}$.}
	We first define a Turing machine $\TurM_{\BF_{i-1}}$ such that $\TurM_{\BF_{i-1}}(1^{n_{i-1}})$ outputs a circuit for the algorithm $\BF_{i-1}$. Recall that $\BF_{i-1}$ works as follows: Enumerate every element in $\HSG_{i-1} \subseteq \{0,1\}^{n_{i-1}}$ and output the first element that is in $Q_{n_{i-1}}$; if no such element exists, then $\BF_{i-1}(n)$ outputs $\bot$;
	
	Using the assumed polynomial-time algorithm $A_Q$ for deciding membership in $Q$, we first construct a Turing machine $\TurM_{\sf test}$ with description size $\mu$ such that
	\[
		C_{\sf test} = \Ckt\mleft[T_{i-1} \cdot (n_{i-1})^{\tau/2}, \mu \cdot \log T_{i-1}, \sqrt{T_{i-1}} \cdot n_{i-1} ,n_{i-1}\mright](\TurM_{\sf test})
	\]
	has depth $(n_{i-1})^{\tau/2}$, takes a list of $(T_{i-1})^{1/2}$ strings from $\bs{n_{i-1}}$, and outputs the lexicographically first one in $Q_{n_{i-1}}$ (if no such string exists, outputs $\bot$ instead). 
	
	Applying~\autoref{fact:comp-of-circuits} to compose $C_{\HSG_{i-1}}$ and $C_{\sf test}$, we obtain the desired Turing machine $\TM_{\BF_{i-1}}$ that constructs a circuit $C_{\BF_{i-1}}$ computing $\BF_{i-1}$. Noting that $\mu$ is sufficiently large, we have that
	$\TurM_{\BF_{i-1}}$ takes 
	\[
	2 \cdot \left( \descTM{\TurM_{\HSG_{i-1}}} + \mu + \log n_{i-1} + \log T_{i-1} \right) \le 3\mu \cdot \log T_{i-1}
	\]
	bits to describe and uses 
	\[
	2 \cdot \left(\mu \cdot \log T_{i-1} + \mu \cdot \log T_{i-1} + \log T_{i-1}\right) + \mu \le 5 \mu \cdot \log T_{i-1}
	\] space. We now set $T'_{i-1} = T_{i-1} \cdot n_{i-1}^\tau$ and $d'_{i-1} = d_{i-1} + n_{i-1}^{\tau}$, and we have
	\[
	\Ckt\mleft[T'_{i-1},5 \mu \cdot \log T_{i-1},n_{i-1},n_{i-1}\mright](\TurM_{\BF_{i-1}}) = C_{\BF_{i-1}},
	\]
	where $C_{\BF_{i-1}}$ has depth at most $d'_{i-1}$.

	\paragraph*{The Turing machine $\TurM_{\HSG_{i}}$ for $\HSG_i$.}
	
	Recall that $\HSG_i$ is defined as the hitting set $\HSGCT$ of~\autoref{thm:HSG-Chen-Tell-poly} constructed with the circuit $\BF_{i-1}$ and output length $n_i$ in the informal argument. We now formally define $\HSG_i$ as the hitting set
	\[
	\HSGCT_{n_{i-1},T'_{i-1},d'_{i-1},n_i,\kappa,\rho}\mleft(\TurM_{\BF_{i-1}}\mright).
	\]
	
	To apply~\autoref{thm:HSG-Chen-Tell-poly}, we first need to ensure that
	\[
        5 \mu \cdot \log T_{i-1} \le \kappa \log T'_{i-1},
	\]
	which is satisfied by setting $\kappa \ge 5 \mu$. We also need to ensure that
	\begin{equation}\label{eq:cond}
	n_{i-1} \le T'_{i-1},~~d'_{i-1} \le T'_{i-1},~~\text{and}~~c \cdot \log T'_{i-1} \le n_i \le (T'_{i-1})^{1/(c\rho)}.
	\end{equation}

	By~\autoref{thm:HSG-Chen-Tell-poly}, we know that
	\[
	\TurM_{\HSG_i} = \algoCT_{n_{i-1},T'_{i-1},d'_{i-1},n_i,\kappa,\rho}\mleft(\TurM_{\BF_{i-1}}\mright)
	\]
	describes a $T_{i}$-size, $d_{i}$-depth circuit $C_{\HSG_i}$ such that $C_{\HSG_i}(1^{n_{i-1}})$ computes $\HSG_i$. Moreover, $\TurM_{\HSG_i}$ takes $c \cdot \kappa \cdot \log T'_{i-1} \le \mu \cdot \log T_i$ space and $c \cdot \log T_i$ bits to describe, where
	\[
	T_{i} = (T'_{i-1})^{c \cdot \kappa}~~\text{and}~~d_{i} = c \cdot ( d'_{i-1} \log T'_{i-1} + \kappa^2 \cdot \log^2 T'_{i-1}) + n_i^c.
	\]
	Formally, we have
	\[
	C_{\HSG_{i}} = \Ckt[T_i, \mu \cdot \log T_i,n_{i},\sqrt{T_i} \cdot n_i](\TurM_{\HSG_i})
	\]
	as desired. Our invariance on $\descTM{\TurM_{\HSG_{i}}}$ is satisfied by setting $\mu > c$. %
	
	\paragraph*{Analysis of $T_i$ and $d_i$ and justification of~\eqref{eq:cond}.} We set $t$ to be the first integer such that 
	\[
	n_{t+1} > T_t^{1/(c\rho)}.
	\]
	In the following we first show that $t \le \log n_0$.
		
		We first analyze the growth of $T_i$ and $\TBF_{i}$. For every $i < t$, by our choice of $t$, we have that $n_i < n_{i+1} \le T_i^{1/(c\rho)} < T_i$ and hence $\TBF_{i} = T_{i} \cdot n_{i}^{\tau} \le T_{i}^{\tau + 1}$. Then, from $T_{i+1} = (\TBF_{i})^{c \cdot \kappa}$, we have $T_{i+1} \le T_i^{c \cdot (\tau+1) \cdot \kappa}$ and consequently $\log T_{i+1} \le c \cdot (\tau+1) \cdot \kappa \cdot \log T_{i}$. Letting $\lambda = c \cdot (\tau+1) \cdot \kappa$, we have
		\[
		\log T_i \le \lambda^{i} \cdot \log T_{0} = \lambda^i \cdot 2 n_0
		\]
		for every $i \le t$.
		
		Recall that $n_i = n_{i-1}^\beta$, we have $\log n_i = \beta^i \cdot \log n_0$. For $T_t < n_t$ to hold, we only need to ensure the following:
		\begin{align*}
			&\, \lambda^{i} \cdot 2 n_0 < \beta^i \cdot \log n_0\\
			\iff&\, 2 n_0 / \log n_0 < (\beta/\lambda)^i.
		\end{align*}
	
		Now we will set $\beta \ge 100 \lambda$. Let $\bar{t} \le \log n_0$ be the first integer satisfying the above. We claim that $t \le \bar{t}$. Since otherwise $\bar{t} < t$, and we would have $n_{\bar{t}} > T_{\bar{t}}$ (which certainly implies $n_{\bar{t}+1} > T_{\bar{t}}^{1/(c\rho)}$) by our choice of $\bar{t}$. This contradicts our choice of $t$. Therefore, we have established that $t \le \log n_0$.
	
	Now we turn to analyze $d_i$ for $i \le t$. Note that $d_0 = 2 n_0$, and for $i \ge 1$, we have
	\[
	d_i = O\mleft(  (d_{i-1} + n_{i-1}^{\tau}) \cdot \log T'_{i-1} + \log^2 T'_{i-1}  \mright) + n_i^c.
	\]
	
	We will show that for every $i < t$, $d_{i} \le 2 n_{i}^c$. Clearly this holds for $i = 0$.
	
	Since $\log T'_{i-1} \le \log T_{i-1} + O(\log n_{i-1}) \le \lambda^{i-1} \cdot 2n_0 + O(\log n_{i-1}) \le n_{i-1}$ (recall here that $n_{i-1} = (n_0)^{\beta^{i-1}}$ and $\beta = 100 \lambda$), we have 
	\[
	d_i \le O\mleft(  (n_{i-1} + n_{i-1}^{\tau}) \cdot n_{i-1} + n_{i-1}^2  \mright) + n_i^c.
	\]
	
	We can set $\beta$ large enough so that $d_i \le (n_{i-1})^{\beta} + n_i^c \leq 2 \cdot n_i^c$. From definition, we also have $d'_{i} \le 2 n_i^c + n_i^{\tau}$ for every $i < t$. 
	
	Now we are ready to justify the conditions from~\eqref{eq:cond} are satisfied for $i \in [t]$. By our choice of $t$ and the definition of $T'_{i-1}$, we have $n_{i-1} \le T_{i-1} \le T'_{i-1}$. To see $d'_{i-1} \le T'_{i-1}$ holds, recall that $T'_{i-1} = T_{i-1} \cdot n_{i-1}^\tau$, and we have $d'_{i-1} \le 2 n_{i-1}^c + n_{i-1}^\tau \le T_{i-1} \cdot n_{i-1}^{\tau} = T_{i-1}'$ by setting $\tau > c$. We also have that $c\log T'_{i-1} = c(\log T_{i-1} + \tau \log n_{i-1}) = c(\lambda^i \cdot 2n_0 + \tau \log n_{i-1}) < n_i$ since $n_0 < (n_i)^{1/\beta}$ and $\lambda^i \le \log_{n_0}n_i$. Finally, by our choice of $t$, we have $n_{i} \le T_{i-1}^{1/(c\rho)} < \left(T_{i-1}'\right)^{1/(c\rho)}$.
	
	\paragraph*{Informal argument of the correctness.} We first give a somewhat informal argument below, and then give the precise argument later.
	
	We will argue that for every $\ell \in \N$, there exists an $i \in \{0,1,\dotsc, \tl\}$ that our polynomial-time pseudodeterministic algorithm for constructing an element from $Q$ works on input length $\nl_i$.
	
	Let $i \geq 0$ be the largest integer such that $\HSG_i \subseteq \{0,1\}^{n_i}$ is a hitting set of $Q_{n_i}$. (Note that such $i$ exists, since $\HSG_0 = \{0,1\}^{n_0}$ is a hitting set of $Q_{n_0}$.) If $i = t$, then we can simply run $\BF_{t}$ to obtain an element in $Q_{n_t}$ deterministically. Note that this takes time $\poly(T_{t}) = \poly(n_t)$, since by our choice of $t$, $T_t \le n_t^{c \cdot \beta \cdot \rho}$. 
		
	Otherwise, we have $i < t$. In this case, we know that $Q_{n_{i+1}}$ avoids the hitting set $\HSG_{i+1}$ (here we use the fact that $Q_{n_{i+1}}$ accepts more than an $n_{i+1}^{-\rho}$ fraction of strings from $\bs{n_{i+1}}$). By the reconstruction part of~\autoref{thm:HSG-Chen-Tell-poly}, there is a $\poly(n_{i+1}) \cdot \dBF_{i}$ randomized time algorithm that simulates $\BF_{i}$ with probability at least $1 - 2^{n_{i+1}}$. Since $\HSG_i$ is a hitting set for $Q_{n_i}$, this gives us a pseudodeterministic algorithm with $\poly(n_{i+1})$ time that finds a canonical element in $Q_{n_i}$. Since $n_{i+1} = \poly(n_i)$, our pseudodeterministic algorithm runs in polynomial time.

	\paragraph*{Formal description of the algorithm $\bm{B}$.} First, note that by our choice of $t$ and $\beta$, it holds that $n_0^{(\ell + 1)} > n_{t^{(\ell)}}^{(\ell)}$. On an input length $n \in \posN$, our algorithm $B$ is defined as follows:
	
	\begin{enumerate}
		\item Given input $1^n$ for $n \in \posN$.
		
		\item Compute the largest $\ell \in \N$ such that $n_0^{(\ell)} \le n$, then compute the largest $i$ such that $n_i^{(\ell)} \le n$. Output $\bot$ and abort immediately if $n_i^{(\ell)} \ne n$. From now on we use $n_i,T_i,d_i$, etc.~to denote $n_i^{(\ell)},T_i^{(\ell)},d_i^{(\ell)}$, etc. 
		
		\item For every $j \in \{0,1,\dotsc,i\}$, compute $T_j,T'_j,d_j,d'_j,\TurM_{\HSG_{j}},\TurM_{\BF_{j}}$. There are two cases:
		
		\begin{itemize}
			\item Case I: $n_{i+1} \le T_{i}^{1/(c\rho)}$: In this case, we have that $i < t$. Run 
			\[
			\left(\ReconCT\right)_{n_{i},T'_{i},d'_{i},n_{i+1},\kappa,\rho}^{Q_{n_{i+1}}}(\TurM_{\BF_{i}})
			\]
			and set $z_n$ be its output.
			
			\item Case II: $n_{i+1} > T_i^{1/(c\rho)}$: In this case, we have that $t \le i$. Compute $t$ first (recall that $t$ is the first integer such that $n_{t+1} > T_t^{1/(c\rho)}$). Output $\bot$ and abort immediately if $i > t$. Otherwise, construct $C_{\BF_{i}}$ from $\TurM_{\BF_{i}}$ and set $z_n = C_{\BF_{i}}(1^n)$.
		\end{itemize}
	
		\item Output $z_n$ if $A_Q(z_n) = 1$ and $\bot$ otherwise.
		
	\end{enumerate}

	From our choice of parameters and~\autoref{thm:HSG-Chen-Tell-poly}, the algorithm $B$ runs in $\poly(n)$ time.

	\paragraph*{Analysis of the algorithm $\bm{B}$.} Finally we show that the algorithm $B$ satisfies our requirements. We call an input length $n \in \posN$ \emph{valid} if there exist $\ell \in \N$ and $i \in \{0,\dotsc,\tl\}$ such that $n = \nl_i$, and we call $n$ \emph{invalid} otherwise. (By our choice of parameters, such pair $(\ell,i)$ is unique for a valid $n$.) For every $n \in \posN$, let $y_n$ be the lexicographically first element in $Q_n$.
	
	For every invalid $n \in \posN$, we simply set $x_n = y_n$. For every valid $n \in \posN$, we set $x_n$ as follows:
	\begin{align*}
		x_n =
		\begin{cases}
			C_{\BF_{i}}(1^{n_i}), & \text{if } C_{\BF_{i}}(1^{n_{i}}) \in Q_{n_i}, \\
			y_n, & \text{if } \text{otherwise}.
		\end{cases}
	\end{align*}

	We first observe that for all invalid $n \in \posN$, it holds that $B(1^n) = \bot$ with probability $1$. Now we are ready to show that for every $n \in \posN$, $\Pr_{B}[B(1^n) \notin \{ x_n,\bot \}] \le 2^{-n}$. Clearly we only need to consider valid $n$.

	Fix a valid $n \in \posN$. From the soundness of the reconstruction part of~\autoref{thm:HSG-Chen-Tell-poly}, it follows that $z_n \in \{ C_{\BF_{i}}(1^n),\bot \}$ with probability at least $1 - 2^{-n}$ (if $i = t$, then $z_n = C_{\BF_{i}}(1^n)$ with probability $1$). If $C_{\BF_{i}}(1^{n_{i}}) \in Q_{n_i}$, then $x_n = C_{\BF_i}(1^{n_i})$ and $z_n \in \{x_n, \bot\}$ with high probability; otherwise we have $z_n = \bot$. In both cases the soundness of $B$ holds.
	
	Next, we show that for infinitely many $n \in \posN$, we have $\Pr_{B}[B(1^n) = x_n] \ge 1 - 2^{-n}$. Following the informal argument, for every $\ell \in \N$, let $i \geq 0$ be the largest integer such that $\HSG_i \subseteq \{0,1\}^{\nl_i}$ is a hitting set of $Q_{\nl_i}$. Letting $n = \nl_i$, we will show that $B(1^n)$ outputs $x_{n}$ with probability at least $1 - 2^{-n}$, which would finish the proof.
	
	If $i =t$, since $\HSG_i$ is a hitting set for $Q_{n}$, it follows that $z_n = C_{\BF_{i}}(1^n) \in Q_n$, and we have $B(1^n) = x_n$ with probability $1$. If $i < t$, we know that $Q_{n_{i+1}}$ $(1/n_{i+1}^{\rho})$-avoids the hitting set $\HSG_{i+1}$. By the completeness of the reconstruction part of~\autoref{thm:HSG-Chen-Tell-poly}, we have that $z_n = 	\left(\ReconCT\right)_{n_{i},T'_{i},d'_{i},n_{i+1},\kappa,\rho}^{Q_{n_{i+1}}}(\TurM_{\BF_{i}})$ equals $C_{\BF_{i}}(1^n)$ with probability at least $1 - 2^{-n}$. Moreover, in this case, since $\HSG_i$ is a hitting set of $Q_{n}$, we know $z_n \in Q_n$ and $z_n = x_n$, which completes the proof.
\end{proof}

Let $B$ be the algorithm given by \autoref{theo:intro_main}. We note that, by using $1$ bit of advice to encode if a given input length $n$ satisfies $\Pr_B[B(1^n) = x_n] \ge 1 - 2^{-n}$, we can obtain an efficient algorithm that outputs a canonical answer with high probability (\ie, satisfies the promise of a pseudodeterministic algorithm) \emph{on all input lengths} and is correct on infinitely many of them. We state the result below as it might be useful in future work.

\begin{corollary}[Pseudodeterministic Polynomial-Time Construction with $1$ Bit of Advice that Succeeds Infinitely Often]\label{cor:advicebit}
	Let $Q$ be a dense and easy language. There exist a polynomial-time probabilistic algorithm $A$ and a sequence of advice bits $\{\alpha_i\in\{0,1\}\}_{i\in\posN}$ such that
 \begin{itemize}
     \item for all $n\in\posN$, $A(1^n, \alpha_n)$ outputs a canonical $x_n\in\{0,1\}^n$ with probability at least $1-2^{-n}$, and
     \item for infinitely many $n\in\posN$, $x_n\in Q\cap\{0,1\}^n$.
 \end{itemize}
\end{corollary}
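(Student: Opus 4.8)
The plan is to obtain the corollary as an essentially immediate consequence of \autoref{theo:intro_main}, using the single advice bit to record whether the current input length is one of the (infinitely many) lengths on which the algorithm $B$ from \autoref{theo:intro_main} is guaranteed to hit $Q$. Let $B$ and $\{x_n\}_{n\in\posN}$ be the algorithm and the sequence of strings in $Q$ provided by \autoref{theo:intro_main}. I would set $\alpha_n \eqdef 1$ precisely when $\Pr_B[B(1^n) = x_n] \ge 1 - 2^{-n}$, and $\alpha_n \eqdef 0$ otherwise; the second item of \autoref{theo:intro_main} then guarantees $\alpha_n = 1$ for infinitely many $n$.

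I would then define $A(1^n, b)$ as follows: if $b = 0$, output the default string $0^n$; if $b = 1$, simulate $B(1^n)$ and output its result if that result lies in $\{0,1\}^n$, and output $0^n$ otherwise. Since $B$ runs in polynomial time, so does $A$. To verify the pseudodeterminism requirement, I would declare the canonical output to be $0^n$ on those $n$ with $\alpha_n = 0$ (which $A$ produces with probability $1$) and to be $x_n$ on those $n$ with $\alpha_n = 1$. For the latter, the first item of \autoref{theo:intro_main} says $B(1^n)$ is always in $\{x_n, \bot\}$, so $A(1^n, \alpha_n) \in \{x_n, 0^n\} \subseteq \{0,1\}^n$, and by the definition of $\alpha_n$ this output equals $x_n$ with probability at least $1 - 2^{-n}$. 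Hence $A(1^n, \alpha_n)$ outputs a canonical $n$-bit string with probability at least $1 - 2^{-n}$ for every $n$.

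For the second bullet, on each of the infinitely many lengths with $\alpha_n = 1$ the canonical output is $x_n$, and $x_n \in Q \cap \{0,1\}^n$ since $\{x_n\}$ is, by \autoref{theo:intro_main}, a sequence of strings in $Q$. I do not anticipate any real obstacle here; the only point requiring mild care is that $B$ may output $\bot$ even on a good length (with probability at most $2^{-n}$), which forces us to define the canonical output as $x_n$ itself rather than ``the output of $B$'' and to include the $0^n$ fallback so that $A$ always returns a string in $\{0,1\}^n$. The advice bit is used only to sidestep the fact that we have no efficient way to decide, at runtime, whether the given length is good.
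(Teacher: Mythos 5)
Your proposal is correct and matches the paper's approach exactly: the paper states (without a formal proof) that the advice bit encodes whether $\Pr_B[B(1^n)=x_n]\ge 1-2^{-n}$, and your elaboration — with the $0^n$ fallback on bad lengths and the observation that on good lengths $B$ already outputs $x_n$ with the required probability — is the natural way to make this precise.
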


\def\desc{\mathsf{desc}}
\def\size{\mathsf{size}}
\def\depth{\mathsf{depth}}

\section{Modified Shaltiel--Umans Generator with Uniform Learning Reconstruction}\label{sec:SU-Gen}

\newcommand{\SU}{\mathsf{SU}}
\newcommand{\HSU}{\mathsf{HSU}}
\newcommand{\RSU}{\mathsf{RSU}}

In order to prove \autoref{thm:HSG-Chen-Tell-poly}, we will need the following result.

\begin{restatable}[A HSG with Uniform Learning Reconstruction]{theorem}{ThmSUMain}\label{theo:SU-main}
    There exist an algorithm $\mathsf{H}$ and a probabilistic oracle algorithm $\mathsf{R}^{(-)}$ such that the following holds.	
    Let $p$ be a nice power of $2$, $m$ be a power of $3$, $\Delta, M\in\mathbb{N}$ with $p > \Delta^2m^7M^9$, and let $\vec{\ell} \eqdef (p,m,M,\Delta)$ be the input parameters.
	\begin{itemize}
		\item The generator $\mathsf{H}_{\vec{\ell}}$ takes as input a polynomial $P \colon \F_{p}^m \to \F_p$ with total degree at most $\Delta$, specified as a list of $p^m$ evaluations of $P$ on all points from $\F_p^m$ in the lexicographic order, and outputs a set of strings in $\{0,1\}^M$. Moreover, $\mathsf{H}_{\vec{\ell}}$ can be implemented by a logspace-uniform circuit of size $\poly(p^m)$ and depth $\poly(\log p, m, M)$. 
		\item The reconstruction algorithm $\mathsf{R}^{D,P}_{\vec{\ell}}$, where $\bfunc{D}{M}{}$ is any function that $(1/M)$-avoids $\mathsf{H}_{\vec{\ell}}(P)$, runs in time $\poly(p,m)$ and outputs, with probability at least $1 - 1/p^m$, a $D$-oracle circuit that computes $P$.
	\end{itemize}
\end{restatable}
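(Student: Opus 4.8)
My plan is to build the generator $\mathsf{H}$ and the reconstruction $\mathsf{R}$ by modifying the Shaltiel--Umans hitting-set generator \cite{ShaltielU05} (whose guarantees I use in the form stated as \autoref{t:HSU}) so as to remove the two obstacles to a \emph{uniform learning} reconstruction identified in \autoref{sec: overview of SU}: the original reconstruction needs a generator matrix $A\in\F_p^{m\times m}$ as advice, and even given $A$ it only recovers a $D$-oracle circuit for the map $i\mapsto P(A^i\cdot\vec 1)$ rather than for $\vec x\mapsto P(\vec x)$. I will handle the first obstacle by enumerating a small, efficiently constructible set of candidate generator matrices, and the second by bundling into the generator a cryptographic pseudorandom generator built from the permutation $i\mapsto A^i\cdot\vec 1$, so that an avoider of the bundled generator also yields a discrete-logarithm inverter. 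Since $p$ is a nice power of $2$ and $m$ is a power of $3$, both $\F_p$ and $\F_{p^m}$ have the explicit irreducible-polynomial representations fixed in the preliminaries, so all field arithmetic, matrix powering, and low-degree manipulations below are carried out by logspace-uniform, low-depth circuits; keeping track of this yields the claimed uniformity and depth bounds.

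\textbf{The generator.} Using a deterministic polynomial-time procedure that outputs a short list of field elements guaranteed to contain a primitive root of $\F_{p^m}$ (e.g.\ \cite{Shou92}), I first build a set $S$ of $\poly(\log p,m)$ matrices over $\F_p$, constructible in $\poly(p,m)$ time and $O(\log(p^m))$ space, such that at least one $A^\star\in S$ is a \emph{generator matrix}, i.e.\ represents multiplication by a primitive root of $\F_{p^m}$ under the fixed identification $\F_p^m\cong\F_{p^m}$ (so $\{(A^\star)^i\vec 1:1\le i<p^m\}=\F_p^m\setminus\{\vec 0\}$). The generator $\mathsf{H}_{\vec\ell}(P)$ then outputs the union over all $A\in S$ of (i) the string set produced by the (modified) Shaltiel--Umans generator on input the polynomial $P$ and the matrix $A$, and (ii) the $M$-bit output of a Blum--Micali/Yao-style pseudorandom generator $G_A$ \cite{BlumM84,Yao82} built from the permutation $\pi_A\colon i\mapsto A^i\vec 1$ of $\F_p^m\setminus\{\vec 0\}$ using a Goldreich--Levin hardcore bit, both families truncated or padded to length $M$. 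As $\pi_A$ is computable by repeated squaring of $A$ in $\poly(m,\log p)$ time and low depth, and $P$ is handed to $\mathsf{H}$ as its full $p^m$-entry evaluation table, $\mathsf{H}_{\vec\ell}$ is computable by a logspace-uniform circuit of size $\poly(p^m)$ and depth $\poly(\log p,m,M)$. The hypothesis $p>\Delta^2m^7M^9$ is the quantitative field-size condition under which the Shaltiel--Umans analysis gives family (i) its hitting and reconstruction guarantees for degree-$\le\Delta$ input polynomials and output length $M$ when $A$ is a generator matrix.

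\textbf{The reconstruction.} Given an oracle $\bfunc{D}{M}{}$ that $(1/M)$-avoids $\mathsf{H}_{\vec\ell}(P)$, note that $D$ rejects every string in families (i) and (ii) for \emph{every} $A\in S$ --- in particular for $A^\star$ --- while accepting a $\ge 1/M$ fraction of $\bs{M}$. The algorithm $\mathsf{R}^{D,P}_{\vec\ell}$ iterates over $A\in S$ and, for each, does the following. First, it runs the Shaltiel--Umans reconstruction of \autoref{t:HSU} with advice $A$, distinguisher $D$, and oracle $P$, obtaining a vector $\vec v\in\F_p^m\setminus\{\vec 0\}$ and a $D$-oracle circuit $C_A$ that agrees with $i\mapsto P(A^i\vec v)$ on a $1-\negl$ fraction of indices (this is meaningful when $A$ is a generator matrix). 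Second, it runs Yao's hybrid argument plus Goldreich--Levin on the fact that $D$ $(1/M)$-distinguishes the $M$-bit output of $G_A$ from uniform, extracting --- with $D$ as oracle --- an algorithm that inverts $\pi_A$ on an inverse-polynomial fraction of inputs; amplifying this via the random self-reducibility of discrete logarithm in $\F_{p^m}^\times$ and verifying each candidate exponent by one exponentiation yields a $D$-oracle circuit $C'_A$ computing $\pi_A^{-1}$ on all of $\F_p^m\setminus\{\vec 0\}$. Third, it forms a candidate circuit $E_A$ for $P$: on $\vec 0$ output the hard-wired value $P(\vec 0)$; on $\vec x\ne\vec 0$ compute $j=(C'_A(\vec x)-C'_A(\vec v))\bmod(p^m-1)$ and output $C_A(j)$, so that if the first two steps succeeded then $E_{A^\star}$ agrees with $P$ on a $1-\negl$ fraction of $\F_p^m$. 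Fourth, it tests by $\poly(\Delta,m,\log p)$ random queries to $P$ whether $E_A$ agrees with $P$ on at least a $1-1/(6(\Delta+1))$ fraction; for the first $A$ passing, it self-corrects $E_A$ using the low degree of $P$ (valid since the true agreement is then $>1-1/(3(\Delta+1))$), amplifies the corrector's success probability below $p^{-m}$, fixes its internal randomness by a union bound over $\F_p^m$, and outputs the resulting deterministic $D$-oracle circuit. Correctness follows because $A^\star$ passes the test with high probability, so some $A$ is chosen, and for any chosen $A$ the sampling test certifies (with high probability) that $E_A$ is genuinely close to $P$, whence the self-corrected circuit computes $P$ everywhere; all steps run in $\poly(p,m)$ time (using $M<p$), and the total failure probability is at most $1/p^m$ after routine amplification.

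\textbf{The main obstacle.} I expect the crux to be the second move of the reconstruction and its interface with the first: one must set up the cryptographic generator $G_A$ so that an \emph{avoider} $D$ (not merely a generic distinguisher) of the bundled generator yields, through Yao's hybrid argument and Goldreich--Levin, a predictor with inverse-polynomial advantage; then convert this into an everywhere-correct $D$-oracle inverter of $\pi_A$ via the random self-reducibility of discrete logarithm, all within $\poly(p,m)$ time and circuit size; and finally reconcile the differing ``base points'', since the Shaltiel--Umans reconstruction returns a circuit for $i\mapsto P(A^i\vec v)$ with $\vec v$ chosen \emph{by that procedure}, whereas $G_A$ commits to the reference vector $\vec 1$ in advance --- which is exactly why the index shift $C'_A(\vec x)-C'_A(\vec v)$ is needed in step three. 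A secondary but unavoidable difficulty is tracking the logspace-uniformity and depth of $\mathsf{H}_{\vec\ell}$ all the way down to the explicit field representations, since this generator is invoked repeatedly inside the recursive construction underlying \autoref{thm:HSG-Chen-Tell-poly}.
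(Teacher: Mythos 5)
Your proposal matches the paper's proof essentially point for point: enumerate candidate generator matrices via Shoup's algorithm, bundle the Shaltiel--Umans generator with a Blum--Micali/Yao-style cryptographic PRG built from the permutation $i\mapsto A^i\vec 1$ so that an avoider yields both the SU reconstruction and a discrete-log inverter (amplified via random self-reducibility), combine them by exponent arithmetic to get a candidate circuit for $P$ for each $A\in S$, then select a close candidate by sampling against $P$ and self-correct using low degree. The only cosmetic differences are the precise closeness threshold and the explicit case split for the exponent shift, both of which are immaterial.
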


The rest of this section is dedicated to the proof of \autoref{theo:SU-main}.

\subsection{Technical Tools}

\subsubsection{Error-Correcting Codes}

\begin{theorem}[{List-Decoding Reed--Solomon Codes \cite{Sudan97}}]\label{thm: list-decoding Reed-Solomon}
    Let $b$, $a$, and $d$ be integers such that $a > \sqrt{2d\cdot b}$. Given $b$ distinct pairs $(x_i, y_i)$ in a field $\F$, there are at most $2\cdot b/a$ polynomials $g$ of degree $d$ such that $g(x_i) = y_i$ for at least $a$ pairs. Furthermore, a list of all such polynomials can be computed in time $\poly(b, \log |\F|)$.
\end{theorem}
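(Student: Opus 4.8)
The plan is to run Sudan's bivariate interpolation scheme. We may assume the $x_i$ are distinct: whenever $g$ is a function it agrees with at most one pair sharing a given first coordinate, so among the $\ge a$ indices witnessing agreement with a candidate $g$ the values $x_i$ are automatically distinct. Set the $Y$-degree budget $\ell \eqdef \lceil 2b/a\rceil - 1$; since the hypothesis $a > \sqrt{2db}$ gives $2b/a < a/d$, this $\ell$ is a nonnegative integer with $d\ell \le a-1$.

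\emph{Step 1 (Interpolation).} First I would look for a nonzero $Q(X,Y) = \sum_{k=0}^{\ell} Q_k(X)\,Y^k \in \F[X,Y]$ with $\deg Q_k \le a-1-dk$ for every $k$, subject to the $b$ homogeneous linear constraints $Q(x_i,y_i)=0$. The number of coefficients is $\sum_{k=0}^{\ell}(a-dk) = (\ell+1)\bigl(a - d\ell/2\bigr)$, and using $\ell+1 \ge 2b/a$ together with $d\ell/2 < db/a < a/2$ (the last inequality being exactly $a^2 > 2db$) this is $> (2b/a)(a/2) = b$. So there are strictly more unknowns than equations, a nonzero $Q$ exists, and it is found by Gaussian elimination on an $O(b)\times O(b)$ matrix over $\F$, i.e., in $\poly(b,\log|\F|)$ time.

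\emph{Step 2 (Heavy codewords divide $Q$).} For any $g$ with $\deg g \le d$ agreeing with at least $a$ of the pairs, the univariate polynomial $R_g(X) \eqdef Q(X,g(X)) = \sum_k Q_k(X)\,g(X)^k$ has degree at most $\max_k(\deg Q_k + dk) \le a-1$, yet $R_g(x_i) = Q(x_i,g(x_i)) = Q(x_i,y_i) = 0$ at the (at least $a$, and distinct) indices $i$ with $g(x_i)=y_i$. A nonzero degree-$(a-1)$ polynomial cannot have $a$ roots, so $R_g \equiv 0$, i.e.\ $Y-g(X)$ divides $Q(X,Y)$ in $\F[X,Y]$. \emph{Step 3 (Counting and extraction).} Distinct such $g$ give distinct monic-in-$Y$ linear factors $Y-g(X)$ of $Q$, pairwise non-associate over $\F(X)$; since $\deg_Y Q \le \ell \le 2b/a$ there are at most $2b/a$ of them, which proves the bound (a fortiori for the $g$ of degree exactly $d$). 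To build the list I would factor $Q$ over $\F$ — equivalently, find all roots of $Q$, viewed as a polynomial in $Y$ over $\F(X)$, and keep the ones lying in $\F[X]$ — using a deterministic polynomial-time bivariate factorization / root-finding procedure; since $\deg Q = \poly(b)$ this costs $\poly(b,\log|\F|)$, and a final pass testing each candidate against the $b$ pairs discards those with fewer than $a$ agreements.

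\emph{Main obstacle.} Steps 1 and 2 are just linear algebra and the "at most $\deg$ many roots" principle, and the counting in Step 3 is immediate; the only substantive ingredient is the algorithmic claim that the $Y$-linear factors of $Q$ can be extracted in time $\poly(b,\log|\F|)$. Rather than reprove it I would invoke a standard efficient polynomial-factorization result over finite fields (e.g.\ reduce to univariate factorization via Berlekamp / Cantor--Zassenhaus, or use a dedicated bivariate root finder such as the Roth--Ruckenstein procedure); handling fields of small characteristic requires the deterministic variants. The one bookkeeping subtlety is choosing $\ell$ so that the monomial count strictly exceeds $b$ while simultaneously $\ell \le 2b/a$, and the choice $\ell=\lceil 2b/a\rceil-1$ achieves both, consuming the hypothesis $a>\sqrt{2db}$ in exactly one inequality.
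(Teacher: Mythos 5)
The paper does not prove this statement; it cites \cite{Sudan97}. Your proposal is a correct reconstruction of Sudan's original algorithm — the same bivariate interpolation with the weighted-degree budget $\deg Q_k \le a-1-dk$, the same coefficient count $(\ell+1)(a - d\ell/2) > b$ consuming $a^2>2db$, the observation that agreements for a fixed $g$ occur at distinct $x_i$, the root argument $R_g \equiv 0 \Rightarrow (Y-g)\mid Q$, and the $\deg_Y Q \le \ell < 2b/a$ count — so it matches the cited source, with your caveat about deterministic bivariate root-finding being the (standard, and in this paper immaterial, since the field has characteristic $2$) fine print in Sudan's ``poly time'' claim.
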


In particular, if $a = \alpha \cdot b$ for some $0 < \alpha \leq 1$, provided that $\alpha > \sqrt{2d/b}$, there are at most $O(1/\alpha)$ degree-$d$ polynomials that agree with an $\alpha$-fraction of the $b$ pairs.

\subsubsection{Generator Matrices}
\begin{definition}[Generator Matrices]
	Let $p$ be a power of $2$ and $m\in\mathbb{N}$. We say that $A\in\F_p^{m\times m}$ is a \emph{generator matrix} for $\F_p^m$ if $A$ is invertible, $A^{p^m-1}=I$, and $\{A^{i}\cdot \vec{v}\}_{1\leq i< p^m}=\F_p^m\setminus\{\vec{0}\}$ for any nonzero $\vec{v}\in\F_p^m$.\footnote{In fact, it is not hard to see that the third condition implies the first two. We include those two conditions in this definition as they will be useful later.}
\end{definition}

\begin{theorem}[\cite{Shou92}]\label{t:Sho92}
	Let $n\in\mathbb{N}$. Given any irreducible polynomial $f$ of degree $n$ over $\F_2$, one can deterministically construct in time $\poly(n)$ a set $S_{n}$ that contains at least one primitive root of the multiplicative group of $\mathbb{F}_{2}[\fx]/(f)$.
\end{theorem}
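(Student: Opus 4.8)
The plan is to reduce the construction of a primitive root of $\F \eqdef \F_{2}[\fx]/(f) \cong \F_{2^n}$ to controlling the prime divisors of $N \eqdef 2^n - 1$, while working around the fact that $N$ cannot be factored in $\poly(n)$ time. The two ideas are: (i) split the prime divisors of $N$ into a \emph{smooth} part, which is found and handled constructively, and a \emph{rough} part, which is handled by a density argument; and (ii) assemble a $\poly(n)$-size candidate set using the coprime-order product trick. Throughout I use that $g \in \F^\times$ is a primitive root iff $g^{N/\ell} \neq 1$ for every prime $\ell \mid N$, that the subgroup $H_\ell \eqdef \{g^\ell : g \in \F^\times\}$ of $\ell$-th powers has index $\ell$, and that arithmetic in $\F$ — including raising to exponents with $\poly(n)$ bits — costs $\poly(n)$.

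First I would compute, by trial division, the $B$-smooth part $N_{\mathrm{sm}}$ of $N$ together with its full prime factorization, where $B \eqdef n^{O(1)}$, and set $N_{\mathrm{lg}} \eqdef N/N_{\mathrm{sm}}$, all of whose prime factors exceed $B$. For each prime power $\ell^{e} \,\|\, N_{\mathrm{sm}}$ I would find an element $g_\ell$ with $g_\ell^{N/\ell} \neq 1$: since $H_\ell$ has index $\ell \le B = \poly(n)$, such a $g_\ell$ can be located deterministically by searching a $\poly(n)$-size structured family of candidates (for instance using a subfield $\F_{2^{k}} \subseteq \F$, $k \mid n$, on which the order-$\ell$ multiplicative character is nontrivial, or via deterministic polynomial factorization in the tractable regime $\ell \le \poly(n)$), and the defining condition $g^{N/\ell}\neq 1$ is directly checkable. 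Then $h_\ell \eqdef g_\ell^{N/\ell^{e}}$ has order exactly $\ell^{e}$, and $h_{\mathrm{sm}} \eqdef \prod_{\ell} h_\ell$ has order exactly $N_{\mathrm{sm}}$, since the factors have pairwise coprime orders.

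For the rough part, note that every prime $\ell \mid N_{\mathrm{lg}}$ exceeds $B = n^{O(1)}$ while $N_{\mathrm{lg}} \le 2^n$, so there are at most $O(n/\log n)$ of them and $\sum_{\ell \mid N_{\mathrm{lg}}} 1/\ell = o(1)$; consequently a $(1 - o(1))$-fraction of elements $g \in \F^\times$ satisfy $g^{N/\ell} \neq 1$ for every $\ell \mid N_{\mathrm{lg}}$ — call such $g$ \emph{rough-good} — and for rough-good $g$ the element $g^{N_{\mathrm{sm}}}$ has order exactly $N_{\mathrm{lg}}$. The job is then to produce a $\poly(n)$-size \emph{explicit} set $T \subseteq \F^\times$ guaranteed to contain a rough-good element; for this I would take a small structured family and invoke an equidistribution estimate — a perfect orthogonality bound on a multiplicative coset of a subfield, or a Weil-type character-sum bound, if necessary after embedding $\F$ into a larger field $\F_{2^{nm}}$ with a convenient divisor structure and passing back via the norm map — to show that $T$ cannot lie entirely inside the union $\bigcup_{\ell \mid N_{\mathrm{lg}}} H_\ell$ of large-index subgroups, hence meets the rough-good set. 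Finally I would output
\[
S \;\eqdef\; \bigl\{\, h_{\mathrm{sm}} \cdot g^{N_{\mathrm{sm}}} \;:\; g \in T \,\bigr\},
\]
which has $\poly(n)$ size; for the rough-good $g \in T$ the element $h_{\mathrm{sm}} \cdot g^{N_{\mathrm{sm}}}$ has order $N_{\mathrm{sm}} \cdot N_{\mathrm{lg}} = N$ since $\gcd(N_{\mathrm{sm}}, N_{\mathrm{lg}}) = 1$, \ie, it is a primitive root, so $S$ meets the requirement. Note that one is forced to output a set rather than a single element, since identifying the rough-good $g$ would require factoring $N_{\mathrm{lg}}$. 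Every step is deterministic and runs in $\poly(n)$ time.

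The hard part is the last one: producing the explicit $\poly(n)$-size set $T$ for the rough regime. A density/union-bound argument alone is insufficient — a small explicit set could a priori be contained in the (comparatively small) union $\bigcup_{\ell \mid N_{\mathrm{lg}}} H_\ell$ — so one genuinely needs an equidistribution statement relating an explicitly describable small set to the multiplicative subgroup structure of $\F^\times$, which is precisely where the finite-field machinery (cyclotomy, Gauss periods, subfield towers, or Weil bounds) must be deployed with care. A milder instance of the same difficulty already arises in the smooth regime, but there the polynomially bounded index $\ell \le B$ makes a direct search feasible.
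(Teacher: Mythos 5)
The paper does not prove this statement; \autoref{t:Sho92} is cited as a black box from \cite{Shou92}, so there is no in-paper proof to compare against. Assessed on its own terms, your proposal correctly identifies the overall scaffolding (a coprime-order product trick together with a sieve that treats small and large prime divisors of $2^n-1$ differently), but it has a genuine gap precisely where all the work lies, and you say so yourself: producing an explicit $\poly(n)$-size set $T$ whose image is provably equidistributed across the subgroups of $\ell$-th powers. This is not a detail to be filled in later --- it \emph{is} Shoup's theorem. Shoup's candidate set is (roughly) $\{f(\alpha) : f \in \F_2[\fx] \text{ of low degree}\}$ for $\alpha$ a root of the input polynomial, and the technical heart is a Weil/RH-for-function-fields bound on sums $\sum_f \chi(f(\alpha))$ over nontrivial multiplicative characters $\chi$, combined with a nontrivial sieving argument to keep the number of characters under control. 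Gesturing at ``a Weil-type character-sum bound, if necessary after embedding into a larger field'' does not discharge this.

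A secondary issue: you treat the smooth regime as the easy case (``a direct search feasible'' because $\ell \le B = n^{O(1)}$), but deterministically exhibiting a non-$\ell$-th power in $\F_{2^n}^\times$ is not obviously easier there. The subfield route you mention fails in general because the smallest subfield $\F_{2^k}$ with $\ell \mid 2^k-1$ has $k = \operatorname{ord}_\ell(2)$, which can be as large as $\ell-1 = n^{O(1)}$, making $|\F_{2^k}|$ superpolynomial; and deterministic root-finding/factoring over $\F_{2^n}$ is itself delicate and in some formulations circular with the problem of constructing non-residues. In Shoup's treatment the same character-sum machinery handles both regimes; the smooth/rough split affects the combinatorics of the sieve, not which technique is used. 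So your reduction is sound as a plan, but both halves lean on the one ingredient you did not supply.
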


We need the following lemma to deterministically construct generator matrices. Note that it is unclear how to deterministically construct a single generator matrix. Instead, we reduce the task of constructing such matrices to the task of constructing primitive roots of $\F_{p^m}$. Then, we invoke \autoref{t:Sho92} to construct a \emph{set} of matrices that contains at least one generator matrix. It turns out that this set of matrices suffices for our purposes.

\begin{lemma}\label{t:Sho92-matrix}
	Let $p$ be a nice power of $2$ and $m$ be a power of $3$. One can deterministically construct in time $\poly(\log p,m)$ a set of matrices in $\F_p^{m\times m}$ that contains at least one generator matrix for $\F_p^m$.
\end{lemma}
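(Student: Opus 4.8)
The plan is to reduce the construction of a generator matrix for $\F_p^m$ to the construction of a primitive root of the field $\F_{p^m}$, and then to apply \autoref{t:Sho92}. First, recall that we always represent $\F_q$ for $q = 2^{2\cdot 3^\lambda}$ via the irreducible polynomial $\fx^{2\cdot 3^\lambda} + \fx^{3^\lambda} + 1$. Since $p$ is a nice power of $2$, write $p = 2^{2\cdot 3^a}$, and since $m$ is a power of $3$, write $m = 3^b$; then $p^m = 2^{2\cdot 3^{a+b}}$ is again a nice power of $2$, so $\F_{p^m}$ has the standard representation $\F_2[\fx]/(\fx^{2\cdot 3^{a+b}} + \fx^{3^{a+b}} + 1)$, and moreover $\F_p$ embeds naturally as a subfield of $\F_{p^m}$ (as $[\F_{p^m} : \F_p] = m$). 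The key algebraic fact is that multiplication by a primitive root $g \in \F_{p^m}^\times$ is an $\F_p$-linear map on $\F_{p^m}$ viewed as an $m$-dimensional $\F_p$-vector space, and the matrix $A_g \in \F_p^{m\times m}$ of this map (in the standard basis $1, \fx, \dots, \fx^{m-1}$ of $\F_{p^m}$ over $\F_p$) is exactly a generator matrix: it is invertible, $A_g^{p^m-1} = I$ (since $g^{p^m-1}=1$), and for any nonzero $\vec v \in \F_p^m$, identifying $\vec v$ with a nonzero element $v \in \F_{p^m}$, we get $\{A_g^i \cdot \vec v\}_{1 \le i < p^m} = \{g^i v\}_{1 \le i < p^m} = \F_{p^m}^\times$ because $g$ generates the cyclic group $\F_{p^m}^\times$.

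Given this, the construction is as follows. The irreducible polynomial $f = \fx^{2\cdot 3^{a+b}} + \fx^{3^{a+b}} + 1$ of degree $n \eqdef 2\cdot 3^{a+b} = \log_2(p^m)$ over $\F_2$ can be written down in time $\poly(n) = \poly(\log p, m)$ (it is explicitly given). Apply \autoref{t:Sho92} with this $f$ to obtain, in time $\poly(n) = \poly(\log p, m)$, a set $S_n \subseteq \F_2[\fx]/(f) = \F_{p^m}$ containing at least one primitive root. Then, for each element $g \in S_n$, compute the matrix $A_g$ of the multiplication-by-$g$ map on $\F_{p^m}$ in the basis over $\F_p$: this amounts to computing $g \cdot \fx^{j\cdot(\log_2 p)}$ for $j = 0, \dots, m-1$ (the images of the basis vectors) and re-expressing each product in that basis, which is a handful of field multiplications and coordinate extractions in $\F_{p^m}$, doable in time $\poly(n)$ per element; note $|S_n| \le \poly(n)$, so the whole loop runs in time $\poly(\log p, m)$. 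Output the set $\{A_g : g \in S_n\}$. By the key fact above, whichever $g \in S_n$ is a primitive root yields an $A_g$ that is a generator matrix, so the output set contains at least one generator matrix.

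I expect the only mildly delicate point — the ``main obstacle'' in an otherwise routine argument — to be bookkeeping the representation compatibility: one must check that the standard representation of $\F_{p^m}$ as $\F_2[\fx]/(\fx^n + \fx^{n/2}+1)$ genuinely contains $\F_p = \F_2[\fx]/(\fx^{2\cdot 3^a} + \fx^{3^a}+1)$ as a subfield in a way that can be made explicit and computed in $\poly(\log p, m)$ time (so that "re-expressing a product in the $\F_p$-basis" is well-defined and efficient). Concretely one fixes an explicit embedding $\F_p \hookrightarrow \F_{p^m}$ (e.g. by finding the image of a generator of $\F_p$, which can be done by factoring or by a short search since $[\F_{p^m}:\F_p]=m$ is small relative to the ambient degree, or simply by noting $\F_p$ is the fixed field of the $p$-power Frobenius and computing a basis of that fixed space by linear algebra over $\F_2$), and then all coordinate computations are linear algebra over $\F_2$ of dimension $n = O(\log p \cdot m)$, comfortably within the claimed time bound. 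Everything else — invertibility, the order condition, and the orbit condition — is immediate from $g$ being a primitive root, so no further work is needed.
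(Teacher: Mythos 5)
Your approach is essentially the same as the paper's: reduce to constructing a primitive root of $\F_{p^m}$ via \autoref{t:Sho92}, identify $\F_{p^m}$ with $\F_p^m$ using the $\F_p$-basis $1,\fx,\dots,\fx^{m-1}$, and take the multiplication-by-$g$ matrix $A_g$ for each candidate $g$. The ``delicate point'' you flag is resolved in the paper by the explicit embedding $\fy \mapsto \fx^{m}$ (valid since $(\fx^m)^{2\cdot 3^\alpha}+(\fx^m)^{3^\alpha}+1 = \fx^{n}+\fx^{n/2}+1 = 0$ in $\F_{p^m}$), which makes the coordinate change a trivial regrouping of $\F_2$-coefficients with no need for factoring or Frobenius-fixed-field computations --- also note a small slip: with the basis $1,\fx,\dots,\fx^{m-1}$, the images to compute are $g\cdot\fx^{j}$ for $j=0,\dots,m-1$, not $g\cdot\fx^{j\log_2 p}$.
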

\begin{proof}[Proof Sketch]
    Roughly speaking, every primitive root of $\F_{p^m}$ corresponds to a generator matrix for $\F^m_p$, so the lemma is implied by \autoref{t:Sho92}.

    First, if we let $p=2^{2\cdot 3^{\alpha}}$ and $m=3^{\beta}$, where $\alpha,\beta\in\mathbb{N}$, then the fields $\F_p$ and $\F_{p^m}$ have explicit representations
	\[
	\F_{p^m} = \frac{\F_2[\fx]}{\mleft(\fx^{2\cdot 3^{\alpha+\beta}}+\fx^{3^{\alpha+\beta}}+1\mright)}\text{ \quad and\quad } 	\F_{p} = \frac{\F_2[\fy]}{\mleft(\fy^{2\cdot 3^{\alpha}}+\fy^{3^{\alpha}}+1\mright)}.
	\]

    Note that $\F_{p^m}$ can be viewed as a linear space over $\F_p$ of dimension $m$ (i.e., $\F_p^m$) by identifying $\fx^{3^\beta}$ with $\fy$. The (field) addition operation over $\F_{p^m}$ coincides with the (linear space) addition operation over $\F_p^m$. For every element $g\in\F_{p^m}$, multiplication by $g$ corresponds to a linear transformation over $\F_p^m$, i.e., there is a matrix $A_g$ computable in polynomial time given $g$ such that for every $a\in \F_{p^m}$, $ga$ (as the product of two elements in the field $\F_{p^m}$) is equal to $A_g\cdot a$ (as a matrix-vector product over the vector space $\F_p^m$). It is easy to see that if $g$ is a primitive root of the multiplication group $\F_{p^m}$, then $A_g$ is a generator matrix for $\F_p^m$. The lemma now follows from \autoref{t:Sho92}.
\end{proof}

\subsubsection{Random Self-Reducibility for Discrete Logarithm}

\begin{lemma}\label{l:discrete-log-rsr}
	There is a probabilistic polynomial-time oracle algorithm $\mathsf{DLCorr}^{(-)}$ such that the following holds. Let $p$ be a power of $2$, $m\in \mathbb{N}$, $\varepsilon>0$, $A$ be a generator matrix for $\F_p^m$, and let $g$ be any probabilistic procedure that satisfies 
	\[
		\Pr_{\vec{v}, g} \mleft[\textnormal{$g(\vec{v})$ outputs  $i\in [p^m-1]$ such that $A^i\cdot \vec{1}=\vec{v}$} \mright]\geq \varepsilon, 
	\]
    where the probability is over both a random $\vec{v}\gets \F_p^m\setminus \{\vec{0}\}$ and the internal randomness of $g$. Then for every $\vec{u}\in\F_{p}^m \setminus \{\vec{0}\}$, $\mathsf{DLCorr}^{g}(p,m,1^{\lceil 1/\varepsilon\rceil},A,\vec{u})$ outputs $\ell\in [p^m-1]$ such that $A^\ell\cdot \vec{1}=\vec{u}$ with probability at least $2/3$.
\end{lemma}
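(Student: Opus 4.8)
The plan is to implement the classical random self-reducibility of the discrete logarithm, transported to the bijection $f_A\colon i\mapsto A^i\cdot\vec{1}$ induced by the generator matrix $A$. First I would record two consequences of $A$ being a generator matrix for $\F_p^m$: (i) the map $f_A$, restricted to indices in $[p^m-1]=\{1,\dots,p^m-1\}$, is a \emph{bijection} onto $\F_p^m\setminus\{\vec{0}\}$ (surjectivity is the defining property, and both sets have size $p^m-1$); and (ii) for any fixed nonzero $\vec{u}$, the map $r\mapsto A^r\cdot\vec{u}$ is a bijection from $\Z/(p^m-1)\Z$ to $\F_p^m\setminus\{\vec{0}\}$ (using $\{A^i\vec{u}\}_{1\le i<p^m}=\F_p^m\setminus\{\vec{0}\}$ together with $A^{p^m-1}=I$). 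In particular, if $r$ is uniform over $\Z/(p^m-1)\Z$, then $A^r\cdot\vec{u}$ is uniform over $\F_p^m\setminus\{\vec{0}\}$.

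The algorithm $\mathsf{DLCorr}^{g}(p,m,1^{\lceil 1/\varepsilon\rceil},A,\vec{u})$ then proceeds as follows. Pick $t=O(1/\varepsilon)$ so that $(1-\varepsilon/2)^{t}<1/3$; this is affordable since $1^{\lceil 1/\varepsilon\rceil}$ is part of the input. Repeat $t$ times: sample $r\in\{0,\dots,p^m-2\}$ approximately uniformly---e.g.\ by drawing $O(m\log p)$ random bits, interpreting them as an integer, and reducing modulo $p^m-1$, which introduces only exponentially small statistical bias (alternatively, use rejection sampling with $O(1)$ expected overhead); compute $\vec{v}\eqdef A^r\cdot\vec{u}$ by repeated-squaring matrix exponentiation over $\F_p$; query $j\eqdef g(\vec{v})$; set $\ell\eqdef(j-r)\bmod(p^m-1)$ lifted into $\{1,\dots,p^m-1\}$; and test whether $A^{\ell}\cdot\vec{1}=\vec{u}$, again by fast exponentiation. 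If the test passes, output $\ell$ and halt; if no iteration passes, output an arbitrary value.

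For correctness, write $\ell_0\in[p^m-1]$ for the true discrete logarithm, i.e.\ $\vec{u}=A^{\ell_0}\cdot\vec{1}$. In each iteration $\vec{v}$ is statistically close to uniform over $\F_p^m\setminus\{\vec{0}\}$, so with probability at least $\varepsilon/2$ the oracle returns $j$ with $A^{j}\cdot\vec{1}=\vec{v}=A^{r+\ell_0}\cdot\vec{1}$; since both $j$ and the lift of $(r+\ell_0)\bmod(p^m-1)$ lie in $\{1,\dots,p^m-1\}$, consequence (i) forces $j\equiv r+\ell_0\pmod{p^m-1}$, hence the computed $\ell$ equals $\ell_0$ and the verification passes. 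Conversely, the test $A^{\ell}\cdot\vec{1}=\vec{u}$ guarantees that the algorithm never outputs an incorrect value. As the $t$ iterations use independent randomness, the probability that none succeeds is at most $(1-\varepsilon/2)^{t}<1/3$. For the running time, each of the $t=O(1/\varepsilon)$ iterations performs $O(m\log p)$ multiplications of $m\times m$ matrices over $\F_p$ (each field operation costing $\poly(\log p)$ time) plus one oracle call, so the total time is $\poly(1/\varepsilon,m,\log p)$, polynomial in the input length. The argument is essentially routine; the only delicate points are that $\vec{v}=A^r\cdot\vec{u}$ is genuinely (close to) uniform over the nonzero vectors---which is exactly where the generator-matrix hypothesis is used---and the standard technicality of sampling $r$ nearly uniformly over a range that need not be a power of two. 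I do not anticipate a substantive obstacle.
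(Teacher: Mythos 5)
Your proposal is correct and follows essentially the same argument as the paper: randomize the input to $\vec{v}=A^{r}\cdot\vec{u}$ using a uniform (or near-uniform) shift $r$, query $g$, verify the candidate by fast exponentiation, and repeat $O(1/\varepsilon)$ times. Your presentation is in fact slightly cleaner in one respect (you compute $\ell=(j-r)\bmod(p^m-1)$ directly, whereas the paper splits into the cases $i>j$, $i=j$, $j>i$) and slightly more careful in another (you explicitly address the technicality of sampling a near-uniform shift modulo $p^m-1$, which is odd since $p$ is a power of $2$, a detail the paper's sketch elides).
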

\begin{proof}[Proof Sketch]
	The algorithm is an adaptation of the worst-case to average-case reduction for the discrete logarithm problem. Given $\vec{u} \in \F_p^m\setminus\{\vec{0}\}$, we pick a random $j \in [p^m - 1]$ and set $\vec{v} \eqdef A^j \cdot \vec{u}$. Let $i \eqdef g(\vec{v})$. Since $\vec{v}$ is uniformly distributed, with probability at least $\varepsilon$ we have $A^i \cdot \vec{1} = \vec{v}$. We check if this is the case in polynomial time (note that we can compute $A^i$ in polynomial time by repeated squaring). Suppose this is indeed the case, then $A^i \cdot \vec{1} = \vec{v} = A^j \cdot \vec{u}$. Recall that $A$ is invertible. If $i > j$, we output $\ell \eqdef i - j$. If $i = j$, we have $\vec{u} = \vec{1}$. In this case, we output $\ell \eqdef p^m - 1$. Finally, if $j > i$, we output $\ell \eqdef t - (j - i)$.
	
	By sampling $O(1/\varepsilon)$ many values of $j$, with probability at least $2/3$, there is at least one invocation $i \eqdef g(\vec{v})$ such that $A^i\cdot \vec{1} = \vec{v}$ indeed holds. Therefore, the success probability of our algorithm is at least $2/3$.
\end{proof}

\subsubsection{Pseudorandom Generators from One-Way Permutations}
\begin{theorem}[\cite{BlumM84,Yao82,GoldreichL89}]\label{t:GL}
	There exist a deterministic oracle algorithm $\mathsf{CryptoG}^{(-)}$ and a probabilistic oracle algorithm $\mathsf{Invert}^{(-)}$ such that the following holds. Let $s,M\in\mathbb{N}$ be the input parameters, and let $f\colon\{0,1\}^s\to\{0,1\}^s$ be a permutation.
	\begin{enumerate}
		\item \label{i:GL-Gen} $\mathsf{CryptoG}_{s,M}^{f}$ outputs a set of $2^{2s}$ $M$-bit strings. Moreover, $\mathsf{CryptoG}^{f}_{s,M}$ can be implemented by a logspace-uniform $f$-oracle circuit of size $\poly(2^{s},M)$ and depth $\poly(s, M)$.
  
		\item \label{i:GL-Rec} $\mathsf{Invert}_{s,M}^{(-)}$ takes $x\in\{0,1\}^s$ as input and runs in $\poly(s,M)$ time. For any function $D\colon\{0,1\}^M\to\{0,1\}$ that $\varepsilon$-distinguishes $\mathsf{CryptoG}^{f}_{s,M}$ from $\{0,1\}^M$, we have
		\[
		\Pr_{x\gets\{0,1\}^s}\!\left[\mathsf{Invert}_{s,M}^{f,D}\!\left(x\right)=f^{-1}(x)\right]\geq \frac{\varepsilon}{\poly(M)}.
		\]
	\end{enumerate}
\end{theorem}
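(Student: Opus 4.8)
The plan is to instantiate the classical Blum--Micali--Yao pseudorandom generator from a one-way permutation with the Goldreich--Levin hard-core predicate. Concretely, I would define $\mathsf{CryptoG}_{s,M}^{f}$ on a seed $(x,r)\in\{0,1\}^{2s}$ as follows: set $x_0\eqdef x$, iterate $x_{j+1}\eqdef f(x_j)$ for $j=0,\dots,M-1$, and output the $M$-bit string whose $j$-th bit is the inner product $\langle x_{M-j},r\rangle \bmod 2$ (so the bits are emitted in the order $\langle x_{M-1},r\rangle,\langle x_{M-2},r\rangle,\dots,\langle x_0,r\rangle$; this reversed order is what makes the reconstruction go through). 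The output of $\mathsf{CryptoG}_{s,M}^{f}$ is then the collection of these $2^{2s}$ strings obtained by ranging over all seeds. For the circuit bound, each individual output string is computed by $M-1$ sequential $f$-oracle gates (one gate per layer) together with $\poly(s)$-size, $O(\log s)$-depth inner-product circuits, so the whole set is produced by an $f$-oracle circuit of size $2^{2s}\cdot\poly(s,M)=\poly(2^s,M)$ and depth $\poly(s,M)$; its gate structure is a fully regular layered iteration of a fixed gadget over all seeds, so it is logspace-uniform, which I would verify by exhibiting the logarithmic-space machine that decides its adjacency relation.

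For the reconstruction $\mathsf{Invert}_{s,M}^{f,D}(x)$ I would follow the standard two-step argument. First, a hybrid argument: since $D$ $\eps$-distinguishes $\mathsf{CryptoG}_{s,M}^{f}$ from $U_M$, replacing the $M$ output bits one at a time by uniform bits shows that for some index $i\in[M]$ there is a next-bit predictor — obtained from $D$ in Yao's usual way by hard-wiring the later bits at random — that predicts the $i$-th output bit from the first $i-1$ output bits with advantage $\ge \eps/M$ over a coin flip; since the reconstruction does not know $i$, it tries all $M$ choices, which is where the $1/M$ loss comes from. By the reversed ordering, the first $i-1$ output bits are $\langle x_{M-1},r\rangle,\dots,\langle x_{M-i+1},r\rangle$, all of which are functions of $x_{M-i+1}=f(x_{M-i})$ alone; hence, writing $z\eqdef f(y)$ for a uniformly random $y$, the predictor yields a procedure $P$ with $\Pr_{y,r,\,\text{coins}}[P(z,r)=\langle y,r\rangle]\ge 1/2+\eps/\poly(M)$. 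Second, I would feed $P$ to the Goldreich--Levin local list-decoder to recover, given $x=z$, a short list of candidates for $y=f^{-1}(z)$, test each candidate $w$ via $f(w)\overset{?}{=}z$ using the oracle $f$, and output the one that matches. Since $f$ is a permutation, the given $x$ is distributed as $f(y)$ for uniform $y$, so the correct preimage appears in the list — and is therefore output — with probability $\poly(\eps/M)=\eps/\poly(M)$. The running time is $\poly(s,M)$ once the list-decoder is run with a fixed polynomial number of samples (the regime $\eps\ge 1/\poly(M)$, which is all that is needed when this theorem is applied in \autoref{theo:SU-main}; in full generality the dependence is $\poly(s,M,1/\eps)$).

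The argument is entirely classical, so the ``hard part'' is bookkeeping rather than any new idea: (i) fixing the order in which the generator emits its bits so that predicting bit $i$ from the earlier bits reduces to predicting $\langle y,r\rangle$ from $f(y)$ — and not from a quantity one cannot compute; (ii) tracking the loss in advantage through the hybrid step, the guess of $i$, and the Goldreich--Levin decoder, and confirming it is only $\poly(M)$; and (iii) nailing down the logspace-uniformity and the precise size/depth of the $f$-oracle circuit for $\mathsf{CryptoG}$, which matters because this circuit is later composed with other, recursively defined, circuits. I would also make explicit that using the Goldreich--Levin predicate $\langle x,r\rangle$ for the permutation $(x,r)\mapsto(f(x),r)$, rather than assuming a hard-core bit for $f$ is given, is exactly why the seed has length $2s$ rather than $s$.
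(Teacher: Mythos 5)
Your proposal is correct and matches the paper's proof sketch essentially exactly: the Blum--Micali--Yao iteration with the Goldreich--Levin predicate for $\mathsf{CryptoG}$, the standard hybrid/Yao/Goldreich--Levin chain for $\mathsf{Invert}$, and the circuit bounds obtained by computing the $M$ bits for one seed sequentially and then parallelizing over all $2^{2s}$ seeds. The only cosmetic difference is that you emit the bits in reverse order so that a \emph{next}-bit predictor yields information about $f^{-1}$, whereas the paper writes the forward order $\langle x,r\rangle,\langle f(x),r\rangle,\dots,\langle f^{(M-1)}(x),r\rangle$ and would correspondingly argue via \emph{previous}-bit unpredictability; the two orderings are symmetric and both standard.
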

\begin{proof}[Proof Sketch]
	The generator $\mathsf{CryptoG}^{(-)}$ follows from the well-known construction of pseudorandom generators from one-way permutations using the Goldreich--Levin Theorem \cite{GoldreichL89}. More specifically, 
	\[
		\mathsf{CryptoG}^{f}_{s,M} \eqdef \bigcup_{x,r\in \{0,1\}^s} \mleft( \mleft\langle x ,r \mright\rangle, \mleft\langle f(x) ,r \mright\rangle, \mleft\langle f(f(x)) ,r \mright\rangle,\dots, \mleft\langle f^{(M-1)}(x) ,r \mright\rangle\mright),
	\]
	where $\langle \cdot \rangle$ denotes the inner product mod $2$ function and $f^{(i)}$ denotes the composition of $f$ with itself $i$ times.
	
	The ``inverting'' algorithm $\mathsf{Invert}^{(-)}$ and its correctness rely on standard techniques in pseudorandomness such as the hybrid argument, Yao's theorem on the equivalence between pseudorandomness and unpredictability \cite{Yao82}, and the Goldreich--Levin decoding algorithm \cite{GoldreichL89}. (See \eg, \cite[Section 9.3]{AB09-book}.)
	
	Finally, to see that $\mathsf{CryptoG}^{f}_{s,M}$ can be implemented by a logspace-uniform $f$-oracle circuit of size $\poly(2^{s},M)$ and depth $\poly(M)$, we first note that there is a Turing machine that given $s,M\in\mathbb{N}$ and $x,r\in \{0,1\}^s$, computes the $M$-bit string $\mleft\langle x ,r \rangle, \langle f(x) ,r \mright\rangle, \mleft\langle f(f(x)) ,r \mright\rangle,\dots, \mleft\langle f^{M-1}(x) ,r \mright\rangle$ in $\poly(s,M)$ time using $f$ as an oracle. Then by the fact that any time-$t$ Turing machine can be simulated by a logspace-uniform circuit of size $O(t^2)$, computing a single $M$-bit string in $\mathsf{CryptoG}^{f}_{s,M}$ can be done using a logspace-uniform $f$-oracle circuit of size $\poly(s,M)$. The desired conclusion follows from the observation that we can compute these $2^{2s}$ $M$-bit strings in parallel.
\end{proof}

\subsubsection{Self-Correction for Polynomials}

\begin{theorem}[A Self-Corrector for Polynomials, cf.~\cite{DBLP:journals/ipl/GemmellS92, Sud95}]\label{t:SC}
	There is a probabilistic oracle algorithm $\mathsf{PCorr}^{(-)}$ such that the following holds. Let $p$ be a power of $2$, $m,\Delta\in\mathbb{N}$ such that $\Delta<p/3$. Let $g\colon \F_{p}^m\to \F_p$ be such that there exists a polynomial $P$ of total degree at most $\Delta$ for which 
	\[
	\Pr_{\vec{x}\gets{\F_p^m}}[g(\vec{x}) \ne P(\vec{x})] \le 1/4.
	\]
	Then for all $\vec{x}\in\F_p^m$, $\mathsf{PCorr}^{g}(p,m,\Delta,\vec{x})$ runs in time $\poly(\Delta,\log p,m)$ and outputs $P(\vec{x})$ with probability at least $2/3$.
\end{theorem}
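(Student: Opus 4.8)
The plan is to follow the classical self-correction strategy for low-degree polynomials \cite{DBLP:journals/ipl/GemmellS92,Sud95}: reduce the evaluation of $P$ at $\vec{x}$ to univariate Reed--Solomon decoding along a random line through $\vec{x}$. Concretely, $\mathsf{PCorr}^{g}(p,m,\Delta,\vec{x})$ picks a uniformly random direction $\vec{y}\in\F_p^m$ and considers the line $\ell(t)\eqdef\vec{x}+t\cdot\vec{y}$; since $P$ has total degree at most $\Delta$, the restriction $h(t)\eqdef P(\ell(t))$ is a univariate polynomial of degree at most $\Delta$ with $h(0)=P(\vec{x})$. For every fixed $t\ne 0$ the point $\ell(t)$ is uniformly distributed over $\F_p^m$, so $\Pr_{\vec{y}}[g(\ell(t))\ne h(t)]\le 1/4$; hence $g$ agrees with $h$ on at least a $3/4$ fraction of the $p-1$ nonzero points of the line in expectation. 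The algorithm queries $g$ at $N\eqdef\Theta(\Delta)$ of these points (say $t=1,\dots,N$, which are distinct and nonzero since $\Delta<p/3$ forces $N<p$), obtaining noisy pairs $(t_j,g(\ell(t_j)))$.

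Next I would decode. Because $\Delta<p/3$, the Reed--Solomon code of degree-$\le\Delta$ univariate polynomials evaluated at these points has large distance, so I run the list-decoder of \autoref{thm: list-decoding Reed-Solomon} with degree parameter $\Delta$ and an agreement threshold $a=\Theta(N)$ chosen simultaneously above $\sqrt{2\Delta N}$ (feasible since $N=\Theta(\Delta)$ can be taken a large enough multiple of $\Delta$) and below $(3/4)\cdot N$; then, whenever the number of errors among the queried points is at most $N-a$ — which by Markov's inequality holds with probability bounded away from $0$ over the choice of $\vec{y}$ — the true restriction $h$ appears in the returned list of $O(1)$ candidate polynomials, and evaluating each candidate at $t=0$ yields a constant-size list of candidate values for $P(\vec{x})$ that contains the correct one. (In the parameter regime of \autoref{theo:SU-main} one has $p$ much larger than $\Delta$, so one may instead use Berlekamp--Welch unique decoding, or even a random curve of degree $2$ whose points are pairwise independent, yielding exponentially good concentration and success in a single shot.)

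The last step is to amplify: repeat the above with $K=O(\log p)$ independent random directions, collect the candidate values from all runs, and output the value supported by the largest number of them. Correctness rests on two observations: (i) the correct value $P(\vec{x})$ is supported by a constant fraction of the $K$ lines — namely those on which the error count falls within the decoding radius, which is a constant fraction of all directions by Markov's inequality — and hence by $\Omega(K)$ of them with high probability; and (ii) a candidate polynomial $h'\ne h$ of degree at most $\Delta$ agrees with $h$ on at most $\Delta$ of the $p$ points of a line, so for a fixed incorrect value to be produced on a given line the error pattern on that line must be atypically heavy; together with the $O(1)$ bound on the number of candidates per line and a union bound over candidate values, every incorrect value ends up supported by fewer lines than $P(\vec{x})$. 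Taking $K$ a large enough $O(\log p)$ makes the failure probability at most $1/3$, and the running time is dominated by $O(\log p)$ calls to the list-decoder plus $\poly(\Delta,m,\log p)$ field operations in the fixed representation of $\F_p$, which is $\poly(\Delta,m,\log p)$ overall.

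The step I expect to be the main obstacle is the error analysis of the amplification. Unlike self-correctors that use random curves of degree $\ge 2$, the points of a single line through $\vec{x}$ are far from pairwise independent, so the fraction of errors on one line does not concentrate around its mean, and one genuinely needs to balance Markov's inequality over the random direction against the ``two degree-$\le\Delta$ polynomials agree on $\le\Delta$ points'' bound (which limits how often a fixed wrong value can survive decoding) and the number $K$ of repetitions with its union bound over candidate values; this is exactly the delicate part of the Gemmell--Sudan argument that we invoke. A secondary, routine task is checking that all the univariate arithmetic (Lagrange interpolation, the list-decoder, and the self-correction of low-degree polynomials in the combination step) fits within the claimed time bound under the concrete representation of $\F_p=\GF(2^{2\cdot 3^\lambda})$ fixed in this paper.
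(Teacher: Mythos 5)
The paper does not actually prove \autoref{t:SC}; it is stated as a black box with a citation to \cite{DBLP:journals/ipl/GemmellS92,Sud95}. So there is no internal proof to compare against, and I am evaluating your sketch on its own merits.

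Your outline follows the standard template, but the step you yourself flag as ``the main obstacle'' is in fact a real gap, and it is not cured by the list-decoding-plus-plurality device as sketched. On a random \emph{line} the error positions are perfectly correlated, so all you have is Markov: if the global error rate is $\le 1/4$, then on a random line the probability that the fraction of corrupted sample points stays below the unique-decoding radius $(N-\Delta)/(2N)$ is at most
\[
1-\frac{N/4}{(N-\Delta)/2}\;=\;1-\frac{N}{2(N-\Delta)}\;<\;\tfrac{1}{2},
\]
\emph{for every} choice of $N$ and $\Delta\ge 1$. In other words, with a single random line you never get the correct value with probability $>1/2$, and the total probability mass on wrong values can also approach $1/2$. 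Thus ``the correct value appears on a constant fraction of lines'' does not by itself imply it wins a plurality against every fixed wrong value; one genuinely needs an argument (of the Reed--Muller list-decoding type) showing that no single incorrect value $v$ can be returned as often as $P(\vec{x})$, and your sketch does not supply it. There is also a parameter clash you did not flag: you take $N=\Theta(\Delta)$ with a constant large enough for Sudan's list decoder (this forces roughly $N>8\Delta$), but under the hypothesis $\Delta<p/3$ this can exceed $p-1$, so you cannot pick $N$ distinct nonzero points on the line; your claim ``$\Delta<p/3$ forces $N<p$'' only holds if the constant is below $3$.

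The parenthetical remark you make is actually the cleaner route, and it is the one that matches how the theorem is used in this paper. In the calling site (\autoref{theo:SU-main}) one has $p>\Delta^2 m^7 M^9$, so $p\gg\Delta$. In that regime you should use a random degree-$2$ curve $\gamma(t)=\vec{x}+t\vec{y}_1+t^2\vec{y}_2$: the nonzero points $\{\gamma(t)\}$ are pairwise independent and each uniform, the restriction $P\circ\gamma$ has degree $\le 2\Delta$, and Chebyshev gives genuine concentration of the error count around $\le N/4$. Taking $N=\Theta(\Delta)+O(1)$ (comfortably below $p-1$ here) and Berlekamp--Welch unique decoding then yields the correct $h(0)=P(\vec{x})$ with probability that can be pushed to, say, $9/10$ in one shot, no plurality argument needed. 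This is a complete and short proof for the parameter regime that the paper actually invokes. If you want to prove the theorem in the stated generality $\Delta<p/3$ (which is tighter than what curves plus Chebyshev cover), you would need to carry out the full Gemmell--Sudan analysis rather than defer to it, and I do not think the line-plus-Markov-plus-plurality version closes on its own.
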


\subsection{The Shaltiel--Umans Generator}

We state a version of the hitting set generator constructed by Shaltiel and Umans \cite{ShaltielU05} that will be convenient for our purposes. 

\begin{theorem}[Implicit in \cite{ShaltielU05}]\label{t:HSU}
    There exist a deterministic algorithm $\HSU$ and a probabilistic oracle algorithm $\RSU^{(-)}$ such that the following holds.
    Let $p$ be a power of $2$, $m,M,\Delta\in\mathbb{N}$ with $p > \Delta^2 m^7M^9$, $\vec{\ell} \eqdef (p,m,M,\Delta)$ be the input parameters, and let
	\begin{itemize}
		\item $P \colon \F_{p}^m \to \F_p$ be a polynomial with total degree at most $\Delta$, given as a list of $p^m$ evaluations of $P$ on all points from $\F_p^m$ in lexicographic order, and
		\item $A$ be a generator matrix for $\F_p^m$.
	\end{itemize} 
	Then
	\begin{enumerate}
		\item\label{i:SU-Gen} The generator $\HSU_{\vec{\ell}}(P,A)$ outputs a set of strings in $\{0,1\}^M$. Moreover, $\HSU_{\vec{\ell}}$ can be implemented by a logspace-uniform circuit of size $\poly(p^m)$ and depth $\poly(\log p,m)$.
		\item\label{i:SU-Rec}  The reconstruction algorithm $\RSU^{D, P}_{\vec{\ell}}(A)$, where $\bfunc{D}{M}{}$ is any function that $(1/M)$-avoids $\HSU_{\vec{\ell}}(P,A)$, runs in $\poly(p,m)$ time and outputs, with probability at least $1 - 1/p^{2m}$, a vector $\vec{v} \in \F_p^m\setminus \{\vec{0}\}$ and a $D$-oracle circuit $C:[p^m - 1] \to \F_p$ such that
		\[
			C(i) = P(A^{i}\cdot \vec{v})\text{ for every }i\in[p^m-1].
		\]
	\end{enumerate}
\end{theorem}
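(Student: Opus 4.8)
The plan is to reconstruct \autoref{t:HSU} by repackaging the hitting set generator of Shaltiel and Umans~\cite{ShaltielU05}, adjusting its interface to the ``avoiding/reconstruction'' language used here and to the fact that our reconstruction is handed oracle access to $P$ (which lets us verify candidates and thereby boost the success probability). Fix the generator matrix $A$ supplied as input. Because $A$ is a generator matrix, for every nonzero $\vec{v}$ the orbit $\vec{v},A\vec{v},A^2\vec{v},\dots$ cycles through all of $\F_p^m\setminus\{\vec{0}\}$ with period $p^m-1$, and $\HSU_{\vec{\ell}}(P,A)$ is the Shaltiel--Umans output with this ``generator-matrix walk'' playing the role of the combinatorial design: for each window position $i\in[p^m-1]$ it contains the $M$-bit string read off from an appropriate prefix of the bits of the consecutive evaluations $P(A^{i}\vec{1}),P(A^{i+1}\vec{1}),\dots$. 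Since $P$ is given as the full table of its $p^m$ evaluations, the entire sequence $(P(A^i\vec{1}))_{i}$ can be produced with $\poly(p^m)$ work (iterated matrix--vector products and table lookups) and all windows assembled in parallel; hence $\HSU_{\vec{\ell}}$ is computable by a logspace-uniform circuit of size $\poly(p^m)$ and depth $\poly(\log p,m)$, as required.

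For the reconstruction, suppose $\bfunc{D}{M}{}$ $(1/M)$-avoids $\HSU_{\vec{\ell}}(P,A)$; in particular it $(1/M)$-distinguishes this set from uniform. I would run the standard hybrid argument over the $M$ output bits to isolate a single bit position that $D$ predicts from the preceding bits with advantage $\Omega(1/M^2)$, and then convert this into a \emph{field-valued} ``next-step'' predictor: via Yao's predictor lemma applied at the level of the $\log p$-bit field representation, together with list-decoding of Reed--Solomon codes (\autoref{thm: list-decoding Reed-Solomon}) to cope with the many candidate decodings, one obtains a $D$-oracle circuit $\mathcal{P}$ that, given the window $\bigl(P(A\vec{u}),\dots,P(A^{k-1}\vec{u})\bigr)$, outputs $P(A^{k}\vec{u})$ for a $\delta$-fraction of $\vec{u}\in\F_p^m$ with $\delta=1/\poly(M)$. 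The crucial algebraic point (as in~\cite{ShaltielU05}) is that for each fixed step count the map $\vec{u}\mapsto P(A^{k}\vec{u})$ is a polynomial of total degree at most $\Delta$ in $\vec{u}$ ($P$ precomposed with a linear map), and so are the coordinates of the window. After amplifying $\delta$ towards $1$ --- by composing $\mathcal{P}$ with random shifts along the orbit, majority-voting, and testing candidates against the oracle $P$ --- one applies the self-corrector for polynomials (\autoref{t:SC}, applicable since $\Delta<p/3$) to obtain a $D$-oracle circuit that on input $\vec{u}$ outputs $P(A^{k}\vec{u})$ exactly.

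With an exact one-step-ahead circuit available, I would iterate it. The initial window $\bigl(P(A\vec{v}),\dots,P(A^{k-1}\vec{v})\bigr)$ is computed directly, since each $A^{j}\vec{v}$ is obtained by repeated squaring in $\poly(p,m)$ time and $P$ is an oracle; each application of the one-step circuit yields the next value $P(A^{i}\vec{v})$, and because the step map is position-independent this rolls up into a single $D$-oracle circuit $C$ of size $\poly(p,m)$ with $C(i)=P(A^{i}\vec{v})$ for all $i\in[p^m-1]$. Each randomized sub-step succeeds with probability $1-1/\poly(p,m)$, and since we hold $P$ as an oracle we can \emph{verify} a candidate pair $(\vec{v},C)$ by random sampling; repeating the whole procedure $O(m\log p)$ times and keeping a verified output drives the failure probability below $1/p^{2m}$ while keeping the running time $\poly(p,m)$, and a valid $\vec{v}$ exists by completeness of the hybrid argument.

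The delicate step is the passage from a Boolean distinguisher to an \emph{exact} field-valued next-step circuit: the field-valued predictor must be extracted while losing only a $\poly(M)$ factor in advantage (this is exactly where list-decoding of Reed--Solomon codes enters, and why the generous hypothesis $p>\Delta^2 m^7 M^9$ is needed), and one must then argue carefully that the windowed predictor, composed with the orbit map, agrees with the degree-$\le\Delta$ polynomial $\vec{u}\mapsto P(A^{k}\vec{u})$ on a large enough fraction of $\vec{u}$ for \autoref{t:SC} to apply. Amplifying that agreement, by exploiting the orbit structure and testing candidates against $P$, is the technical heart of the argument; by contrast, verifying the logspace-uniformity and depth bounds for the generator is routine once $P$ is presented explicitly as a table.
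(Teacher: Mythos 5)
Your outline captures the shape of Shaltiel--Umans at the level of ``distinguisher $\Rightarrow$ next-element predictor $\Rightarrow$ learn the orbit of $A$,'' but it omits the two ideas that make the reconstruction \emph{polynomial} in $p,m,M$ rather than $p^m$, and as written the final step fails.

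The fatal step is ``With an exact one-step-ahead circuit available, I would iterate it \dots this rolls up into a single $D$-oracle circuit $C$ of size $\poly(p,m)$ with $C(i)=P(A^i\vec v)$.'' Iterating a one-step predictor $i$ times produces a circuit of size $\Theta(i)\cdot\poly(p,m)$, and the statement requires $C(i)$ for all $i\in[p^m-1]$; taking $i\approx p^m-1$ gives a circuit of size $\approx p^m$, which is exponentially larger than the claimed $\poly(p,m)$. Nothing in the construction lets you ``roll this up,'' because the one-step map you have is not an algebraic iteration you can repeatedly square --- it is an oracle circuit that is only guaranteed correct when fed the exact window of previous evaluations. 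This is precisely why the actual Shaltiel--Umans generator uses $m$ different strides: the set $\HSU_{\vec\ell}(P,A)$ is the union over $j=0,\dots,m-1$ of the Hadamard-encoded outputs of the $p$-ary generator $\Gary^{(j)}(\vec x)=\bigl(P(A^{p^j\cdot 1}\vec x),\dots,P(A^{p^j\cdot M}\vec x)\bigr)$, so that a distinguisher yields a next-element predictor for \emph{every} stride $p^j$. The reconstruction then writes the target index in base $p$ as $i=\sum_j i_jp^j$ and proceeds in $m$ rounds, each round using the stride-$p^l$ predictor at most $M(p-1)$ times to advance by $i_lp^l$; the total number of learning steps is $m\cdot M(p-1)=\poly(p,m,M)$. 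Your single-stride generator cannot support this geometric jumping, and without it the circuit-size bound is unachievable.

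A second gap: your amplification of the $\delta=1/\poly(M)$ predictor to an \emph{exact} one-step circuit by ``random shifts along the orbit, majority-voting, and testing candidates against the oracle $P$, then self-correct'' does not go through as sketched. A random shift sends the current window to one centered at a uniformly random point, but during the iteration you do not possess the $M-1$ preceding evaluations at that random point, so you cannot feed the predictor there; and the final circuit $C$ has no access to $P$, so ``testing against $P$'' is only available while you build $C$, not inside it. What Shaltiel--Umans actually do --- and what the paper reproduces --- is to run the predictor not pointwise but along a low-degree curve $C:\F_p\to\F_p^m$: the restriction $P\circ C$ is a univariate low-degree polynomial, the predictor provides a small candidate set $S_t$ for each $t\in\F_p$, Reed--Solomon list decoding (\autoref{thm: list-decoding Reed-Solomon}) recovers all degree-$O(v\Delta)$ polynomials consistent with many $(t,e)$ pairs, and a handful of \emph{reference points} disambiguate. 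The reference points are supplied by a second, interleaved curve $C_2$ chosen so that $A^kC_1$ and $A^lC_2$ intersect in at least $r$ points for the relevant shifts. This ``interleaved learning'' is the mechanism that turns a small-agreement list predictor into exact evaluation tables, and it is absent from your argument. (The polynomial self-corrector \autoref{t:SC} is not used inside \autoref{t:HSU} at all; the paper reserves it for the subsequent modification in \autoref{theo:SU-main}.) In short, you need both the multi-stride generator and the curve-based interleaved learning; without them the reconstruction is either exponentially slow or not correct.
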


The statement of \autoref{t:HSU} and the HSG result of \cite{ShaltielU05} differ in two aspects:\begin{itemize}
    \item First, we use a \emph{polynomial} instead of a Boolean function to construct the HSG, which fits more naturally into the framework of Chen--Tell \cite{ChenT21b} (see also \autoref{sec:improved-CT}).
    \item Second, we explicitly calculated a circuit depth upper bound for computing the HSG, which is not stated in \cite{ShaltielU05}.
\end{itemize}
Nevertheless, \autoref{t:HSU} easily follows from the arguments in \cite{ShaltielU05}. For completeness, we review the construction of \cite{ShaltielU05} and present a proof sketch of \autoref{t:HSU} in this subsection.

\def\Gary{G_{p\textsf{-}\mathsf{ary}}}
\paragraph{The generator.} We first construct $m$ candidate ``$p$-ary PRGs'' $\Gary^{(0)}, \Gary^{(1)}, \cdots, \Gary^{(m-1)}: \F_p^m \to \F_p^M$; note that the inputs and outputs of these ``$p$-ary PRGs'' are elements in $\F_p$. In particular:
\[\Gary^{(j)}(\vec{x}) = \mleft(P(A^{p^j\cdot 1}\vec{x}), P(A^{p^j\cdot 2}\vec{x}), \cdots, P(A^{p^j\cdot M}\vec{x})\mright).\]

Then we convert each $p$-ary PRG into a (usual binary) PRG by invoking \cite[Lemma 5.6]{ShaltielU05}. More precisely, for each $0\le j < m$, we interpret $\Gary^{(j)}$ as a PRG that takes a binary seed of length $m\log p$ and outputs $M$ elements in $\{0, 1\}^{\log p}$, using the canonical bijection $\kappa^{(\log p)}$ between $\F_p$ and $\{0, 1\}^{\log p}$. Then, for $\Gary^{(j)}:\{0, 1\}^{m\log p} \to (\{0, 1\}^{\log p})^M$, given seeds $x \in \{0, 1\}^{m\log p}$ and $r \in \{0, 1\}^{\log p}$, we define
\[G^{(j)}(x, r) = \mleft(\langle \Gary^{(j)}(x)_1, r\rangle, \langle \Gary^{(j)}(x)_2, r\rangle, \dots, \langle \Gary^{(j)}(x)_M, r\rangle\mright).\]
Here, $\langle \cdot \rangle$ denotes the inner product mod $2$ function. In other words, we combine $\Gary^{(j)}$ with the \emph{Hadamard code} to obtain a Boolean PRG $G^{(j)} \colon \{0,1\}^{m\log p + \log p} \to \{0,1\}^M$.

Finally, our HSG will be the union of all PRGs $G^{(j)}$. That is, our algorithm $\HSU_{\vec{\ell}}(P, A)$ enumerates every $0\le j < m$, $x \in \{0, 1\}^{m\log p}$, $r \in \{0, 1\}^{\log p}$, and prints the string $G^{(j)}(x, r)$.

To see that $\HSU_{\vec{\ell}}$ can be computed by a logspace-uniform low-depth circuit, we argue that given appropriate indexes $j$ and $i$, the $i$-th bit of $G^{(j)}(x, r)$ can be computed by a logspace-uniform low-depth circuit. The bit we want to compute is
\[G^{(j)}(x, r)_i = \langle \Gary^{(j)}(x)_i, r\rangle = \langle P(A^{p^j\cdot i}\vec{x}), r\rangle,\]
where $\vec{x}$ is the vector in $\F_p^m$ encoded by $x$. By repeated squaring, we can output a (logspace-uniform) circuit of size and depth $\poly(\log p, m)$ that computes $A^{p^j\cdot i}$. Multiplying $A^{p^j\cdot i}$ with $\vec{x}$, indexing (\ie, finding the ($A^{p^j\cdot i}\vec{x}$)-th entry of $P$), and computing Boolean inner product have logspace-uniform circuits of size $\poly(M, p^m) = \poly(p^m)$ and depth $\poly(m, \log p, \log M) = \poly(\log p, m)$. Since we need to output $m\cdot p^{m+1}$ strings of length $M$ and each output bit can be computed by a logspace-uniform circuit of size $\poly(p^m)$ and depth $\poly(\log p, m)$, the complexity upper bounds for computing $\HSU_{\vec{\ell}}$ follows.\\

Now we consider the reconstruction algorithm. Suppose there is an adversary $D:\{0, 1\}^M \to \{0, 1\}$ that $(1/M)$-avoids $\HSU_{\vec{\ell}}(P, A)$. It follows that $D$ $(1/M)$-distinguishes every binary PRG $G^{(j)}$.

\def\Dbit{D_{\sf bit}}
\paragraph{From distinguishers to next-element predictors.} For each $0\le j<m$, we use $D$ to build a ``next-element predictor'' $D^{(j)}$ for $\Gary^{(j)}$. Since $D$ $(1/M)$-distinguishes $G^{(j)}$, it can be used to build a next-\emph{bit} predictor $\Dbit^{(j)}$ such that
\[\Pr_{i\gets[M], x\gets \{0, 1\}^{m\log p}, r\gets \{0, 1\}^{\log p}}\mleft[\Dbit^{(j)}\mleft(G^{(j)}(x, r)_1, \dots, G^{(j)}(x, r)_{i-1}\mright) = G^{(j)}(x, r)_i\mright] \ge 1/2 + 1/M^2.\]
Therefore, with probability $\ge 1/2M^2$ over $i\gets [M]$ and $x\gets \{0, 1\}^{m\log p}$, the probability over $r\gets \{0, 1\}^{\log p}$ that $\Dbit^{(j)}$ predicts the $i$-th bit of $G^{(j)}(x, r)$ given its first $i-1$ bits correctly is at least $1/2 + 1/2M^2$. In this case, using the list-decoding algorithm for Hadamard code \cite{GoldreichL89}, we can find a list of $O(M^4)$ elements that contains $\Gary^{(j)}(x)_i$. (In fact, the trivial list-decoding algorithm suffices here, since it runs in time $\poly(p)$.) We call this procedure the \emph{next-element predictor $D^{(j)}$}; it takes as input
\[u_{M-1} \eqdef P(A^{-(M-1)p^j}\vec{x}), u_{M-2} \eqdef P(A^{-(M-2)p^j}\vec{x}), \dots, u_1 \eqdef P(A^{-p^j}\vec{x}),\]
where $\vec{x}\gets \F_p^m$ is a random vector. It randomly selects $i\gets [M]$, invokes $\Dbit^{(j)}$ and the list-decoding algorithm for the Hadamard code, and outputs a list of $O(M^4)$ elements. With probability $\Omega(1/M^2)$ over $\vec{x}\gets \F_p^m$ and the internal randomness of $\Dbit^{(j)}$, this list will contain $P(\vec{x})$.

We repeat $D^{(j)}$ for $O(m\log p)$ times and fix its internal randomness, so that in what follows we can assume $D^{(j)}$ is deterministic. With probability at least $1-1/(10p^{2m})$, for every $0\le j < m$, $D^{(j)}$ will be correct in the following sense: For some $\rho \eqdef 1/\Theta(M^2m\log p)$, $D^{(j)}$ outputs $\rho^{-2}$ elements, and
\begin{equation*}%
    \Pr_{\vec{x}\gets \F_p^m}\mleft[P(\vec{x}) \in D^{(j)}\mleft(u_{M-1}, u_{M-2}, \dots, u_1\mright)\mright] > \rho.
\end{equation*}

\def\LearnNextCurve{\textsc{Learn Next Curve}}

\paragraph{Learn Next Curve.} We will use the following notation from \cite{ShaltielU05}. Let $r \eqdef O(m\log p)$ be a parameter denoting the number of {\sf reference points}, and $v \eqdef (m+1)r-1$ denotes the degree of curves.\footnote{The parameter $v$ is set in the proof of \cite[Lemma 5.14]{ShaltielU05}.} A \emph{curve} is a polynomial $C:\F_p \to \F_p^m$ with degree $v$. (That is, each coordinate of $C$ is a univariate polynomial of degree $v$ over $\F_p$.) Recall that $A \in \F_p^{m\times m}$ is the generator matrix. We use $AC$ to denote the curve where for each $t\in\F_p$, $AC(t) = A\cdot C(t)$ (note that $AC$ is still a degree-$v$ polynomial). We also use $P(C)$ to denote the function such that for every $t\in\F_p$, $P(C)(t) = P(C(t))$; the \emph{evaluation table} of $P(C)$ is the length-$p$ vector where for every $t\in\F_p$, the $t$-th entry of the vector is $P(C(t))$.

Now, we recall the implementation of an important subroutine called $\LearnNextCurve$ as defined in \cite[Section 5.5]{ShaltielU05}. $\LearnNextCurve$ takes as input a {\sf next curve} $C:\F_p \to \F_p^m$, a set of {\sf reference points} $R\subseteq \F_p$ of size $r$, a {\sf stride} $0\le j < m$, as well as {\sf input evaluations}; the {\sf input evaluations} consist of two parts, namely the evaluation tables of $P(A^{-ip^j}C)$ for every $1\le i < M$ and the values of $P(C(t))$ for every $t \in R$. The intended {\sf output evaluations} consist of the evaluation table of $P(C)$.

In particular, $\LearnNextCurve$ starts by obtaining a set of $\rho^{-2}$ values
\[S_t \eqdef D^{(j)}\mleft(P(A^{-(M-1)p^j}C(t)), P(A^{-(M-2)p^j}C(t)), \dots, P(A^{-p^j}C(t))\mright)\]
for each $t \in \F_p$. Then it calls the algorithm from \autoref{thm: list-decoding Reed-Solomon} on the pairs $\{(t, e)\}_{t \in \F_p, e \in S_t}$ to obtain the list of all polynomials $Q$ such that $Q(t) \in S_t$ for many coordinates $t$. (This takes $\poly(p\rho^{-2}, \log p) \le \poly(p, m)$ time.) If this list contains a unique polynomial $Q$ such that $Q(t) = P(C(t))$ for every $t \in R$, then we output this polynomial; otherwise we output $\bot$. It is clear that $\LearnNextCurve$ runs in $\poly(p, m)$ time.

We say $\LearnNextCurve$ \emph{succeeds} (on {\sf next curve}, {\sf reference points}, and {\sf stride}), if whenever the {\sf input evaluations} are the intended values, the {\sf output evaluations} are also the intended values. Let
\[\eps_{\rm LNC} \eqdef O(v\rho^{-1}/p)^{v/2} + (8\rho^{-3})(v\deg(P)/p)^r.\]
It is proven in \cite[Lemma 5.12]{ShaltielU05} that, assuming $p > 32\deg(P)v/\rho^4$, if the {\sf next curve} and {\sf reference points} are chosen uniformly at random, $\LearnNextCurve$ succeeds with probability $1-\eps_{\rm LNC}$. Since $\deg(P) = \Delta$, $\rho^{-1} = \Theta(M^2m\log p)$, $v = O(m^2\log p)$, and $p > \Delta^2 m^7M^9$, it is indeed the case that $p > 32\deg(P)v/\rho^4$. Also note that
\[\eps_{\rm LNC} \le O(\rho^3/32\deg(P))^{v/2} + (8\rho^{-3})(\rho^4/32)^r \le (1/2)^{r-1} \ll 1/(10p^{4m}).\]

A first attempt for the reconstruction algorithm would be the following. Let $i \in [p^m - 1]$, and suppose that we want to compute $P(A^i\vec{1})$. We write $i$ in $p$-ary as $i = \sum_{j=0}^{m-1}i_j p^j$ (where each $i_j \in \{0, 1, \dots, p-1\}$). Recall that for each {\sf next curve} $C$ and {\sf stride} $j$, given the evaluation tables of $P(A^{-kp^j}C)$ for every $1\le k < M$, we can learn the evaluation table of $P(C)$ in one invocation of $\LearnNextCurve$. Therefore, we proceed in $m$ rounds, where for each $0\le l<m$, the $l$-th round performs the following computation:\begin{itemize}
    \item Let $i' \eqdef \sum_{j=0}^{l-1}i_jp^j$. Suppose that at the beginning of the $l$-th round, we already know the evaluation tables of $P(A^{kp^l + i'}C)$ for each $1\le k < M$. (For $l = 0$, these values can be hardcoded as advice; for $l\ge 1$, they should be obtained from the previous round.) We invoke $\LearnNextCurve$ $M(p-1)$ times with stride $l$ to obtain the evaluation tables of $P(A^{kp^l + i'}C)$ for each $1\le k < M\cdot p$. The $l$-th round ends here; note that we have obtained the evaluation tables required in the $(l+1)$-th round (namely $P(A^{kp^{l+1} + i_lp^l + i'}C)$ for every $1\le k < M$).
\end{itemize}

However, there is one issue with this approach: to learn a curve $C$, we also need to provide $\LearnNextCurve$ with the evaluations of some {\sf reference points} on $C$. To resolve this issue, \cite{ShaltielU05} introduced an \emph{interleaved learning} procedure that involves two curves $C_1$ and $C_2$. These two curves possess nice intersection properties that for certain choices of $k$ and $l$, $A^kC_1$ and $A^lC_2$ intersect on at least $r$ points. This enables us for example to learn the evaluation table of $P(A^lC_2)$ whenever we know the evaluation table of $P(A^kC_1)$, by using the evaluations of $P(A^kC_1)$ at {\sf reference points} $R$, where $R$ is the intersection of $A^kC_1$ and $A^lC_2$.

\paragraph{Interleaved learning.} In what follows, we use $[C_1 \cap C_2]$ to denote the set $\{t \in \F_p: C_1(t) = C_2(t)\}$. We say two curves $C_1$ and $C_2$ are \emph{good} if they satisfy the following properties:\begin{itemize}
    \item $C_1(1)\ne \vec{0}$;
    \item for all $1\le i < p^m$ and all $0 \le j < m$, $[A^{i+p^j}C_1 \cap A^iC_2]$ and $[A^iC_1 \cap A^iC_2]$ are of size $\ge r$;
    \item for all $1\le i < p^m$ and all $0\le j < m$, $\LearnNextCurve$ succeeds given {\sf next curve} $A^{i+p^j}C_1$, {\sf reference points} $[A^{i+p^j}C_1 \cap A^iC_2]$, and {\sf stride} $j$; and
    \item for all $1\le i < p^m$ and all $0\le j < m$, $\LearnNextCurve$ succeeds given {\sf next curve} $A^iC_2$, {\sf reference points} $[A^iC_1 \cap A^iC_2]$, and {\sf stride} $j$.
\end{itemize}

By \cite[Lemma 5.14]{ShaltielU05}, there is a $\poly(v, p)$-time randomized algorithm that, with probability $1-2mp^m\cdot \eps_{\rm LNC} \ge 1-1/(10p^{2m})$, outputs two curves $C_1$ and $C_2$ that are good.

The basic step in the reconstruction algorithm is called \emph{interleaved learning} in \cite{ShaltielU05}. This step has the following guarantee: For a {\sf stride} $j$, given the correct evaluation tables of $P(A^{i-kp^j}C_1)$ and $P(A^{i-kp^j}C_2)$ for every $1\le k < M$, we can compute the correct evaluation tables of $P(A^iC_1)$ and $P(A^iC_2)$. In particular, \emph{interleaved learning} consists of the following two steps:\begin{itemize}
    \item first, we invoke $\LearnNextCurve$ with {\sf next curve} $A^iC_1$, {\sf reference points} $[A^{i-p^j}C_2 \cap A^iC_1]$, and {\sf stride} $j$;
    \item then, we invoke $\LearnNextCurve$ with {\sf next curve} $A^iC_2$, {\sf reference points} $[A^iC_1 \cap A^iC_2]$, and {\sf stride} $j$.
\end{itemize}
Note that we assume that all invocations of $\LearnNextCurve$ succeed, as this happens with high probability ($1-1/(10p^{2m})$).

\paragraph{The reconstruction algorithm.} Recall that our reconstruction algorithm needs to output two elements: a vector $\vec{v} \in \F_p^m\setminus\{\vec{0}\}$ and a $D$-oracle circuit $C:[p^m - 1] \to \F_p$ such that $C(i) = P(A^i\cdot \vec{v})$ for every $i\in [p^m-1]$.

We first compute the curves $C_1$ and $C_2$ that are good with probability $1 - 1/(10p^{2m})$. Our reconstruction algorithm will be correct provided that $C_1$ and $C_2$ are good (and that we fixed good internal randomness of our next-element predictors $D^{(j)}$); this happens with probability $\ge 1-1/(10p^{2m})-1/(10p^{2m})\ge 1-1/p^{2m}$. The vector we output will be $\vec{v} \eqdef C_1(1)$ (which is non-zero if $C_1$ and $C_2$ are good). It remains to output a circuit $C$ such that for every $i \in [p^m-1]$, $C(i) = P(A^i\cdot \vec{v})$.

Given an integer $i$, our circuit $C$ first writes $i$ in $p$-ary as $i = \sum_{j=0}^{m-1}i_jp^j$. Then, it proceeds in $m$ rounds, where for each $0\le l<m$, the $l$-th round performs the following:
\begin{itemize}
    \item Let $i' \eqdef \sum_{j=0}^{l-1}i_jp^j$. Suppose that at the beginning of the $l$-th round, we already know the evaluation tables of $P(A^{kp^l + i'}C_1)$ and $P(A^{kp^l + i'}C_2)$ for each $1\le k < M$. We perform interleaved learning $M(p-1)$ times with stride $l$ to obtain the evaluation tables of $P(A^{kp^l + i'}C_1)$ and $P(A^{kp^l + i'}C_2)$ for each $1\le k < M\cdot p$. The $l$-th round ends here; note that we have obtained the evaluation tables required to perform the $(l+1)$-th round (namely, $P(A^{kp^{l+1} + i_lp^l + i'}C_1)$ and $P(A^{kp^{l+1} + i_lp^l + i'}C_2)$ for every $1\le k < M$).
\end{itemize}
Finally, after the $(m-1)$-th round, we have obtained the evaluation table of $P(A^iC_1)$, and we can simply output $P(A^iC_1(1)) = P(A^i\vec{v})$ as the answer.

Note that the interleaved learning procedure needs to invoke the next-element predictor, therefore our circuit $C$ will be a $D$-oracle circuit. Also, at the beginning of the first ($0$-th) round, we need the evaluation tables of $P(A^kC_1)$ and $P(A^kC_2)$ for each $0\le k < M$. Our reconstruction algorithm can simply query the polynomial $P$ to obtain these values and hardcode them into our circuit $C$. It is clear that our reconstruction algorithm runs in $\poly(p, m)$ time and succeeds with probability $\ge 1-1/p^{2m}$.

\subsection{Modified Shaltiel--Umans Generator: Proof of \texorpdfstring{\autoref{theo:SU-main}}{Theorem 5.1}}

In this subsection, we prove \autoref{theo:SU-main}, which is restated below.
\ThmSUMain*
\begin{proof}
	One difference between our generator and the Shaltiel--Umans generator (\autoref{t:HSU}) is that the reconstruction procedure in the latter only learns a circuit $C_0$ that computes the mapping $i\mapsto P(A^{i}\cdot \vec{v})$ (for some $\vec{v}$ output by the reconstruction procedure), where $A$ is the generator matrix used in the Shaltiel--Umans construction, instead of a circuit that computes $P$ itself. Let us assume for simplicity that the circuit $C_0$ computes $i\mapsto P(A^{i}\cdot \vec{1})$. Note that if given $\vec{x}\in\F_p^m\setminus\{\vec{0}\}$ (which is the input on which we intend to evaluate $P$), we could \emph{efficiently} compute the value $i\in[p^m-1]$ such that $A^i\cdot \vec{1}=\vec{x}$, then we would be able to combine this with the circuit $C_0$ to compute $P$ (roughly speaking, by first computing $i$ and then outputting $C_0(i)$). However, there are two issues with this approach:

    \begin{enumerate}
        \item First, we do not know the generator matrix $A$, as we need our reconstruction algorithm to be uniform and thus we cannot hardcode $A$.
        \item Second, the task of finding such $i$ given $\vec{x}$ and $A$ is essentially the \emph{discrete logarithm problem}, for which no efficient algorithm is known.
    \end{enumerate}
	
	To handle the first issue, we will construct our generator by using the Shaltiel--Umans construction based on a generator matrix that is from a small set $S$ given by \autoref{t:Sho92-matrix}. Then in the reconstruction, we will try all the matrices from $S$, which can be generated efficiently, to obtain a list of candidate circuits. We then select from the list a circuit that is sufficiently close to $P$ and use a self-corrector to compute $P$ everywhere. For the second issue, we first observe that the mapping $f\colon i\mapsto A^{i}\cdot \vec{1}$ is a \emph{permutation}. Treating $f$ as a ``cryptographic one-way permutation'' and invoking \autoref{t:GL}, we can construct a ``cryptographic pseudorandom generator'', which has a uniform reconstruction algorithm. We can then combine the output of this ``cryptographic pseudorandom generator'' with that of the Shaltiel--Umans generator so that if there is an algorithm $D$ that avoids this combined generator, then $D$ can also be used to invert $f$ efficiently! Details follow.

    \paragraph{The construction of $\mathsf{H}$.}
	For a matrix $A\in \F_p^{m\times m}$, let $f_{A}\colon[p^m-1]\cup\{0\}\to\F_p^m$ be such that
	\[	
	f_{A}(i) \eqdef \begin{cases}
	0^n  & \text{if } i=0 \\
	A^{i}\cdot\vec{1} & \text{if } 1 \leq i<p^m.
\end{cases}	
	\]
	We will also view $f$ as a function mapping $s$ bits to $s$ bits, where $s \eqdef m\cdot \log p$. Also note that if $A$ is a generator matrix for $\F_p^m$, then $f_{A}$ is a permutation. 
 
    Let $\HSU$ be the generator from \autoref{t:HSU} and $\mathsf{CryptoG}^{(-)}$ be the generator from \autoref{t:GL}. Also, let $S\subseteq \F_p^{m\times m}$ be the set of matrices constructed using \autoref{t:Sho92-matrix}. We define
	\[
	\mathsf{H}_{\vec{\ell}}(P) \eqdef \bigcup_{A\in S}\, \left(\HSU_{\vec{\ell}}(P,A) \,\mathsmaller{\bigcup}\, \mathsf{CryptoG}^{f_A}_{s,M}\right).
	\]

    \paragraph{The complexity of $\mathsf{H}$.} We argue that $\mathsf{H}_{\vec{\ell}}$ can be implemented by a logspace-uniform circuit of size $\poly(p^m)$ and depth $\poly(\log p, m, M)$. 
 
    First note that given $A$, $f_A$ can be computed in $\poly(\log p,m)$ time. Then again by the fact that every time-$t$ Turing machine can be simulated by a logspace-uniform circuit of size $O(t^2)$, $f_A$ can be computed by a logspace-uniform circuit of size $\poly(\log p,m)$. This means given $A$, $\mathsf{CryptoG}^{f_A}_{s,M}$, which by \autoref{t:GL} has a logspace-uniform $f_A$-oracle circuit of size $\poly(2^s,M)$ and depth $\poly(s, M)$, can be implemented by a logspace-uniform circuit of size $\poly(p^m)$ and depth $\poly(\log p, m, M)$, where we have used that $s = m \cdot \log p$ and $M \leq p^{1/9}$. Also, by \autoref{t:HSU}, $\HSU_{\vec{\ell}}$ has a logspace-uniform circuit of size $\poly(p^m)$ and depth $\poly(\log p,m,M)$. To compute $\mathsf{H}_{\vec{\ell}}(P)$, we just need to compute $\HSU_{\vec{\ell}}(P,A)$ and $\mathsf{CryptoG}^{f_A}_{s,M}$ for all $A\in S$ in parallel, where $S$ can also be computed in time $ \poly(\log p,m)$ and hence has logspace-uniform circuit of size $\poly(\log p,m)$. This yields a logspace-uniform circuit of size $\poly(p^m)$ and depth $\poly(\log p, m, M)$ computing $\mathsf{H}_{\vec{\ell}}$.

    \paragraph{The reconstruction.} Given oracle access to the polynomial $P$ and a function $D$ that $(1/M)$-avoids $\mathsf{H}_{\vec{\ell}}(P)$, we want to output a $D$-circuit that computes $P$. We do this in two stages. In the first stage, we obtain a list of candidate circuits, one for each $A\in S$, that (with high probability) contains at least one circuit that computes $P$. In the second stage, we will select, from the list of candidate circuits, one that is sufficiently close to $P$ and combine it with a self-corrector to obtain a circuit that computes $P$ on all inputs.
	
	We now describe the first stage. Let $A^{*}$ be the lexicographically first matrix in $S$ that is a generator matrix for $\F_p^m$, and consider the two sets 
	\[
	\HSU_{\vec{\ell}}(P,A^*) \text{\quad and \quad} \mathsf{CryptoG}^{f_{A^*}}_{s,M},
	\]
	which are subsets of $\mathsf{H}_{\vec{\ell}} (P)$. Since $D$ avoids $\mathsf{H}_{\vec{\ell}}$, it also avoids \emph{both} $\HSU_{\vec{\ell}}(P,A^*)$ and $\mathsf{CryptoG}^{f_{A^*}}_{s,M}$.
	
	Assume for a moment that we are given the matrix $A^*$. We will construct a circuit $C_{A^*}$ as follows. Let $\RSU^{(-)}$ and $\mathsf{Invert}^{(-)}$ be the oracle algorithms from \autoref{t:HSU} and \autoref{t:GL} respectively. We first run $\RSU^{D,P}_{\vec{\ell}}(A^*)$ to obtain a $D$-oracle circuit $C'_{A^*}$ and some $\vec{v}\in\F_p^m\setminus{\{\vec{0}\}}$. By the property of $\RSU^{(-)}$ (\autoref{i:SU-Rec} of \autoref{t:HSU}) and the fact that $D$ avoids $\HSU_{\vec{\ell}}(P,A^*)$, we get that, with probability at least $1-1/p^{2m}$, for every $i\in [p^m-1]$,
	\begin{equation}\label{eq:circuit_1}
		C'_{A^*}(i) = P((A^*)^{i}\cdot \vec{v}).
	\end{equation}
 
	Similarly, by the property of $\mathsf{Invert}^{(-)}$ (\autoref{i:GL-Rec} of \autoref{t:GL}) and the fact that $D$ avoids $\mathsf{CryptoG}^{f_{A^*}}_{s,M}$, we get that
	\[
		\Pr_{x\gets\{0,1\}^s}\!\left[
		\mathsf{Invert}^{f_{A^*},D}_{s,M}(x)=f_{A^*}^{-1}(x)\right]\geq \frac{1}{\poly(M)}.
	\]
	By combining 
	\[
	g \eqdef\mathsf{Invert}^{f_{A^*},D}_{s,M}
	\]
	 with the algorithm $\mathsf{DLCorr}^{(-)}$ from \autoref{l:discrete-log-rsr}, we get that for every $\vec{x}\in\F_{p}^m$,	with probability at least $2/3$ over the internal randomness of $\mathsf{DLCorr}^{g}$,
	 \[
	 	\mathsf{DLCorr}^{g}\mleft(p,m,1^{\poly(M)},A^{*},\vec{x}\mright)=f^{-1}_{A^*}(\vec{x}).
	 \]

    Note that $\mathsf{DLCorr}^{g}\mleft(p,m,1^{\poly(M)},A^{*},\vec{x}\mright)$ runs in time
\[
\poly(\log p, m, M)\cdot \poly(s, M) = \poly(p, m)
\]
with oracle access to $D$. Also, using standard error-reduction techniques, we can reduce its error probability from $2/3$ to $1/(10p^{2m})$ at the cost of a multiplicative blowup of $O(m \log p)$ in the running time. Then by a union bound, with probability at least $1 - 1/(10p^{m})$ over its internal randomness, the procedure correctly computes $f_{A^*}^{-1}(\vec{x})$ for all $\vec{x}\in\F_p^m$. Finally, by sampling and fixing the internal randomness uniformly at random, we can construct, in time $\poly(p,m)$ and with probability at least $1 - 1/(10p^m)$, a $D$-oracle circuit $C''_{A^*}$ such that for every $\vec{x}\in\F_{p}^m$,
	\begin{equation}\label{eq:circuit_2}
		C''_{A^*}(\vec{x}) =f_{A^*}^{-1}(\vec{x}).
	\end{equation}
	That is, given $\vec{x}\in\F_{p}^m\setminus\{\vec{0}\}$, $C''_{A^*}(\vec{x})$ outputs $i\in[p^m-1]$ such that $(A^*)^{i}\cdot \vec{1}=\vec{x}$.
	This is almost what we need except that we want the circuit to output $i$ such that $(A^*)^{i}\cdot \vec{v}=\vec{x}$. We further construct such a circuit $C'''_{A^*}$ as follow. Given $\vec{x}\in\F_{p}^m$, we first compute
	\[
		j \eqdef C''_{A^*}(\vec{v}) \text{\quad and \quad} k \eqdef C''_{A^*}(\vec{x}).
	\]
	That is, $\vec{v} = (A^*)^{j}\cdot \vec{1}$ and $\vec{x}= (A^*)^{k}\cdot \vec{1}$. We then output $i$ depending on the values of $j$ and $k$. On the one hand, if $j<k$, we let $i \eqdef k-j$. Then
	\begin{align*}
		(A^*)^i\cdot \vec{v} = (A^*)^{k-j}\cdot (A^*)^{j}\cdot \vec{1} = (A^*)^{k}\cdot \vec{1} = \vec{x}.
	\end{align*}
On the other hand, if $k\leq j$, we let $i\eqdef p^m-1-(j-k)$ , which yields
	\[
		(A^*)^i\cdot \vec{v}= (A^*)^{p^m-1-j+k} \cdot (A^*)^{j} \cdot \vec{1}= I\cdot (A^*)^{k}\cdot \vec{1}  =\vec{x}.
	\]
 
    Now we have a circuit $C'''_{A^*}$ that given $\vec{x}\in\F_{p}^m\setminus\{\vec{0}\}$, outputs $i\in[p^m-1]$ such that $(A^*)^{i}\cdot \vec{v}=\vec{x}$ and a circuit $C'_{A^*}$ that given $i\in [p^m-1]$, computes $P((A^*)^{i}\cdot \vec{v})$. We then construct the circuit
	\[
	C_{A^*}(\vec{x}) \eqdef \begin{cases}
		P(\vec{0})  & \text{if } \vec{x}=\vec{0} \\
		C'_{A^*}(C'''_{A^*}(\vec{x})) & \text{if } \vec{x}\in\F_{p}^m\setminus\{\vec{0}\}.
	\end{cases}	
	\]
    Note that we can hardwire the value of $P(\vec{0})$. Also notice that if both Equations~\ref{eq:circuit_1} and~\ref{eq:circuit_2} are true (which happens with probability at least $1-1/(9p^m)$) we will get that for all $\vec{x}\in\F_{p}^m$,
	\[
	C_{A^*}(\vec{x}) = P(\vec{x}).
	\]
 
	However, we don't know the matrix $A^*$. Instead, we will run the above procedure for each $A\in S$ to obtain a list $\calC \eqdef \{C_A\}_{A\in S}$ of candidate circuits $C_A$. Then, with probability at least $1-1/(9p^m)$, $\calC$ contains at least one circuit (in particular, $C_{A^*}$) that computes the polynomial $P$. %
	
	Given the list of candidate circuits $\calC$, we now describe the second stage. First of all, given a circuit $C_A\in\calC$, we want to check if $C_A$ is sufficiently close to $P$.
	\begin{claim}\label{c:valid-test}
		There is a randomized algorithm $\mathsf{IsClose}$ that, given a circuit $B\colon\F_{p}^{m}\to\F_p$, $\delta\in(0,1]$, and oracle access to the polynomial $P$, runs in time $\poly(|B|)\cdot \log(1/\delta)$ such that
		\begin{itemize}
			\item if $\Pr_{\vec{x}}[B(\vec{x})=P(\vec{x})]=1$, the algorithm accepts with probability $1$, and
			\item if $\Pr_{\vec{x}}[B(\vec{x})= P(\vec{x})]\leq 3/4$, the algorithm rejects with probability at least $1-\delta$.
		\end{itemize}
	\end{claim}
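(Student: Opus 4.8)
The plan is to use the standard ``sample-and-check'' equality test. On input a circuit $B \colon \F_p^m \to \F_p$, a confidence parameter $\delta \in (0,1]$, and oracle access to $P$, the algorithm $\mathsf{IsClose}$ sets $k \eqdef \lceil 4\ln(1/\delta)\rceil + 1$, draws $k$ independent uniformly random points $\vec{x}_1,\dots,\vec{x}_k \in \F_p^m$, evaluates both $B(\vec{x}_j)$ and $P(\vec{x}_j)$ for each $j \in [k]$, and accepts if and only if $B(\vec{x}_j) = P(\vec{x}_j)$ for every $j$.

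Correctness follows directly from the two cases. If $\Pr_{\vec{x}}[B(\vec{x}) = P(\vec{x})] = 1$, then every sampled pair agrees, so the algorithm accepts with probability $1$, which gives the first bullet. If instead $\Pr_{\vec{x}}[B(\vec{x}) = P(\vec{x})] \le 3/4$, then for each fixed $j$ we have $\Pr[B(\vec{x}_j) \ne P(\vec{x}_j)] \ge 1/4$, and the $k$ events are independent; hence the probability that the test fails to witness a disagreement is at most $(3/4)^k \le (3/4)\cdot e^{-\ln(1/\delta)} = (3/4)\delta \le \delta$. So the algorithm rejects with probability at least $1-\delta$, giving the second bullet.

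For the running time: sampling each $\vec{x}_j$ uses $m\log p$ random bits; evaluating $B(\vec{x}_j)$ takes $\poly(|B|)$ time since $B$ is an explicitly given $\mathtt{NAND}$-circuit; and obtaining $P(\vec{x}_j)$ costs $\poly(m,\log p) \le \poly(|B|)$ time, because $P$ is supplied as the length-$p^m$ table of evaluations in lexicographic order, so a single value is obtained by computing the index $\kappa^{(n)}$-corresponding to $\vec{x}_j$ and reading off that table entry, and $|B| \ge m\log p$ since $B$ reads $m\log p$ input bits. Summing over the $k = O(\log(1/\delta))$ rounds yields total running time $\poly(|B|)\cdot\log(1/\delta)$, as required.

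There is essentially no hard step here; the only thing one must be slightly careful about is the cost accounting for the accesses to $P$ — one should make sure a single evaluation $P(\vec{x}_j)$ is charged only $\poly(|B|)$ time and not something scaling with $p^m$, which holds because the table representation permits a direct lookup and $|B|$ is at least the input length. (If one prefers to treat $P$ as a genuine unit-cost oracle, the bound is immediate.) Standard amplification is not even needed, as the one-sided error already drops geometrically in $k$.
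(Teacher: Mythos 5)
Your proposal is correct and uses essentially the same argument as the paper's own one-line proof: sample $O(\log(1/\delta))$ uniformly random points, test agreement of $B$ and $P$ on each, and accept only if all agree, bounding the false-accept probability by $(3/4)^{k}$. The only differences are cosmetic (a slightly different constant in the sample count, and some unnecessary deliberation about the cost of a $P$-query, which is unit cost since the claim grants oracle access to $P$).
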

	\begin{proof}[Proof of \autoref{c:valid-test}]\renewcommand{\qedsymbol}{$\diamond$}
		The algorithm picks $3\log(1/\delta)$ points uniformly at random from $\F_p^m$ and checks if $B$ and $P$ agree on all those points. If so, the algorithm accepts; otherwise it rejects. Note that if $\Pr_{\vec{x}}[B(\vec{x}) = P(\vec{x})]\leq 3/4$, then the probability that it accepts is at most $(3/4)^{3\log(1/\delta)}<\delta$. 
	\end{proof}
	For each $C_A\in \calC$, we run $\mathsf{IsClose}^{P}(C_A, \delta\eqdef1/(4|\calC|p^m))$ and pick the first one that the algorithm accepts. By the fact that $\calC$ contains at least one circuit that computes $P$ and by the property of the algorithm $\mathsf{IsClose}$ (\autoref{c:valid-test}), with probability at least $1-1/(4p^m)$, we will obtain some $D$-oracle circuit $C_{\mathsf{close}}$ such that
	\begin{equation}\label{eq:close}
		\Pr_{\vec{x}\gets\F_p^m}[C_{\mathsf{close}}(\vec{x})= P(\vec{x})] >  3/4.
	\end{equation}
	Conditioned on \autoref{eq:close}, by combining $C_{\mathsf{close}}$ with the self-corrector $\mathsf{PCorr}^{(-)}$ from \autoref{t:SC}, we get that for every $\vec{x}\in\F_p^m$, $\mathsf{PCorr}^{C_{\mathsf{close}}}(p,m,\Delta,\vec{x})=P(\vec{x})$ with probability at least $2/3$ (over the internal randomness of $\mathsf{PCorr}^{C_{\mathsf{close}}}$). Again, by using standard error reduction techniques and by picking a randomness uniformly at random, we can obtain in time $\poly(p,m)$, with probability at least $1-1/(4p^m)$, a $D$-oracle circuit $C$ that computes $P$.
	
	By a union bound, the above procedure gives, with probability at least $1-1/p^m$, a $D$-oracle circuit that computes the polynomial $P$.
 
    Finally, it is easy to verify that the running time is $\poly(p,m)$. %
\end{proof}

\section{Improved Chen--Tell Targeted Hitting Set Generator}\label{sec:improved-CT}

In this section, we prove~\autoref{thm:HSG-Chen-Tell-poly}, showing how to build a reconstructive hitting set generator from any uniform low-depth circuit.

\newcommand{\id}{\mathsf{id}}
\newcommand{\algopoly}{\mathbb{A}^{\sf poly}}

\subsection{Layered-Polynomial Representation}\label{sec:layered-poly-rep}

In this subsection we prove a self-contained version of the \emph{layered-polynomial representation}
that underlies the uniform hardness-vs-randomness framework of Chen and Tell \cite{ChenT21b}.
Conceptually, we start from a logspace-uniform low-depth circuit $C$ that computes a multi-output
function $C(1^n)$, and we transform its computation into a sequence of low-degree polynomials
$P_1,\dots,P_{d'}$ over a finite field. Each polynomial $P_i$ encodes one ``layer'' of the
computation of a carefully constructed circuit $D$ that has the same outputs as $C$, and whose
wiring is extremely uniform. The last polynomial $P_{d'}$ succinctly represents the output of
$C(1^n)$, and the sequence $(P_i)_i$ enjoys strong algebraic and algorithmic properties
(low degree, downward self-reducibility and uniform computability of their truth tables)
that will later allow us to build a reconstructive hitting set generator.

In more detail,  the first step is to ``arithmetize'' our low-depth circuit into a layered-polynomial representation. Roughly speaking, given a (uniform) circuit $C$ of depth $d$ and size $T$, we will produce a table of size $d'\times T'$ where $d' \approx d$ and $T' = \poly(T)$, such that the following key properties hold:\begin{itemize}
    \item [{\bf (Low-degree.)}] Each row is the ``truth table'' of a low-degree polynomial (thus admits self-correction properties).
    \item [{\bf (Faithful representation.)}] Given oracle access to the $d'$-th row, we can compute the output of $C(1^n)$ quickly.
    \item [{\bf (Downward self-reducibility.)}] For each $2\le i\le d'$, given oracle access to the $(i-1)$-th polynomial, we can quickly compute the output of the $i$-th polynomial on a given input. Moreover, the entries of the first row (corresponding to $i = 1$) can be computed quickly.
\end{itemize}
Later, as alluded to above, we will use these properties of the layered-polynomial representation to compile them into a reconstructive HSG.

We now formally describe our layered-polynomial representation, which can be proved by modifying
the construction in~\cite[Proposition~4.7]{ChenT21b}, itself based on the doubly efficient interactive
proof system of Goldwasser, Kalai and Rothblum~\cite{gkr15} and Goldreich's streamlined exposition
in~\cite{gol17a,gol18}. For the reader's convenience, we spell out the relevant parts of that construction
and reprove the parameter bounds that we need.

 In the following, letting $p$ be a power of $2$, and $\func{f}{\F_p^\ell}{\F_p}$, we use $\tt(f)$ to denote the length-$(p^\ell\cdot \log p)$ Boolean string that consists of $p^\ell$ blocks, where the $i$-th block corresponds to the Boolean encoding of the $i$-th element in $\F_p$.

\begin{theorem}[Layered-Polynomial Representation]\label{theo:poly-rep}
	There exist universal constants $c,c',\beta>1$ such that the following holds. Fix $\kappa \in \N$, and let $T,d,n,h,p \in \N$ all be sufficiently large such that \emph{(1)} $d \le T$ and $n \le T$, and \emph{(2)} $h,p$ are both nice powers of $2$ and $\log T \le h < p \le h^{27} \leq T$. \emph{(}Recall that $p$ is a nice power of $2$ if $p = 2^{2 \cdot 3^\lambda}$ for some $\lambda \in \N$.\emph{)} %
	
	Let $\vec{\ell} \eqdef (\kappa,T,d,n,h,p)$ be the input parameters, and let $\F \eqdef \F_p$. For a Turing machine $\TurM$ with description size $|\TurM| = \kappa \cdot \log T$, let \[C_{\TurM} \eqdef \Ckt[T,\kappa \cdot \log T,n,n](\TurM).\] Assuming $C_{\TurM} \ne \bot$ and $C_{\TurM}$ has depth at most $d$, there are $d' \eqdef c \kappa \cdot \log^2 T \cdot (d + \kappa^2 \log T)$ polynomials $\mleft(P_i^{\vec{\ell},\TurM}\mright)_{i \in [d']}$ such that the following hold \emph{(}below we write $P_i^{\vec{\ell},\TurM}$ as $P_i$ for simplicity\emph{)}:
	
	\begin{enumerate}
		\item Let $H \subset \F$ be the first $h$ elements of $\F$, and let $m$ be the smallest power of $3$ such that $h^m \ge T^{\beta\kappa}$. Each polynomial $P_i$ is from $\F^{3m}$ to $\F$ and has total degree at most $\Delta = c \cdot h \cdot \log^{3}(T)$.

		\item $P_{d'}$ represents the outputs of $C_{\TurM}$. 
        
        \underline{Formally}, fix an injective function $\func{\id}{[n]}{H^m}$ in an arbitrary but canonical way.\footnote{For simplicity, we will ignore the complexity of computing $\id$ and its inverse since it is negligible.} We have that for every $i \in [n]$, $\left(C_{\TurM}(1^n)\right)_i = P_{d'}(\id(i),0^{2m})$.
		
		\item The polynomials $(P_i)_{i\in [d']}$ can be computed by uniform low-depth circuits.
        
        \underline{Formally}, let $ T_{\sf poly} \eqdef T^{c \cdot \kappa}$ and $ d_{\sf poly} \eqdef c \cdot (d \log T + \kappa^2 \log^2 T)$. There is a Turing machine $\TurM_{\sf poly}$ of description length $\log T_{\sf poly}$ such that for 
		\[
		C_{\sf poly} \eqdef \Ckt\mleft[T_{\sf poly}, \log T_{\sf poly},\log d' , |\F|^{3m} \cdot \log |\F| \mright](\TurM_{\sf poly}),
		\]
		it holds that \emph{(1)} for every $i \in [d']$ $C_{\sf poly}(i) = \tt(P_i)$ and \emph{(2)} $C_{\sf poly}$ has depth $d_{\sf poly}$.

            Moreover, there is a polynomial-time algorithm $\algopoly_{\vec{\ell}}$ that takes $\TurM \in \bs{\kappa \log T}$ as input, and outputs the description of $\TurM_{\sf poly}$.

		\item The polynomials $(P_i)_{i\in [d']}$ are downward self-reducible.
        
        \underline{Formally}, there is a $\max(n,h) \cdot h^{c'}$-time algorithm $\Base$ that takes inputs $\vec{\ell}$, $\TurM \in \bs{\kappa \cdot \log T}$, and $\vec{w} \in \F^{3m}$, and outputs $P_1(\vec{w})$. Also, there is an $h^{c'}$-time oracle algorithm $\DSR$ that takes inputs $\vec{\ell}$, $\TurM \in \bs{\kappa \cdot \log T}$, $i \in \{2,\dotsc,d'\}$, and $\vec{w} \in \F^{3m}$, and oracle access to a polynomial $\func{\WT{P}}{\F^{3m}}{\F}$, such that when it is given $P_{i-1}$ as the oracle, it outputs $P_i(\vec{w})$.
	\end{enumerate}
	
\end{theorem}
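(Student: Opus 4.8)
The plan is to prove this as a quantitative refinement of the layered-polynomial representation implicit in Chen--Tell \cite{ChenT21b}, which itself arithmetizes the doubly-efficient interactive proof of \cite{gkr15} in the streamlined form of \cite{gol17a}. I would reconstruct that arithmetization on the circuit $C_{\TurM}$, making two changes: (i) threading the description length and uniformity space of \emph{every} Turing machine and circuit through the construction, so that the output machine $\TurM_{\sf poly}$ and the transformation $\algopoly$ meet the stated non-asymptotic bounds; and (ii) choosing the parameters $h,p,m$ so that the degree bound $\Delta = c\cdot h\cdot\log^3 T$ holds for \emph{all} $\kappa$ (only $T_{\sf poly}$, $d'$, and $m$ being allowed to depend on $\kappa$). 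First I would replace $C_{\TurM}$ by an equivalent layered circuit of depth $d$ and size $T^{O(1)}$, routing every wire through dummy gates, and exhibit a fixed polynomial-time procedure that turns $\TurM$ into an $O(\kappa\log T)$-space, $O(\kappa\log T)$-description machine deciding the \emph{layered} adjacency relation of this circuit; the gates of each layer are indexed by $H^m$, where $|H|=h$, $\F=\F_p$, and $m$ is the least power of $3$ with $h^m\ge T^{\beta\kappa}$, the slack between $T^{O(1)}$ and $T^{\beta\kappa}$ (for $\beta$ a large universal constant) leaving room inside $\F^{3m}$ both for the three index blocks $(\vec z,\vec u,\vec v)$ of a wire and for the partial sumcheck state that appears below.

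Following \cite{gol17a}, set $\widehat W_\ell\colon\F^m\to\F$ (for $\ell=0,\dots,d$, extended to $\F^{3m}$ by ignoring the last $2m$ coordinates) to be the degree-$(h-1)$-per-variable extension of the Boolean gate values of layer $\ell$ of $C_{\TurM}$, so that $\widehat W_0$ encodes the input $1^n$ and, for a gate $\vec z$ of layer $\ell$,
\[
\widehat W_\ell(\vec z)=\sum_{\vec u,\vec v\in H^m}\widetilde{\mathrm{adj}}_\ell(\vec z,\vec u,\vec v)\cdot\bigl(1-\widehat W_{\ell-1}(\vec u)\,\widehat W_{\ell-1}(\vec v)\bigr),
\]
where $\widetilde{\mathrm{adj}}_\ell$ is a low-degree arithmetization of the layered connection relation. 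Since a direct arithmetization of the depth-$O(\kappa^2\log^2 T)$ circuit deciding that relation would have exponential degree, I would instead, as in Trevisan--Vadhan \cite{DBLP:journals/cc/TrevisanV07} and Chen--Tell, arithmetize only the \emph{bounded-size} one-step configuration formula of \autoref{fact:check-next-configuration} (over $\F_p$, reading out the $H$-valued coordinates by low-degree interpolation) and recover $\widetilde{\mathrm{adj}}_\ell$ through a \emph{nested} chain of low-degree polynomials that has its own downward self-reducibility; a careful choice of $h$ and $p$ keeps every polynomial in sight of total degree $O(h\log^3 T)\le\Delta$. The sequence $(P_i)_{i\in[d']}$ is then obtained by interleaving, over $\ell=1,\dots,d$, the $O(m)$ sumcheck round polynomials for the identity above (one per summed-out coordinate, plus a line-restriction polynomial that merges the two resulting evaluation points into one) with the nested chains for $\widetilde{\mathrm{adj}}_\ell$; counting rounds gives exactly $d'=c\kappa\log^2 T\,(d+\kappa^2\log T)$. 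The \textbf{Base} algorithm evaluates $P_1$ (the extension of $1^n$) at one point in $\max(n,h)\cdot h^{O(1)}$ time, the \textbf{faithful representation} reads $\bigl(C_{\TurM}(1^n)\bigr)_i = P_{d'}(\id(i),0^{2m})$ off the last polynomial, and \textbf{downward self-reducibility} amounts to running one sumcheck (or nested) round, computable in $h^{O(1)}$ time given an oracle to $P_{i-1}$.

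It then remains to build the circuit $C_{\sf poly}$ that on input $i\in[d']$ prints $\tt(P_i)$: evaluating a degree-$\Delta$ extension at all $|\F|^{3m}=T^{O(\kappa)}$ points in parallel gives $T_{\sf poly}=T^{c\kappa}$, and the depth is $O(d\log T)$ from the $d$-fold layer recursion of $C_{\TurM}$ together with $O(\kappa^2\log^2 T)$ from simulating the $O(\kappa\log T)$-space uniformity machine, i.e.\ $d_{\sf poly}=c(d\log T+\kappa^2\log^2 T)$. The point is to carry the $\kappa\log T$-bit description of $\TurM$ and the formula of \autoref{fact:check-next-configuration} through the arithmetization so that the resulting $\TurM_{\sf poly}$ has description length $\log T_{\sf poly}$, generates $C_{\sf poly}$ in space $\log T_{\sf poly}$, and is itself produced from $\TurM$ by a fixed polynomial-time $\algopoly_{\vec\ell}$.

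The main obstacle is this last accounting, not the arithmetization, which is by now standard. Because the present construction will be invoked a super-constant number of times along the recursion of \autoref{theo:intro_main}, with each new $\BF$ embedding the code of the previous generator, the description length and uniformity space of $\TurM_{\sf poly}$ must be kept to within a constant factor of $\log T_{\sf poly}$ uniformly in all parameters. This forces a fully non-asymptotic bookkeeping of the size, depth, description length, and uniformity space of \emph{every} circuit and Turing machine occurring in the GKR/Goldreich construction --- down to the one-step configuration relation --- which is where the bulk of the technical work lies.
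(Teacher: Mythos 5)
Your high-level framing is right: the theorem is a quantitative re-working of the Chen--Tell arithmetization of \cite{gkr15,gol17a}, the degree bound has to be made $\kappa$-free, and the real labor is the non-asymptotic bookkeeping of description length and uniformity space needed because the construction is invoked recursively. The \textbf{Base}/\textbf{DSR}/faithful-representation claims, the choice of $H,\F,m$ and the identification of $\algopoly$ as the core deliverable are all consistent with the paper.

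However, the specific way you propose to handle the adjacency relation does not work as stated, and this is not a matter of bookkeeping but of the structure of the polynomial chain. You keep the circuit at depth $d$, note that its layered adjacency relation cannot be directly arithmetized (the $O(\kappa\log T)$-space machine yields a depth-$O(\kappa^2\log^2 T)$ circuit, hence exponential degree), and propose to ``recover $\widetilde{\mathrm{adj}}_\ell$ through a nested chain of low-degree polynomials \ldots interleaving, over $\ell=1,\dots,d$, the $O(m)$ sumcheck round polynomials \ldots with the nested chains for $\widetilde{\mathrm{adj}}_\ell$.'' There are two problems with this per-layer interleaving. First, downward self-reducibility is a \emph{linear} chain: $P_i$ may query only $P_{i-1}$. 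But the sumcheck round that reduces a claim about $\widehat W_\ell$ to a claim about $\widehat W_{\ell-1}$ needs, at its last step, to evaluate $\widetilde{\mathrm{adj}}_\ell$ at a random point \emph{and} to keep its handle on $\widehat W_{\ell-1}$. If you splice a fresh sub-chain computing $\widetilde{\mathrm{adj}}_\ell$ between the sumcheck rounds of layers $\ell$ and $\ell-1$, then when that sub-chain finishes, $P_{i-1}$ is the last polynomial of the adjacency sub-chain, and the claim about $\widehat W_{\ell-1}$ has been lost; you would need $P_i$ to carry both pieces of state, which means redefining the $P_i$'s as products/bundles of two unrelated objects --- a step you do not describe and which is not routine. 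Second, the depth count: a nested chain for $\widetilde{\mathrm{adj}}_\ell$ coming from repeated squaring of $2^{s'}\times 2^{s'}$ configuration matrices ($s'=O(\kappa\log T)$) itself has $\Omega(\kappa^2\log^2 T)$ rounds, each of which is a sum over $H^m$ and so costs another $\Omega(m)=\Omega(\kappa\log T)$ sumcheck rounds. Repeating this for each of the $d$ layers gives $\Omega(d\cdot\kappa^3\log^3 T)$ polynomials, which exceeds the claimed $d'=c\kappa\log^2 T\,(d+\kappa^2\log T)$ by a factor of $\Theta(\kappa^2\log T)$ once $d$ dominates.

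The paper avoids both issues with a different decomposition. Rather than keeping the circuit at depth $d$ and treating the adjacency relation as a side computation, it \emph{absorbs} the adjacency computation into the circuit: it builds $D=D^{(3)}\circ D^{(2)}\circ D^{(1)}$ where $D^{(1)}$ is a projection hardcoding $\TurM$, $D^{(2)}$ does the repeated squaring (once, globally, adding depth $O(\kappa^2\log^2 T)$), and $D^{(3)}$ evaluates the now-explicit circuit (adding depth $O(d\log T)$). The point is that the \emph{layered adjacency relation of $D$ itself} is a fixed formula of size $O(\log^3 T)$ and depth $O(\log\log T)$, computable directly in $\poly(h)$ time, because $D^{(2)}$ and $D^{(3)}$ are fixed and $D^{(1)}$ is a projection whose wiring comes from the one-step relation of \autoref{fact:check-next-configuration}. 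A single sumcheck chain over $D$'s layers then produces all the $P_i$'s, with $\hat\Phi_i$ answerable inline inside $\DSR$ in $\poly(h)$ time, so downward self-reducibility is literal and the count $d'=(2m+1)\cdot O(\kappa^2\log^2 T+d\log T)+1$ lands exactly where the theorem says. In short: what you call a ``nested chain'' should not be a separate chain at all, but additional \emph{layers} of a single bigger circuit whose own adjacency relation is trivial; as proposed, the interleaving both breaks the oracle structure of \textbf{DSR} and inflates $d'$.
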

\begin{proof}
	We will follow the proof of~\cite[Proposition~4.7]{ChenT21b}, which in turn follows~\cite{gol17a} (see also~\cite{gol18}). %
    
    Recall that we use $\vec{\ell}$ to denote the input parameters $(\kappa,T,d,n,h,p)$.  In the following, we will simply use $C$ to denote the (low-depth) circuit $C_{\TurM} = \Ckt\mleft[T,\kappa \cdot \log T,n,n\mright](\TurM)$ for notational convenience, but we stress that $C$ depends on both $\vec{\ell}$ and $\TurM$ (and so do the later circuits constructed from $C$).
	
	\subsubsection{Construction of a Highly Uniform Circuit \texorpdfstring{$D$}{D}}

    The goal of this subsection is to replace $C$ by a layered circuit $D$ of comparable depth
	and polynomially larger width, whose wiring is so uniform that it can be decided by very small
	formulas. This is what ultimately allows us to arithmetize the adjacency relation of $D$.
	
	In more detail, recall that $C$ is a \emph{logspace}-uniform low-depth circuit. The first step is to construct a \emph{logtime}-uniform low-depth circuit $D$ with the same outputs as $C$, i.e., $D(1^n) = C(1^n)$. Given input $1^n$, $D$ first computes a description of $C = \Ckt\mleft[T,\kappa \cdot \log T,n,n\mright](\TurM)$ (represented as a $T \times T \times T$ adjacency relation tensor) and then computes the $\Eval$ function $\langle \langle C\rangle,n,d \rangle \mapsto C(1^n)$. Let $s \eqdef \kappa \cdot \log T$ and $s' \eqdef O(s + \log(3\log T))$ be such that each configuration of $\TurM$ on $3\log T$-bit inputs can be described by $s'$ bits.
	
	The circuit $D$ is constructed by composing the following three subcircuits. Let $\mu \in \N$ be a sufficiently large universal constant. We will describe and analyze their complexities (or state the complexity bounds proved in~\cite{ChenT21b,gol17a}).
	
	\begin{enumerate}
		\item ({\bf Computing the adjacency matrices for configurations.}) The first circuit $D^{(1)}$ takes $n$ bits as input (which are supposed to be $1^n$), and outputs a list of $T^3$ matrices from $\bs{2^{s'} \times 2^{s'}}$, such that the $(u,v,w)$-th matrix\footnote{We use $(u,v,w) \in [T]^3$ to denote the integer $(u-1)T^2+(v-1)T+w \in [T^3]$.} $M^{(u,v,w)}$ satisfies the following condition: for every $\gamma,\gamma' \in \bs{s'}$, $M^{(u,v,w)}[\gamma,\gamma'] = 1$ if and only if $\algoNext(\TurM,s,(u,v,w),\gamma,\gamma')$ (\ie, $\gamma'$ is the configuration obtained by running $\TurM$ for one step on configuration $\gamma$ and input $(u,v,w)$ with space bound $s$). That is, $M^{(u, v, w)}$ is the transition matrix of $\TM$ on input $(u, v, w)$; looking ahead, we will compute the $(u, v, w)$-th entry of the adjacency relation of $C$, \ie $\TM(u, v, w)$, by taking powers of $M^{(u, v, w)}$. We also recall we assumed that if $\gamma$ is the accepting or the rejecting configuration, then its next configuration is $\gamma$ itself.
		
		{\bf Complexity of $D^{(1)}$.} $D^{(1)}$ can be implemented by a projection (\ie, depth $d_{D^{(1)}} = 2$ and size $T_{D^{(1)}} =  T^3 \cdot 2^{2s'}$).\footnote{Note that we can implement projections and restrictions of input bits to $0$ and $1$ using two layers of $\NAND$ gates.} Moreover, from~\autoref{fact:check-next-configuration}, given $\vec{\ell}$ and $\TurM$, in polynomial time we can compute a Turing machine $\TurM_{D^{(1)}} \in \bs{(\kappa + \mu) \cdot \log T}$ such that
		\[
		\Ckt\mleft[T_{D^{(1)}}, s_{D^{(1)}},n,T^3 \cdot 2^{2s'}\mright]\mleft(\TurM_{D^{(1)}}\mright) = D^{(1)},
		\]
		where $s_{D^{(1)}} = \mu \cdot s'$.
		
		\item ({\bf Computing the adjacency relation tensor of $C$ via matrix multiplication.}) The second circuit $D^{(2)}$ takes a list of $T^3$ matrices from $\bs{2^{s'} \times 2^{s'}}$ as input, and outputs the layered adjacency tensor ${\sf T}_C \in \bs{T \times T \times T}$ of $C$ followed by the encoding of a pair $(n,d)$. That is, given the output of $D^{(1)}(1^n)$, for every $(u, v, w) \in [T]^3$, ${\sf T}_C[u, v, w] = 1$ if and only if $\TurM(u, v, w) = 1$, which means that the gates in $C$ indexed by $v$ and by $w$ feed into the gate in $C$ indexed by $u$. The tensor ${\sf T}_C$ can be computed by performing repeated matrix squaring $s'$ times: ${\sf T}_C[u, v, w] = (M^{(u, v, w)})^{2^{s'}}$.
		
		{\bf Complexity of $D^{(2)}$.} $D^{(2)}$ can be implemented by a circuit of depth $d_{D^{(2)}} = \mu \cdot (s')^2$ and size $T_{D^{(2)}} =  T^3 \cdot 2^{\mu s'}$. Moreover, from~\cite{ChenT21b,gol17a} (note that $D^{(2)}$ does not depend on $\TurM$), given $\vec{\ell}$, in polynomial time we can compute a Turing machine $\TurM_{D^{(2)}} \in \bs{\mu \cdot \log T}$ such that
		\[
		\Ckt\mleft[T_{D^{(2)}}, s_{D^{(2)}},T^3 \cdot 2^{2s'},T^3 + |(n,d)|\mright]\mleft(\TurM_{D^{(2)}}\mright) = D^{(2)},
		\]
		where $s_{D^{(2)}} = \mu \cdot s'$.
		
		\item ({\bf Computing $\Eval$.}) The final circuit $D^{(3)}$ takes $\langle \langle C\rangle,n,d \rangle$ as input, and outputs $\Eval(\langle C\rangle,n,d)$.
		
		{\bf Complexity of $D^{(3)}$.} $D^{(3)}$ can be implemented by a circuit of depth $d_{D^{(3)}} = \mu \cdot d \cdot \log T$ and size $T_{D^{(3)}} =  T^{\mu}$. Moreover, from~\cite{ChenT21b,gol17a} (note that $D^{(3)}$ does not depend on $\TurM$), given $\vec{\ell}$, in polynomial-time we can compute a Turing machine $\TurM_{D^{(3)}} \in \bs{\mu \cdot \log T}$ such that
		\[
		\Ckt\mleft[T_{D^{(3)}}, s_{D^{(3)}},T^3 + |(n,d)|, n\mright]\mleft(\TurM_{D^{(3)}}\mright) = D^{(3)},
		\]
		where $s_{D^{(3)}} = \mu \cdot s'$.
	\end{enumerate}
	
	Formally, we have
	\[
	D = D^{(3)} \circ D^{(2)} \circ D^{(1)}.
	\]
	
	Let $\beta \in \N$ be a sufficiently large constant such that $\beta \ge 100\mu$. The following claim summarizes the required properties of $D$ for us.
	
	\newcommand{\dDTurM}{d_{D}}
	\newcommand{\algoPhip}{\mathbb{A}_{\Phi'}}
	
	\begin{claim}\label{claim: uniformity of D}
		The following properties about the circuit $D$ are true.
		\begin{enumerate}
			\item The depth of $D$ is $\dDTurM = \beta \cdot (\kappa^2 \cdot \log^2 T + d \cdot \log T)$ and the width of $D$ is $T'_D=T^{\beta\kappa}$.
			\item The layered adjacency relation function $\func{\Phi'}{[\dDTurM]\times\bs{3\log(T'_D)}}{\bs{}}$ of $D$ can be decided by a formula of depth $O(\log\log T)$ and size $O(\log^3 T)$. Moreover, there is an algorithm $\algoPhip$ that given $\vec{\ell}$ and $\TurM$ as input, outputs the formula above in $O(\kappa \log T)$ space.
			
			\item There is a Turing machine $\TurM_D \in \bs{\beta\kappa\log T}$ such that
			\[
			\Ckt[T_D,s_D,n,n](\TurM_D) = D,\footnote{Note that $\Ckt$ generates the unlayered version of $D$ of size $T'_D \cdot (\dDTurM+1)$, Without loss of generality we can assume the first $T'$ gates are on the first layer, the next $T'$ gates are on the second layer, and so on.}
			\]
			where $T_D = T'_D \cdot (\dDTurM+1)$ and $s_D = \beta\kappa\log T$. Moreover, given $\vec{\ell}$ and $\TurM$ as input, the description of $\TurM_D$ can be computed in polynomial time.
		\end{enumerate}
	\end{claim}
	\begin{proof}[Proof of~\autoref{claim: uniformity of D}]\renewcommand{\qedsymbol}{$\diamond$}            
		By construction, the size of $D$ is bounded by $\poly(T) \cdot 2^{O(s')} \le T^{O(\kappa)}$ (recall that $s' = O(s+\log(3\log T))$ and $s = \kappa \log T$), and its depth is bounded by $O(s^2 + d \cdot \log T)$. The first bullet then follows directly from the fact that $\beta$ is sufficiently large. %
		
		Recall that the $D^{(1)}$ part of $D$ has depth $d_{D^{(1)}} = 2$. To see the complexity of computing $\Phi'(i, -, -, -)$ for $i > 2$, we note that the layers corresponding to $D^{(2)}$ and $D^{(3)}$ \emph{do not} depend on $\TurM$. Hence the complexity of computing their layered adjacency  relation function follows directly from~\cite[Claim~4.7.1]{ChenT21b}.\footnote{We note that although~\cite{ChenT21b,gol17a} only claim a $\polylog(T)$ bound on the formula size, the formula is indeed very simple and its size and depth can be easily bounded by $O(\log^3 T)$ and $O(\log\log T)$, respectively; see~\cite[Page 8-9]{gol17a} for details.} Also, the complexity of computing $\Phi'(i, -, -, -)$ for $i \in \{1,2\}$ follows directly from~\autoref{fact:check-next-configuration}. To see the moreover part, again, the case for $i > 2$ follows from~\cite[Claim~4.7.1]{ChenT21b}, and the case for $i \in \{1,2\}$ follows from~\autoref{fact:check-next-configuration}.\footnote{Strictly speaking we need to combine the formulas for two cases to obtain a single formula for $\Phi'$. The overhead of doing so is negligible so we omit this discussion here.}
		
		Finally, to obtain the algorithm that computes $\TurM_D$, we simply apply the composition $\algocomp$ (from~\autoref{fact:comp-of-circuits}) twice to compose the circuits $D^{(1)}, D^{(2)}, D^{(3)}$ in order and add some dummy gates to the circuit. The space bound and the description size bound can also be verified easily.
	\end{proof}

	\subsubsection{Arithmetization of \texorpdfstring{$D$}{D}}
	
	\newcommand{\emd}{\mathsf{emd}}
We now pass from the highly uniform layered circuit $D$ to a collection of low-degree
	polynomials over a finite field that encode its layer-by-layer computation. This is
	the algebraic ``arithmetization'' step in the spirit of \cite{gkr15,gol17a,ChenT21b}.
    
	The construction of the polynomials and their corresponding algorithms can then be done in the same way as in~\cite{ChenT21b}. We only state the necessary changes to establish our theorem and, in particular, we make the dependence on parameters explicit. 
	
	Note that $|\F|^{m} = p^m \le \poly(h^{27m}) \le T^{O(\beta\kappa)} \le T^{O(\kappa)}$ ($\beta$ is a universal constant), from our assumption that $p \le h^{27}$ and our choice of $m$.

	First, we need an arithmetization of each layered adjacency relation $\Phi'_i \eqdef \Phi'(i, -, -, -)$.
	
	\def\APhi{\mathbb{A}_{\hat{\Phi}}}
	\def\Earith{E^{\mathsf{arith}}}
	\def\TMarith{\TurM^{\mathsf{arith}}}
	
	\begin{claim}\label{claim:hat-Phi}
		For $i\in[\dDTurM]$ there exists a polynomial $\func{\hat{\Phi}_i}{\F^{3m}}{\F}$ that satisfies the following:
		\begin{enumerate}
			\item $\hat{\Phi}_i$ is a low-degree extension of $\Phi'_i$.
            
            \underline{Formally}, For every $(\vec{w},\vec{u},\vec{v})=z_1,...,z_{3m}\in H^{3m}$ we have that $\hat{\Phi}_i(\vec{w},\vec{u},\vec{v})=1$ if gate $\vec{w}$ in the $i^{th}$ layer of $D$ is fed by gates $\vec{u}$ and $\vec{v}$ in the $(i-1)^{th}$ layer of $D$, and $\hat{\Phi}_i(\vec{w},\vec{u},\vec{v})=0$ otherwise. 
			
			Moreover, the degree of $\hat{\Phi}_i$ is at most $O(h\cdot\log^3 T)$, and there exists an algorithm that on input $\vec{\ell}, \TM, i, \vec{w}, \vec{u}, \vec{v}$, computes $\hat{\Phi}_i(\vec{w},\vec{u},\vec{v})$ in $\poly(h)$ time.
			
			\item The truth table of $\hat{\Phi}_i$ can be computed in low depth.
            
            \underline{Formally}, for a universal constant $c_1>1$, there exists a circuit $C_{\hat{\Phi}}$ of size $T_{\hat{\Phi}} \eqdef T^{c_1\kappa}$ and depth $c_1\kappa \cdot \log T$ that on input $i \in [d_D]$ outputs $\tt(\hat{\Phi}_i) \in \F^{|\F|^{3m}}$ \emph{(}represented as a Boolean string\emph{)}. Moreover, there is a polynomial-time algorithm $\APhi$ that takes $\vec{\ell}$ and $\TurM \in \{0, 1\}^{\kappa\log T}$ as input, and outputs the description of a Turing machine $\TM_{\hat{\Phi}} \in \{0, 1\}^{c_1\kappa\log T}$ such that
			\[C_{\hat{\Phi}} = \Ckt\mleft[T_{\hat{\Phi}}, c_1\cdot \kappa\log T, \log(d_D+1), |\F|^{3m}\log |\F|\mright](\TM_{\hat{\Phi}}).\]
		\end{enumerate}
	\end{claim}
        \begin{proof}[Proof Sketch of~\autoref{claim:hat-Phi}]\renewcommand{\qedsymbol}{$\diamond$}
            
            We first define $\hat{\Phi}_i$ and then establish each item separately. 
            
            \paragraph*{Construction of $\hat{\Phi}_i$.} Let $F_{\Phi'}$ be the $O(\log\log T)$-depth $O(\log^3 T)$-size formula computing $\Phi' \colon [\dDTurM] \times \{0,1\}^{3 \cdot \log T'_D} \to \bs{}$ from~\autoref{claim: uniformity of D}. For every $i \in [\dDTurM]$, let $F_{i}$ be the restriction of $F_{\Phi'}$ by fixing the first input to be $i$. Then, we arithmetize $F_{i}$ by replacing every $\texttt{NAND}$ gate in $F_{i}$ by an arithmetic gate $\widetilde{\texttt{NAND}}:\F^2 \to \F$ computing $\widetilde{\texttt{NAND}}(u, v) \eqdef 1-uv$. Denote the new formula (which is now an \emph{arithmetic} formula) by $\hat{F}_{i}$. 
            
            For each $j$, let $\pi_j \colon H \to \{0,1\}$ be the mapping that maps $z \in H$ to the $j$-th %
            bit of the encoding  of $z$. Note that since $H$ consists of the smallest $h$ elements in $\F$, we know that $\pi(z) = (\pi_1(z),\dotsc,\pi_{\log h}(z))$ is a bijection between $H$ and $\bs{\log h}$.\footnote{More specifically, by our specific encoding of $H$ as strings from $\{0,1\}^{\log |\F|}$, $\pi(z)$ is simply the first $\log h$ bits of the encoding of $z$, hence it can be computed by a projection.}
            
            For each $j \in [\log h]$, let $\hat{\pi}_j \colon \F \to \F$ be the unique degree-$(h-1)$ extension of $\pi_j$ to $\F$, that can be computed via standard interpolation via logspace-uniform circuits of $O(\log (h \cdot \log |\F|)) = 
            O(\log T)$
            depth  and
            $\polylog(T)$ size~\cite{HesseAB02,HealyV06} (see  \cite[Claim 4.7.2]{ChenT21b} for the details). We also let $\hat{\pi}(z) = (\hat{\pi}_1(z),\dotsc,\hat{\pi}_{\log h}(z))$. Then, we set
            \[
            \hat{\Phi}_i(z_1, \dots, z_{3m}) \eqdef \hat{F}_{i}\mleft(\hat{\pi}(z_1), \hat{\pi}(z_2), \dots, \hat{\pi}(z_{3m})\mright).
            \]
            
            We also use $F_{\hat{\Phi}_i}$ to denote the \emph{arithmetic} formula on the right side above that computes  $\hat{\Phi}_i$.
            
            From the construction above, the first item of the claim can be proved identically as~\cite[Claim~4.7.2]{ChenT21b}. In particular, the degree of each $\hat{F}_i$ is at most the formula complexity of $F_{\Phi'}$ and the degree of $\hat{\pi}$ is at most $O(h)$, hence the degree of $\hat{\Phi}_i$ is $O(h\cdot \log^3 T)$. %
			It remains to establish the second item. 
            
            \paragraph*{Construction of $C_{\hat{\Phi}}$.} We hardwire the description of $F_{\Phi'}$ into $C_{\hat{\Phi}}$. The circuit $C_{\hat{\Phi}}$ takes $i \in [\dDTurM]$ as input, performs the above computation to obtain a description of the arithmetic formula $F_{\hat{\Phi}_i}$, and then outputs the truth table of $F_{\hat{\Phi}_i}$ by evaluating it on all vectors in $\F^{3m}$.
            
            In more detail, computing the description of $F_{\hat{\Phi}_i}$ from the description of $F_{\Phi'}$ and $i$ can be done %
            in $O(\log T)$ depth and
            $\polylog(T)$ size. $C_{\hat{\Phi}}$ then evaluates $F_{\hat{\Phi}_i}$ on all vectors from $\F^{3m}$, which can be done in $\poly(|\F|^m)$ size and $O(\log (|\F|^m))$ depth. The second item (except for the moreover part) then follows by setting $c_1$ to be sufficiently large and recalling that $|\F|^m \le T^{O(\beta\kappa)}$.

            \paragraph*{Establishing the uniformity.}
            
            \newcommand{\muhat}{\mu_{\hat{\Phi}}}
            Finally, we establish the moreover part of the second item. Let $\muhat \in \N$ be a sufficiently large universal constant that depends on the space complexity of the algorithm $\algoPhip$ from~\autoref{claim: uniformity of D}.
            
            Our algorithm $\APhi$ works as follows:
            
            \begin{enumerate}
            	\item We first construct a Turing machine $\TurM_{[1]}$ with $\vec{\ell}$ and $\TurM$ hardwired. $\TM_{[1]}$ corresponds to a circuit $C_{[1]}$ that takes $i \in [\dDTurM]$ as input and outputs $i$ together with the description of $F_{\Phi'}$.\footnote{Precisely, $\TM_{[1]}$ simulates $\algoPhip$ on input $\vec{\ell}$ and $\TurM$ to construct a projection that maps $i$ to $(i,F_{\Phi'})$.} $C_{[1]}$ has depth $d_{[1]} = O(1)$ and size $T_{[1]} = \polylog(T)$. Let $s_{[1]} = \muhat \cdot \kappa \log T$, we have
            	\[
            	C_{[1]} = \Ckt\mleft[T_{[1]},s_{[1]},\log d_D, \log d_D + |F_{\Phi'}|\mright]\mleft(\TM_{[1]}\mright)
            	\]
            	and $\TM_{[1]}$ has description size at most $|\TurM| + \muhat \cdot \log T = (\kappa + \muhat) \cdot \log T$.
            	
            	Here, we crucially use the fact that the algorithm $\mathbb{A}_{\Phi'}$ from~\autoref{claim: uniformity of D} runs in $O(\kappa\log T)$ space (and $\muhat$ is sufficiently large).

            	\item Then we construct a Turing machine $\TurM_{[2]}$ that corresponds to a circuit $C_{[2]}$ that takes $i \in [\dDTurM]$ together with the description of $F_{\Phi'}$ as input and outputs $\tt(\hat{\Phi}_i)$. By the discussion above, from $\vec{\ell}$ we can compute a Turing machine $\TurM_{[2]} \in \bs{\muhat \kappa \log T}$ such that for $T_{[2]} = \poly(|\F|^m) \le T^{\muhat \kappa}$, $d_{[2]} = O(\log (|\F|^m)) \le \muhat \kappa \cdot \log T$, $s_{[2]} = \muhat \kappa \log T$, we have
            	\[
            	C_{[2]} = \Ckt\mleft[T_{[2]},s_{[2]},\log d_D+|F_{\Phi'}|,|\F|^{3m}\mright]\mleft(\TurM_{[2]}\mright)\mathcomma
            	\]
            	and $C_{[2]}$ has depth $d_{[2]}$.

            	\item Finally, $\APhi$ composes $C_{[1]}$ and $C_{[2]}$ by applying~\autoref{fact:comp-of-circuits} and outputs the obtained Turing machine as $\TurM_{\hat{\Phi}}$. Setting $c_1$ sufficiently large completes the proof.\qedhere
            \end{enumerate}
            
        \end{proof}

	Then we define the following polynomials, according to~\cite[Definition~4.6]{ChenT21b}.

	\paragraph*{Input polynomial.} Let $\func{\al_0}{H^m}{\bs{}}$ represent the string $1^n0^{h^m-n}$, and let $\func{\hat{\al}_0}{\F^m}{\F}$ be the ``arithmetization'' of $\al_0$, defined by 
	\[
	\hat{\al}_0(\vec{w}) = \sum_{\vec{z}\in H^{m'}\times\set{0}^{m-m'}}\de_{\vec{z}}(\vec{w})\cdot \al_0(\vec{z}).
	\]
	Here, $m' \leq m$ is the minimal integer such that $h^{m'} \geq n$, and $\de_{\vec{z}}$ is Kronecker's delta function (\ie, $\de_{\vec{z}}(\vec{w})=\prod_{j\in[m]}\prod_{a\in H\setminus\set{z_j}}\frac{w_j-a}{z_j-a}$). 
	
	\paragraph*{Layer polynomials.} For each $i\in[\dDTurM]$, let $\func{\al_i}{H^m}{\bs{}}$ represent the values of the gates at the $i^{th}$ layer of $D$ in the computation of $D(1^n)$ (with zeroes in locations that do not index valid gates). Recall that we consider circuits consisting of  $\mathsf{NAND}$ gates, where for $a,b \in \{0,1\}$ we have $\mathsf{NAND}(a,b) = 1 - a \cdot b$. We define $\func{\hat{\al}_i}{\F^m}{\F}$ as
	\begin{equation}\label{eq: layered polynomials}
	\hat{\al}_i(\vec{w}) = \sum_{\vec{u},\vec{v}\in H^m} \hat{\Phi}_i(\vec{w},\vec{u},\vec{v})\cdot\left(1-\hat{\al}_{i-1}(\vec{u})\cdot\hat{\al}_{i-1}(\vec{v})\right).
	\end{equation}
	
	Note that $\hat{\alpha}_i$ is an ``arithmetization'' of $\alpha_i$ in the sense that for every $\vec{w} \in H^m$, $\alpha_i(\vec{w}) = \hat{\alpha}_i(\vec{w})$.
	
	\paragraph*{Sumcheck polynomials.} For each $i\in[\dDTurM]$, let $\func{\hat{\al}_{i,0}}{\F^{3m}}{\F}$ be the polynomial
	\begin{equation}\label{eq: first sumcheck polynomial}
	\hat{\al}_{i,0}(\vec{w},\sigma_{1},...,\sigma_{2m}) = \hat{\Phi}_i(\vec{w},\sigma_{1,...,m},\sigma_{m+1,...,2m})\cdot\left(1-\hat{\al}_{i-1}(\sigma_{1,...,m})\cdot\hat{\al}_{i-1}(\sigma_{m+1,...,2m})\right).
	\end{equation}
	
	For every $j\in[2m]$, let $\func{\hat{\al}_{i,j}}{\F^{3m-j}}{\F}$ be the polynomial
	\begin{align}
		&\hat{\al}_{i,j}(\vec{w},\sigma_{1},...,\sigma_{2m-j}) = \nonumber\\
		&\sum_{\sigma_{2m-j+1},...,\sigma_{2m}\in H}\hat{\Phi}_i(\vec{w},\sigma_{1,...,m},\sigma_{m+1,...,2m})\cdot\left(1-\hat{\al}_{i-1}(\sigma_{1,...,m})\cdot\hat{\al}_{i-1}(\sigma_{m+1,...,2m})\right),\label{eq: sumcheck polynomials}
	\end{align}
	where the notation $\sigma_{k,...,k+r}$ means $\sigma_k,\sigma_{k+1},...,\sigma_{k+r}$. It is easy to check that $\hat{\al}_{i,2m}=\hat{\al}_i$.	Clearly, we also have
    \begin{equation}
        \hat{\alpha}_{i, j}(\vec{w}, \sigma_1, \dots, \sigma_{2m-j}) = \sum_{\sigma_{2m-j+1} \in H}\hat{\alpha}_{i, j-1}(\vec{w}, \sigma_1, \dots, \sigma_{2m-j}, \sigma_{2m-j+1}).
        \label{eq: dsr for sumcheck polynomials}
    \end{equation}

	We are now ready to define the sequence $(P_i)_{i \in [d']} = \mleft(P_i^{\vec{\ell},\TurM}\mright)_{i \in [d']}$. We set $d' \eqdef (2m+1) \cdot \dDTurM + 1$ and 
	\[
	(P_i)_{i \in [d']} =
	(\hat{\al}_0,\hat{\al}_{1,0},\dotsc,\hat{\al}_{1,2m},\hat{\al}_{2,0},\dotsc,\hat{\al}_{2,2m},\dotsc,\hat{\al}_{\dDTurM,0},\dotsc,\hat{\al}_{\dDTurM,2m}).
	\]
	For those $\hat{\al}_{i,j}$ (and $\hat{\al}_0$) that take less than $3m$ variables, we add some dummy variables at the end to make all polynomials taking exactly $3m$ variables.
	
	From the definitions of $m$ and $\dDTurM$, we have $m \le 3 \cdot \beta \kappa \cdot \log T + 1$ and $\dDTurM = \beta \cdot (\kappa^2 \cdot \log^2 T + d \cdot \log T)$. Hence, we have $d' =(2m+1)\cdot \dDTurM + 1 \le c \kappa \cdot \log^2 T \cdot (d + \kappa^2 \log T) $ as desired.\footnote{We can add identical polynomials at the end to make $d'$ exactly $c \kappa \cdot \log^2 T \cdot (d + \kappa^2 \log T)$ as in the theorem statement.}
	
	Below we verify the desired properties of the sequence $(P_i)_{i \in [d']}$. 
	
	\paragraph*{Arithmetic setting, faithful representation, and downward self-reducibility.} First, the degree bounds of all these polynomials follow directly from their definitions and from the degree bound on $\hat{\Phi}_i$ (from \autoref{claim:hat-Phi}). The faithful representation property also follows directly from the definition of $\alpha_{\dDTurM}$ and $\hat{\al}_{\dDTurM,2m} = \hat{\al}_{\dDTurM}$.
    
    Finally, the downward self-reducibility of the polynomials follows  from the complexity of computing $\hat{\Phi}_i$ (from~\autoref{claim:hat-Phi}) and the definitions of these polynomials, similarly to the proof of \cite[Proposition~4.7]{ChenT21b}. For example, for each $2\le i\le d_D$ and each $\vec{x} \in \F^{3m}$, one can compute $\hat{\alpha}_{i, 0}(\vec{x})$ in $\poly(h)$ time with oracle access to $\hat{\alpha}_{i-1, 2m} = \hat{\alpha}_{i-1}$ using \eqref{eq: first sumcheck polynomial}; for each $1\le i\le d_D$, $1\le j\le 2m$ and each $\vec{x} \in \F^{3m-j}$, one can compute $\hat{\alpha}_{i, j}(\vec{x})$ in $\poly(h)$ time with oracle access to $\hat{\alpha}_{i, j-1}$ using \eqref{eq: dsr for sumcheck polynomials}. %
	
	\subsubsection{Complexity of the Polynomials}
	
	\newcommand{\arith}{\mathsf{arith}}
	\newcommand{\tthatPhi}{\tt\text{-}\lower0.2ex\hbox{\scriptsize$\hat{\Phi}$}}

    We finally verify that the truth tables of all
	polynomials $P_i$ can be computed by a single uniform low-depth circuit $C_{\sf poly}$ whose
	description can be produced from $\vec{\ell}$ and $\TM$ in polynomial time (Item (3) of  \autoref{theo:poly-rep}).

    The argument below is straightforward but tedious. We first give a high-level overview.
	
	\paragraph*{High-level overview of the construction.} To construct the circuit $C_{\sf poly}$ that maps $i' \in [d']$ to $\tt(P_{i'})$, we will construct three subcircuits $C_{\al}$, $C_{\tthatPhi}$, and $C_{\arith}$ such that:
	\begin{enumerate}
		\item $C_{\al}$ maps $i'\in[d']$ to $(\tt(\alpha_{i-1}), i, j)$. Here, if $i' \ge 2$, then $i\in \{1, \dots, d_D\}$ and $j\in\{0, 1, \dots, 2m\}$ satisfies that $P_{i'} = \hat{\alpha}_{i, j}$ and $\tt(\alpha_{i-1}) \in \bs{h^m}$ denotes the values of the gates at the $i$-th layer of $D$. If $i' = 1$, then we consider $i=j=0$ and $C_{\al}$ outputs $(\tt(\alpha_0), 0, 0)$. %
		\item $C_{\tthatPhi}$ maps $(\tt(\alpha_{i-1}), i, j)$ to $(\tt(\alpha_{i-1}), i, j, \tt(\hat{\Phi}_i))$.
		\item $C_{\arith}$ maps $(\tt(\alpha_{i-1}), i, j, \tt(\hat{\Phi}_i))$ to $\tt(\hat{\alpha}_{i, j}) \in \F^{|\F|^{3m}}$.
	\end{enumerate}
	
	$C_{\sf poly}$ is then simply $C_{\arith} \circ C_{\tthatPhi} \circ C_{\al}$. To compute the Turing machine $\TurM_{\sf poly}$ that corresponds to $C_{\sf poly}$, we construct the Turing machines $\TurM_{\al}$, $\TurM_{\tthatPhi}$, and $\TurM_{\arith}$ corresponding to the three circuits above, and compose them using~\autoref{fact:comp-of-circuits}.
	
	\paragraph*{Construction of $C_{\al}$ and $\TurM_{\al}$.} First, we construct a circuit $C_{\al}$ that takes as input $i'\in[d']$ and outputs $(\tt(\alpha_{i-1}), i, j)$. To construct $C_\al$, we first compute $i$ and $j$ from $i'$ using basic arithmetic, and then truncate $D$ up to its $i$-th layer. It is easy to see that given the Turing machine $\TurM_D$ that specifies the circuit $D$, in polynomial-time we can construct a Turing machine $\TurM_{\al} \in \bs{|\TurM_D| + \mu}$ such that (in what follows, we write $|\langle i,j\rangle| = \log(\dDTurM+1) + \log(2m+1)$ for convenience):
	\[
	\Ckt[T_{\al},s_\al, \log (d'),h^m + |\langle i, j\rangle|](\TurM_\al) = C_\al,
	\]
	where $T_\al = \mu \cdot T_D$, $s_\al = 2 s_{D}$. Moreover, $C_{\al}$ has depth at most $d_{\al} = 2 \cdot \dDTurM$.
	
	\paragraph*{Construction of $C_{\tthatPhi}$ and $\TurM_{\tthatPhi}$.} Let $c_1$ be the universal constant from~\autoref{claim:hat-Phi}. Next we construct a circuit $C_{\tthatPhi}$ that on input $(\tt(\alpha_{i-1}), i, j)$, outputs $(\tt(\alpha_{i-1}), i, j, \tt(\hat{\Phi}_i))$. It is straightforward to obtain this circuit from the circuit $C_{\hat{\Phi}}$ constructed in \autoref{claim:hat-Phi}. In other words, given $\vec{\ell}$ and $\TurM_{\hat{\Phi}} \in \{0, 1\}^{c_1\kappa\log T}$ as input (where $\TurM_{\hat{\Phi}}$ is the Turing machine that generates $C_{\hat{\Phi}}$ as defined in \autoref{claim:hat-Phi}), we can compute a Turing machine $\TurM_{\tthatPhi} \in \{0, 1\}^{2\cdot c_1\kappa\log T}$ such that
    \[
        \Ckt[T_{\tthatPhi}, s_{\tthatPhi}, h^m + |\langle i, j\rangle|, h^m + |\langle i, j\rangle| + |\F|^{3m}\log |\F|](\TurM_{\tthatPhi}) = C_{\tthatPhi},
    \]
    where $T_{\tthatPhi} = T^{2c_1\kappa}$, $s_{\tthatPhi} = 2c_1\kappa\log T$. Moreover, $C_{\tthatPhi}$ has depth $d_{\tthatPhi} = 2c_1\kappa\log T$.
    
    \paragraph*{Construction of $C_{\arith}$ and $\TurM_{\arith}$.} We construct a circuit $C_{\arith}$ that maps $(\tt(\alpha_{i-1}), i, j, \tt(\hat{\Phi}_i))$ to $\tt(\hat{\alpha}_{i, j}) \in \F^{|\F|^{3m}}$. (Recall that throughout this proof we view $\hat{\al}_{i,j}$ as a $3m$-variable polynomial by adding dummy variables at the end.) Suppose that $i \geq 1$ (the base case $i = j = 0$ can be handled similarly). If $j = 0$, $C_{\arith}$ computes $\tt(\hat{\alpha}_{i, 0})$ using \autoref{eq: first sumcheck polynomial}, otherwise ($j \ge 1$) $C_{\arith}$ computes $\tt(\hat{\alpha}_{i, j})$ using \autoref{eq: sumcheck polynomials}. (Note that both \autoref{eq: first sumcheck polynomial} and \autoref{eq: sumcheck polynomials} only depend on the values of $\hat{\alpha}_{i-1}$ over $H^m$, which is exactly $\tt(\alpha_{i-1})$ due to our arithmetization.) Since arithmetic operations over $\F$ (including iterated addition, multiplication, and inverse) are in logspace-uniform $\NC^1$~\cite{HesseAB02,HealyV06},\footnote{It is in fact in logtime-uniform $\TC^0$, but here we only need it to be in logspace-uniform $\NC^1$.} it follows that $C_{\arith}$ is of $T_{\sf arith} \eqdef \poly(|\F|^m)$ size and $d_{\sf arith}\eqdef O(m \log |\F|)$ depth. Moreover, $C_{\arith}$ does not depend on $\TM$, and we can compute a Turing machine $\TMarith$ from $\vec{\ell}$ in time $\polylog(T)$ such that
    \[
    \Ckt[T_{\sf arith}, s_{\arith}, h^m +|\langle i, j\rangle| + |\F|^{3m}\log |\F|, |\F|^{3m}\log|\F|](\TMarith) = C_{\arith},
    \]
    where $s_{\arith} = \mu \cdot \beta\kappa\log T$.
	
	Composing $\TurM_{\al}$, $\TurM_{\tthatPhi}$, and $\TurM_{\arith}$ by applying~\autoref{fact:comp-of-circuits} twice gives the desired Turing machine $\TurM_{\sf poly}$ that computes the desired circuit $C_{\sf poly}$.
	
	\paragraph*{Complexity of $C_{\sf poly}$ and $\TurM_{\sf poly}$.} 
	Finally, we verify that $\TurM_{\sf poly}$ and $C_{\sf poly}$ satisfy our requirements. First, from the discussions above, we can bound the size of $C_{\sf poly}$ by $T_{\al} + T_{\tthatPhi} + T_{\arith} \le T_{\sf \poly} = 2^{c \cdot \kappa \log T}$ by picking a sufficiently large $c$. Note that $m \log |\F| = \log (p^m) \le O(\kappa\log T)$. The depth of $C_{\sf poly}$ can be bounded by $d_{\sf poly} = d_{\al} + d_{\tthatPhi} + d_{\arith} \le c \cdot (\kappa^2 \cdot \log^2 T + d \cdot \log T)$ as desired.
	
	From~\autoref{fact:comp-of-circuits}, we have that 
	\[
	|\TurM_{\sf poly}| \le 100 \cdot \left(|\TurM_{\al}| + |\TurM_{\tthatPhi}| + |\TurM_{\arith}| + \log (|\F|^{3m})\right) \le c \cdot \kappa \log T = \log T_{\sf poly}
	\]
	by setting $c$ sufficiently large. The space complexity of $\TurM_{\sf poly}$ can be bounded by 
	\[
	100 \cdot \left(s_{\al} + s_{\tthatPhi} + s_{\arith} \right) \le c \cdot \kappa \log T = \log T_{\sf poly}
	\]
	as well. This completes the proof of \autoref{theo:poly-rep}.
\end{proof}

\subsection{Improved Chen--Tell Generator: Proof of~\autoref{thm:HSG-Chen-Tell-poly}}

\newcommand{\HSGpoly}{\HSG^{\sf layer}}
\newcommand{\Reconpoly}{\Recon^{\sf layer}}

Now we are ready to prove~\autoref{thm:HSG-Chen-Tell-poly} by plugging every polynomial from \autoref{theo:poly-rep} into our modified Shaltiel--Umans generator (\autoref{theo:SU-main}).

\begin{proof}[Proof of~\autoref{thm:HSG-Chen-Tell-poly}]
	We first observe that we can assume $\rho = 1$ without loss of generality. To see how the general case follows from the case that $\rho = 1$, letting $M' = M^\rho$, we can simply define $\HSGCT_{n,T,d,M,\kappa,\rho}$ as the set of strings obtained by truncating every string from $\HSGCT_{n,T,d,M',\kappa,1}$ to their first $M$ bits. The reconstruction algorithm $\ReconCT_{n,T,d,M,\kappa,\rho}$ can then be obtained by slightly modifying $\ReconCT_{n,T,d,M',\kappa,1}$.
	
	\newcommand{\ellct}{\vec{\ell}_{\sf ct}}
	\newcommand{\elllayer}{\vec{\ell}_{\sf layer}}
	\newcommand{\ellpoly}{\vec{\ell}_{\sf poly}}
	
	Let \[
	\ellct = (n,T,d,M,\kappa,1)
	\]
	be the input parameters from the theorem statement and $c$ be a sufficiently large universal constant. From the assumption, we have $n \le T, d \le T$, and $c \cdot \log T \le M \le T^{1/c}$. Let
	\[
	C_{\TurM} = \Ckt[T,\kappa \cdot \log T,n,n](\TurM).
	\]
	
	\paragraph{The layered-polynomial representation.} Let $c_0,c_0',\beta$ be the universal constants from~\autoref{theo:poly-rep}. Let $h$ be the \emph{smallest} nice power of $2$ such that $h \ge M$, $p \eqdef h^{27}$, $m$ be the smallest power of $3$ such that $h^m \ge T^{\beta \kappa}$, and $\F = \F_p$. Note that $p$ is also a nice power of $2$ and $h \le M^3$. %
	
	We will invoke~\autoref{theo:poly-rep} with input parameters \[\ellpoly = (\kappa,T,d,n,h,p).\] Note that from their definitions and our assumption $M \ge c \log T$, we have $\log T\le h < p \le h^{27} \leq M^{81} \leq T$ (assuming $c \geq 81$ is large enough), meaning that the requirements on the input parameters of~\autoref{theo:poly-rep} are satisfied. 
	
	We first apply~\autoref{theo:poly-rep} with input parameters $\ellpoly$ and Turing machine $\TurM$ to obtain $d' = c_0 \kappa \cdot \log^2 T \cdot (d + \kappa^2 \log T)$ polynomials $(P_i)_{i \in [d']} = \mleft(P_i^{\vec{\ell},\TurM}\mright)_{i \in [d']}$. 
	
	\paragraph*{Hitting set $\HSGCT$.} Let $\HSGpoly$ and $\Reconpoly$ denote the $\HSG$ and $\Recon$ algorithms from~\autoref{theo:SU-main}, respectively.\footnote{The superscript $\textsf{layer}$ highlights the fact that they are applied to each layer of the polynomial representation of the circuit.} Let $\Delta = c_0 h \log^3(T)$, 
        \[\elllayer = (p,3m,M,\Delta)\]
    be the input parameters when applying~\autoref{theo:SU-main}. We can verify that $p > \Delta^2 (3m)^7M^9$, \ie, the requirement on the input parameters of~\autoref{theo:SU-main} is satisfied.
	
	We then define $\HSGCT_{\ellct}(\TurM)$ as the union of $\HSGpoly_{\elllayer}(P_i)$ for every $i \in [d']$. Next we analyze the complexity of computing $\HSGCT_{\ellct}(\TurM)$. First, from~\autoref{theo:poly-rep}, letting $ T_{\sf poly} = T^{c_0 \cdot \kappa}$ and $ d_{\sf poly} = c_0 \cdot (d \log T + \kappa^2 \log^2 T)$, there is a polynomial-time algorithm $\algopoly_{\vec{\ell}}$ that takes  $\TurM \in \bs{\kappa \log T}$ as input, and outputs a description of Turing machine $\TurM_{\sf poly} \in \bs{\log T_{\sf poly}}$ such that for 
	\[
	C_{\sf poly} = \Ckt\mleft[T_{\sf poly}, \log T_{\sf poly},\log d' , |\F|^{3m} \cdot \log |\F| \mright](\TurM_{\sf poly})
	\]
	it holds that (1) for every $i \in [d']$ $C_{\sf poly}(i) = \tt(P_i)$ and (2) $C_{\sf poly}$ has depth $d_{\sf poly}$.
	
	Second, from~\autoref{theo:SU-main}, there is a logspace-uniform circuit family with input parameters $\elllayer$, size $\poly(p^m)$, and depth $\poly(\log p,m,M)$ such that for every $i \in [d']$, it outputs $\HSGpoly_{\elllayer}(P_i)$ when taking $\tt(P_i)$ as input. Note that $\poly(p^m) \le T^{O(\beta\kappa)}$ and $\poly(\log p,m,M) \le \poly(M)$. Applying~\autoref{fact:comp-of-circuits} to compose the machines above and enumerating over all $i \in [d']$,\footnote{Enumerating all $i \in [d']$ only adds a $O(\log d')$ additive overhead in depth and a  $O(d')$ multiplicative blowup in size, which are negligible.} we obtain the desired circuit $C_{\sf H}$ (note that $c$ is sufficiently large).

	\paragraph*{Reconstruction $\ReconCT$.} 
	
	For every $i \in \{2, \dotsc, d'\}$, the reconstruction algorithm $\ReconCT$ attempts to construct a $\poly(p,m,\log (Md'))$-size $D$-oracle circuit $E_i$ that computes $P_i$. A formal description of $\ReconCT$ is as follows:
	
	\begin{itemize}
		\item We start with the circuit $E_1(\vec{x}) = \Base(\vec{\ell}, \TM, \vec{x})$ that computes the polynomial $P_1$.
		\item For every $i \in \{2,\dotsc,d'\}$:
		\begin{enumerate}
			\item We first construct a procedure $\WT{P}_i$ computing $P_i$ using the $D$-oracle circuit $E_{i-1}^D$ for $P_{i-1}$ and the downward self-reducibility for $P_i$. In particular, on input $\vec{x} \in \F^{3m}$, let 
			\[\widetilde{P}_i(\vec{x}) \eqdef \DSR^{E_{i-1}^D}(\vec{\ell}, \TurM, i, \vec{x}).\]
			\item Run $\left(\Reconpoly\right)^{D,\WT{P}_i}_{\elllayer}$ which outputs a $D$-oracle circuit $\WT{E}_i^D$ in $\poly(p,m,M)$ time.

			\item Let $t \eqdef c_1 \cdot m \cdot \log p$ for a sufficiently large constant $c_1 > 1$. Take $t$ i.i.d.~samples $\vec{x}_1,\dotsc,\vec{x}_t$ from $\F^{3m}$. Check that for every $j \in [t]$, $\WT{E}^{D}_i(\vec{x}_j) = \WT{P}_i(\vec{x}_j)$. If any condition does not hold, the algorithm outputs $\bot$ and aborts immediately.
			
			\item Let $E_i$ be a $D$-oracle circuit constructed as follows:
			\begin{enumerate}
				\item Draw $t = \Theta(m \log p)$ i.i.d.~samples of random strings $r_1,\dotsc,r_t$ used by $\mathsf{PCorr}$. (Recall that $\mathsf{PCorr}$ is the self-corrector for low-degree polynomials in \autoref{t:SC}.)
				\item Set $E_i(\vec{x}) = \mathsf{MAJ}_{k \in [t]}~\mathsf{PCorr}^{\WT{E}_i}(p,3m,\Delta,\vec{x}; r_k)$ for all $\vec{x} \in \F^{3m}$.
			\end{enumerate}
		\end{enumerate}
		
		\item For every $j \in [n]$, output $E_{d'}^D(\id(j),0^{2m})$.
	\end{itemize}
	
	For ease of notation, for every $i' \in \{2,\dotsc,d'\}$, we use $\tau_{i'}$ to denote the randomness used when running the algorithm above with $i = i'$, and we use $\tau_{\le i}$ to denote $\tau_{1},\dotsc,\tau_{i}$. Also, if $E_i$ is not constructed by the algorithm (meaning that the algorithm aborts before constructing $E_i$), we set $E_i = \bot$. 
	
	From~\autoref{t:SC},~\autoref{theo:SU-main}, and~\autoref{theo:poly-rep}, the running time of the algorithm above can be bounded by 
	\[
	\poly(p,m,h,\log (Md')) \cdot (d' + n) \le \poly(M) \cdot (d' + n) \le \poly(M) \cdot (d + n).
	\]
	The last inequality follows from the fact that $M \ge \log T$ and hence $d' =c_0 \kappa \cdot \log^2 T \cdot (d + \kappa^2 \log T) \le \poly(M) \cdot d$.
	Now we establish the soundness and completeness of the reconstruction. We show the following claim.
	\begin{claim}\label{claim:one-step}
		Fix $\bfunc{D}{M}{}$. For every $i \in \{2,\dotsc,d'\}$, for every fixed $\tau_{\le i-1}$, if $E_{i-1}^{D}$ computes $P_{i-1}$ or $i = 2$,\footnote{Note that $\tau_{\le i-1}$ determines $E_{i-1}$.} then with probability at least $1 - 1/p^m$ over $\tau_{i}$ the following holds:
		\begin{itemize}
			\item {\bf (Soundness.)} If $E_i \ne \bot$, then $E_i^{D}$ computes $P_i$.
			\item {\bf (Completeness.)} If $D$ $(1/M)$-avoids $\HSGpoly_{\elllayer}(P_i)$, then $E_i^{D}$ computes $P_i$.
		\end{itemize}
	\end{claim}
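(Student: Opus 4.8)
The plan is to chain the three correctness guarantees we have available --- downward self-reducibility from \autoref{theo:poly-rep}, the Shaltiel--Umans reconstruction from \autoref{theo:SU-main}, and the self-corrector $\mathsf{PCorr}$ from \autoref{t:SC} --- and then absorb all failure probabilities into $1/p^m$ by a union bound. Fix $\bfunc{D}{M}{}$, an index $i \in \{2,\dotsc,d'\}$, and $\tau_{\le i-1}$, and assume that $E_{i-1}^{D}$ computes $P_{i-1}$ (when $i = 2$ this holds automatically, since $E_1(\vec{x}) = \Base(\vec{\ell},\TurM,\vec{x}) = P_1(\vec{x})$ by \autoref{theo:poly-rep}). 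The key initial observation is that, under this assumption, the procedure $\widetilde{P}_i(\vec{x}) = \DSR^{E_{i-1}^D}(\vec{\ell},\TurM,i,\vec{x})$ constructed in the $i$-th iteration of $\ReconCT$ computes $P_i$ \emph{exactly} on every input, by the downward self-reducibility part of \autoref{theo:poly-rep}. Hence, in what follows, the $\widetilde{P}_i$-oracle passed to the subroutines may be treated as a perfect $P_i$-oracle (implemented as a $D$-oracle procedure). We will also use that $\Delta < p/3$, so that \autoref{t:SC} applies: since $\Delta = c_0 h \log^3 T$ with $h \ge M \ge c \log T$ and $p = h^{27}$, we have $\Delta \le c_0 h^4 < h^{27}/3 = p/3$ for $h$ large enough.

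\emph{Completeness.} Assume $D$ $(1/M)$-avoids $\HSGpoly_{\elllayer}(P_i)$. Applying \autoref{theo:SU-main} with parameters $\elllayer = (p,3m,M,\Delta)$ and the perfect oracle $\widetilde{P}_i \equiv P_i$, the call $(\Reconpoly)^{D,\widetilde{P}_i}_{\elllayer}$ outputs, except with probability at most $1/p^{3m}$ over $\tau_i$, a $D$-oracle circuit $\widetilde{E}_i^D$ that computes $P_i$ everywhere. Conditioned on this, every one of the $t = c_1 m \log p$ consistency checks in step~3 passes (since $\widetilde{E}_i^D(\vec{x}_j) = P_i(\vec{x}_j) = \widetilde{P}_i(\vec{x}_j)$), so $E_i \ne \bot$, and $\widetilde{E}_i^D$ is trivially $1/4$-close to the total-degree-$\Delta$ polynomial $P_i$. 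By \autoref{t:SC}, for each fixed $\vec{x} \in \F^{3m}$ the value $\mathsf{PCorr}^{\widetilde{E}_i}(p,3m,\Delta,\vec{x};r)$ equals $P_i(\vec{x})$ with probability at least $2/3$ over $r$; taking the majority over the $t = \Theta(m\log p)$ independent seeds $r_1,\dotsc,r_t$ of step~4, a Chernoff bound followed by a union bound over the $p^{3m}$ points of $\F^{3m}$ shows that $E_i^D$ computes $P_i$ everywhere except with probability at most $p^{3m}\cdot e^{-\Omega(t)} \le 1/(2p^m)$, provided $c_1$ is a large enough constant. Thus completeness fails with probability at most $1/p^{3m} + 1/(2p^m) \le 1/p^m$.

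\emph{Soundness.} Suppose $E_i \ne \bot$; equivalently, all $t$ checks in step~3 pass. Whatever circuit $\widetilde{E}_i^D$ the call $(\Reconpoly)^{D,\widetilde{P}_i}_{\elllayer}$ returns, consider two cases. If $\widetilde{E}_i^D$ disagrees with $P_i$ on more than a $1/4$ fraction of $\F^{3m}$, then --- since the $\vec{x}_j$ are drawn uniformly and independently after $\widetilde{E}_i$ is fixed --- the probability that all $t$ checks nonetheless pass is at most $(3/4)^t \le 1/(2p^m)$ for $c_1$ large enough; call this bad event $B_1$. Outside $B_1$, either some check fails (so $E_i = \bot$ and there is nothing to prove) or $\widetilde{E}_i^D$ is $1/4$-close to $P_i$, and then exactly the same self-correction-and-majority argument used in the completeness case shows, except on a bad event $B_2$ of probability at most $p^{3m} e^{-\Omega(t)} \le 1/(2p^m)$ over $r_1,\dotsc,r_t$, that $E_i^D$ computes $P_i$ on every input. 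Hence soundness fails with probability at most $\Pr[B_1] + \Pr[B_2] \le 1/p^m$.

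The only real work here is bookkeeping: tracking which subevents depend on which blocks of the randomness $\tau_i$ (the internal coins of $\Reconpoly$, the test points $\vec{x}_1,\dotsc,\vec{x}_t$ of step~3, and the self-correction seeds $r_1,\dotsc,r_t$ of step~4), noting that the ``dimension'' parameter in $\elllayer$ is $3m$ so that the Shaltiel--Umans failure probability is $1/p^{3m} \le 1/(2p^m)$, and choosing the constant $c_1$ in $t = c_1 m \log p$ large enough that both $(3/4)^t$ and $p^{3m}e^{-\Omega(t)}$ fall below $1/(2p^m)$. The conceptual content is entirely in invoking \autoref{theo:poly-rep}, \autoref{theo:SU-main}, and \autoref{t:SC} in sequence, with the downward-self-reducibility hypothesis ``$E_{i-1}^D$ computes $P_{i-1}$'' ensuring the $\widetilde{P}_i$-oracle is perfect.
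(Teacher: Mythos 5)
Your proof is correct and follows essentially the same route as the paper's: establish that $\widetilde{P}_i$ is a perfect $P_i$-oracle via downward self-reducibility, then invoke the Shaltiel--Umans reconstruction (\autoref{theo:SU-main}) for completeness and the statistical test plus $\mathsf{PCorr}$ self-correction for soundness, finally union-bounding the failure probabilities. The paper states the same chain of deductions more tersely; your added checks (verifying $\Delta < p/3$ for \autoref{t:SC}, tracking that the SU dimension parameter is $3m$, and the explicit Chernoff/union bounds over $\F^{3m}$) are accurate bookkeeping the paper leaves implicit.
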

	
	Before establishing the claim, we show it implies the completeness and soundness of the reconstruction. To see the soundness, note that by induction over all $i \in \{2,\dotsc,d'\}$, with probability at least $1 - d'/p^m > 9/10$, it holds that if $E_{d'} \ne \bot$, then $E_{d'}$ computes $P_{d'}$, meaning the reconstruction outputs the correct output $C_{\TurM}(1^n)$. To see the completeness, note that an oracle $\bfunc{D}{M}{}$ that $(1/M)$-avoids $\HSGCT_{\vec{\ell}}(\TurM)$ also $(1/M)$-avoids $\HSGpoly_{\elllayer}(P_i)$ for every $i \in [d']$. Hence, by induction over $i \in \{2,\dotsc,d'\}$, with probability at least $1 - d'/p^m > 9/10$, it holds that $E_i$ computes $P_i$ for every $i \in \{2,\dotsc,d'\}$. Thus the reconstruction will output $C_{\TurM}(1^n)$. The success probability $9/10$ can be amplified to $1-2^{-M}$ by running the reconstruction algorithm $O(M)$ times independently and outputting the answer that occurs most frequently.
	
	Finally, we prove the claim.
	\begin{proof}[Proof of~\autoref{claim:one-step}]\renewcommand{\qedsymbol}{$\diamond$}
		We first establish the soundness. From the assumption that $E_{i-1}^{D}$ computes $P_{i-1}$ or $i = 2$ and the downward self-reducibility property of~\autoref{theo:poly-rep}, it follows that $\WT{P}_i$ computes $P_i$. Therefore, $E_i \ne \bot$ means that $\WT{E}_i$ has passed the test in Step 3, meaning that with probability at least $1 - p^{-4m}$ over the randomness in Step 3, it holds that $\WT{E}_i$ agrees $P_i$ on at least $3/4$ fraction of inputs from $\F^{3m}$. This then means that with probability at least $1 - p^{-3m}$ over the randomness in Step 4(a), we have $E_i^D$ computes $P_i$.
		
		The completeness follows immediately from~\autoref{theo:SU-main}. (Here $\WT{E}_i^D$ already computes $P_i$ with probability at least $1 - 1/p^m$.)
	\end{proof}
	This completes the proof of~\autoref{thm:HSG-Chen-Tell-poly}.
\end{proof}

\ifnum\Anonymity=0
\section*{Acknowledgments}
Lijie Chen is supported by a Miller Research Fellowship. Zhenjian Lu is supported by an NSERC Postdoctoral Fellowship. Igor C.~Oliveira received support from the EPSRC New Horizons Grant EP/V048201/1, the Royal Society University Research Fellowship URF$\setminus$R1$\setminus$191059, and the Centre for Discrete Mathematics and its Applications (DIMAP) at the University of Warwick. Hanlin Ren received support from DIMACS through grant number CCF-1836666 from the National Science Foundation. Rahul Santhanam received support from the EPSRC New Horizons Grant EP/V048201/1. This work was done in part while the authors were visiting the Simons Institute for the Theory of Computing. 
\fi

{\small \bibliography{main}}

\newpage
\appendix

\newpage
\listoffixmes

\end{document}